\journal{Theoretical Computer Science}
\newcommand{\N}{\mathbb N}
\newcommand{\Z}{\mathbb Z}
\renewcommand{\O}{\mathscr{O}} 
\newcommand{\C}{\mathscr{C}} 
\newcommand{\B}{\mathbb{B}}
\newcommand{\Bn}{\mathbb{B}^n}
\newcommand{\1}{\mathtt{1}}
\newcommand{\ha}{\hat{a}}
\newcounter{finaln}
\newcommand{\integers}[2][2]{\ifthenelse{\equal{#1}{2}}
{\llbracket #2\rrbracket}
{\ifthenelse{\equal{#1}{0}}{
\ifthenelse{\equal{11}{\the\catcode`#2}}
{\{0,\ldots,#2 -1\}}
{\newcounter{finaln}
\setcounter{finaln}{#2 -1} 
\{0,\ldots,\thefinaln \}}}
{\{1,\ldots,#2\}}}}
\newtheorem{thm}{Theorem}
\newtheorem{corollary}{Corollary}
\newtheorem{lemma}[thm]{Lemma}
\newtheorem{conjecture}{Conjecture}
\newtheorem{definition}{Definition}
\newtheorem{remark}{Remark}
\newcommand{\ie}{i.e.{}}
\newcommand{\eg}{e.g.{}}
\begin{document}


\begin{frontmatter}


\author[label1,label2,label4]{Isabel Donoso-Leiva}
\author[label1]{Eric Goles}
\author[label1]{Martín Ríos-Wilson}
\author[label2,label3]{Sylvain Sené}

\affiliation[label1]{
    organization={Facultad de Ingeniería y Ciencias, Universidad Adolfo Ibáñez},
    city={Santiago},
    country={Chile}
}
\affiliation[label2]{
    organization={Aix Marseille Univ, CNRS, LIS},
    city={Marseille},
    country={France}
}
\affiliation[label3]{
    organization={Université publique},
    city={Marseille},
    country={France}
}
\affiliation[label4]{
    organization={Facultad de Ingeniería, Universidad Católica de Temuco},
    city={Temuco},
    country={Chile}
}

\title{On the Convergence of Elementary Cellular Automata under Sequential Update Modes}

\begin{abstract}
In this paper, we perform a theoretical analysis of the sequential convergence of elementary cellular automata that have at least one fixed point. Our aim is to establish which elementary rules always reached fixed points under some sequential update modes, regardless of the initial configuration. In this context, we classify these rules according to whether all initial configurations converge under all, some, one or none sequential update modes, depending on if they have fixed points under synchronous (or parallel) update modes. 
\end{abstract}

%


\begin{keyword}
Elementary cellular automata\sep Boolean automata networks\sep Asynchronism\sep Sequential update modes\sep Convergence
\end{keyword}

\end{frontmatter}

\section{Introduction}
\label{sec:intro}
Cellular automata are simple systems that can hold great complexity~\cite{Wolfram1984}. They are composed of cells ordered in a grid that interact with one another following a local rule over discrete time. They were first introduced by Ulam and von Neuman~\cite{vonNeumann1966}, which were inspired in the work on neural networks done by McCulloch and Pitts~\cite{McCulloch1943}, {with Kleene~\cite{Kleene1956} formalizing the definition of finite automata in 1956.}
Cellular Automata have since proved to be useful in multiple areas of science, such as biology and health sciences~\cite{White2007, Sirakoulis2000}, social models~\cite{Sakoda1971,Schelling1971}, parallel and distributed computation~\cite{Smith1972,Mazoyer1987}, and physics~\cite{Vichniac1984,Boghosian1999,Salo2017}.

In the original definition of the cellular automata model, all cells are updated simultaneously. However, issues with synchronicity have been raised from both mathematical modeling and computational perspectives~\cite{fates2013guided}. Therefore, \textit{asynchronous update modes} have been proposed such as: fully asynchronous~\cite{fates2006fully}, $\alpha$-asynchronous~\cite{Fates2005}, random sweep~\cite{SCHONFISCH1999,kitagawa1974cell} as well as others which are probabilistic in nature. 

In~\cite{Robert1986, goles2021complexity, goles2016pspace, rios2024c} \textit{deterministic update modes} are studied which have the advantage of breaking the dependency on parallelism without introducing randomness.  These update modes have opened interesting areas of study, such as:
the sensitivity to synchronism in elementary cellular automata with respect to the number of possible resulting dynamics in~\cite{perrot2020,balbi2022} 
and with respect to the longest limit cycle in~\cite{Donoso2025}, the complexity of the reachability problem is studied in~\cite{barrett2006}, while in the context of Boolean networks (of which elementary cellular automata are particular cases), their convergence under non-deterministic and deterministic update modes are studied in~\cite{melliti2013}, the robustness of their dynamical properties under different deterministic update modes are established in~\cite{Aracena2009}, different kinds of non-deterministic and deterministic Boolean networks are compared with regards to their attractors in~\cite{gershenson2004} and in~\cite{Perrotin2023} limit cycles of automata networks are studied.

The uses of these update modes are varied, for example: 
in~\cite{MARGOLUS1984} block-sequential cellular automata are proposed as representations of physical concepts, 
in~\cite{Pauleve2022} asynchronous deterministic Boolean networks are used to portray biological systems,
in~\cite{perrotin2024,ruivo2024} sequential cellular automata are proposed as a solution to the density classification task and in~\cite{cornforth2005} automata networks under ordered asynchronous update modes are suggested to depict different artificial and biological systems. {In~\cite{das2013} necessary and sufficient conditions are established for an ECA to converge under asynchronous updating, with the complete proof being provided in~\cite{Sethi2016}. In~\cite{Sethi2018size} additional conditions over the size of the ring are studied to ensure the convergence of other ECAs under asynchronous updating. In~\cite{Sethi2014,Fates2018,Roy2024} reversibility of ECAs under stochastic updating is studied.}

We have chosen to utilize deterministic update modes, among them, we have focused on sequential update modes, which consist on individual cells being updated one after another according to a previously established order. {We emphasize that sequential update modes are a particular kind of asynchronous update modes, and as such they can have distinct properties. In addition to being deterministic, they allow exactly one node to update its state per iteration, and any given node cannot be updated more than once before all the nodes have been updated, according to a specific order given by a permutation of the set of nodes. The use of sequential update modes allows us to exploit combinatorial tools~\cite{Sawada2003}, since the dynamics are defined over a ring, several sequential update orders lead to equivalent dynamics, which enables us to disregard redundant update modes. This, in turn, allows us to use methods from discrete mathematics~\cite{Aracena2011} and to focus our analysis on non-equivalent update modes only.}

\subsection{Our Contribution}
In~\cite{Robert1986} it was proven that elementary cellular automata under block-sequential update modes (which include the sequential ones) have exactly the same fixed points as they do under parallel update mode. Therefore, we study the convergence of elementary cellular automata that have fixed points in parallel, knowing that ``new'' fixed points will not appear under sequential update modes.

Thus, in this paper we classify elementary rules according to {whether} there exists a sequential update mode that leads all configurations to a fixed point, from where we introduce the concepts of \textit{universal} and \textit{covering of} update modes.
For a given rule, a covering is a set of update modes such that for any initial configuration, there exists an element in the set, that if the rule is applied following the order established by it, the configuration reaches a fixed point. If a set containing only one update mode is a covering, then that update mode is universal.

We start with the rules characterized in~\cite{das2013} as always reaching fixed points under asynchronous updating, proving that sequential update modes in particular are enough to replicate the result for all but two of these rules.

For the 50 {non-equivalent} rules that fulfill the conditions of established in~\cite{das2013}, we provide detailed proofs for rules numbered 2, 26, and 104 according to Wolfram's code~\cite{Wolfram1983}, because the other rules follow similar reasoning as the ones we have chosen to showcase. 
Notably, we found that for Rule $106$, which is a complex rule according to Wolfram's classification~\cite{Wolfram1984}, there is a sequential update mode that leads all configurations to a fixed point, but only when the size of the ring is even. Here, we introduce the notion of $X$-universal, meaning that a given rule under that update mode reaches a fixed point from all configurations belonging to a subset $X$. 
A summary of this part of the classification is provided in Table~\ref{tab:inside-theorem}.

We continue our study with the 38 non-equivalent elementary cellular automata that fall outside the purview of said paper and show that five of them can always reach fixed points, when said fixed points exist. This means that the result is weaker than the one obtained for the previous rules, since fixed points do not necessarily exist for all ring sizes, but when they do exist, there are sequential update modes that lead to them. {Some of the rules with which we work on this section have been studied in~\cite{Sethi2018size} under asynchronous updating. Here, we show that sequential update modes are enough to ensure the convergence of some of them, exhibiting a specific update mode that ensures the convergence of all configurations (when the size of the ring is even) for Rules $7$ and $15$. On the other hand, we show that not all configurations can converge under sequential update modes for Rules $6$, $14$ and $22$.}
In addition, we prove that there are rules{, a few of which have been studied for the asynchronous case in~\cite{Fates2018},} which have fixed points that cannot be reached under any sequential update mode from certain configurations. The outline of the rules of this section is provided in Table~\ref{tab:outside-theorem}. 

\subsection{Structure of the Paper}
In Section~\ref{sec:def-and-notations}, the main definitions and notations are formalized. Section~\ref{sec:results} contains the demonstrations that justify the classification that is being proposed. In the first part, we will prove that most of the 50 rules given by the theorem in~\cite{das2013} can always reach fixed points under sequential update modes; the second part focuses on elementary cellular automata outside the theorem, where additional conditions are needed to ensure the convergence to fixed points. The paper ends with Section~\ref{sec:discussion}, in which we discuss the results as well as some future work.

\section{Preliminaries}
\label{sec:def-and-notations}
\subsection{Elementary Cellular Automata and Sequential Update Modes}

{Let $\integers{n} = \integers[0]{n}$,
let $\B = \{0,1\}$, and let $x_i$ denote the $i$-th component of vector $x \in \B^n$.
Given a vector $x \in \B^n$, we can denote it classically as 
$(x_0, \dots, x_{n-1})$ or as the word $x_0 \dots x_{n-1}$ if it eases the reading.}

In broad terms, a \emph{cellular automaton} (CA) is a collection of cells, each of which will have a state within the alphabet $Q$. The cells interact with each other over discrete time according to a rule which is defined by the state of the cells in their respective neighborhood. 
Formally, a CA is a tuple $(\Z^d,Q,N,F)$ composed of
\begin{itemize}
\item $\Z^d$ the $d-$dimensional cellular space, also known as \emph{grid}, which can be finite or infinite,
\item $Q$ the finite set of states that a cell can have, called the \emph{alphabet},
\item $N$ the neighborhood, which associates a cell of the grid with its neighbors,
\item $F$ the rule, which is defined by local functions $f_i^{N_i}\to Q$, where $f_i$ is the $i-$th component of $F$ and $N_i$ is the neighborhood of cell $i$.
\end{itemize}

An \emph{elementary cellular automaton} (ECA) is a kind of CA such that the grid is one-dimensional, the alphabet is $\B$, the local function $f_i$ is the same for the entire grid and it depends on the state of the cells $i-1$, $i$ itself, and $i+1$. From where it follows that there are $2^{2^{3}}=256$ distinct ECAs, of which there are $88$ that are not equivalent through symmetry~\cite{li1990structure}. {Since in this case the local function is the same for the entire grid, it is denoted simply by $f$, or $f_w$ if clarification is needed, where $w$ is the corresponding Wolfram code~\cite{Wolfram1983}.}

As previously stated, the cellular space, $Z^d$, can be either infinite or finite. In this paper we have chosen to work with finite ECAs, which effectively means that the grid is viewed as a {one-dimensional} torus, that is, a ring of integers modulo $n$: $\Z/n\Z$. As such, the neighborhood of cell $0$ is $\{n-1,0,1\}$ and that of cell $n-1$ is $\{n-2,n-1,0\}$. Thus, a configuration $x$ is an element of $\B^n$, is formally defined by a Boolean vector of dimension $n$ and can equivalently be represented by a binary word in what follows for the sake of clarity and concision, according to the context (\eg~$x=(0,1,1,0,1)\equiv01101$).\smallskip

We can think of an update mode as a function that associates each iteration to the set of cells that will be updated in it, that is  $\mu^{\star}:\N\to\mathcal{P}(\integers{n})$, {with $\mathcal{P}(\integers{n})$ the power set of $\integers{n}$}. In this paper we will work with \textit{periodic update modes}, which means that there exists $p\in\N$ such that for all $t\in\N$, $\mu^{\star}(t+p)=\mu^{\star}(t)$. In other words an update mode of period $p$ can be viewed as a finite sequence of $p$ subsets of cells. Each iteration in which a subset of cells is updated is called a \textit{substep}, while a \textit{step} is the composition of the  substeps having occurred over a period $p$. 

\begin{definition}[Sequential update modes]
\emph{Sequential update modes} are a family of periodic update modes which are defined as $\mu = (\phi(\integers{n}))$, where $\phi(\integers{n}) = \{i_0\}, \dots, \{i_{n-1}\}$ is a permutation of $\integers{n}$ {that establishes the order of priority with which the cells are updating, such that} one and only one cell update its state at each substep so that all cells have updated their state after $n$ substeps (\ie, after each step) depending on the order induced by $\phi$. Naturally, for these update modes the period is always equal to the number of cells.
\end{definition}

Note that while the formal notation for a sequential update mode is $\mu=(\{i_0\},\{i_1\},\dots,\{i_{n-1}\})$, namely a sequence of $n$ subsets of cardinality $1$, we make an abuse of notation in what follows by using just a sequence of elements: $\mu=(i_0,i_1,\dots,i_{n-1})$ to facilitate reading.

An \emph{asynchronous update mode} is an update mode in which not all cells are updated simultaneously. Naturally, sequential update modes fall under this definition. However, asynchronous update modes can also be non-deterministic, for instance if {the order of priority with which we update} the cells is selected at random.

\begin{figure}[t]
	\centerline{
	\begin{minipage}{.3\textwidth}
	\centerline{
	\scalebox{1}{\begin{tikzpicture}[help lines/.style={draw=gray!50},
		every node/.style={help lines,rectangle,minimum size=3mm},
		ca/.style={draw=none,row sep=0mm,column sep=0mm, ampersand 
		replacement=\&},
		st0/.style={fill=white,help lines},
		stg/.style={fill=gray,help lines},
		st1/.style={fill=black,help lines}]

		\matrix[ca] {
		\node[st0] {};\& \node[st1] {};\& \node[st0] {};\& \node[st0] {};\\
		\node[stg] {};\& \node[stg] {};\& \node[st0] {};\& \node[st0] {};\\
        \node[stg] {};\& \node[stg] {};\& \node[st0] {};\& \node[st0] {};\\
		\node[stg] {};\& \node[stg] {};\& \node[stg] {};\& \node[st0] {};\\
		\node[st1] {};\& \node[st1] {};\& \node[st1] {};\& \node[st0] {};\\
		\node[stg] {};\& \node[stg] {};\& \node[stg] {};\& \node[st0] {};\\
        \node[stg] {};\& \node[stg] {};\& \node[st0] {};\& \node[st0] {};\\
		\node[stg] {};\& \node[st0] {};\& \node[stg] {};\& \node[st0] {};\\
		\node[st1] {};\& \node[st0] {};\& \node[st0] {};\& \node[st0] {};\\
		};
	\end{tikzpicture}}}
	\end{minipage}
	\quad
	\begin{minipage}{.33\textwidth}
	\centerline{
	\scalebox{1}{\begin{tikzpicture}[help lines/.style={draw=gray!50},
		every node/.style={help lines,rectangle,minimum size=3mm},
		ca/.style={draw=none,row sep=0mm,column sep=0mm, ampersand 
		replacement=\&},
		st0/.style={fill=white,help lines},
		stg/.style={fill=gray,help lines},
		st1/.style={fill=black,help lines}]
        
		\matrix[ca] {
		\node[st0] {};\& \node[st1] {};\& \node[st0] {};\& \node[st0] {};\\
		\node[st0] {};\& \node[stg] {};\& \node[stg] {};\& \node[st0] {};\\
        \node[stg] {};\& \node[stg] {};\& \node[stg] {};\& \node[st0] {};\\
		\node[stg] {};\& \node[stg] {};\& \node[stg] {};\& \node[st0] {};\\
		\node[st1] {};\& \node[st0] {};\& \node[st1] {};\& \node[st0] {};\\
		\node[stg] {};\& \node[st0] {};\& \node[st0] {};\& \node[st0] {};\\
        \node[st0] {};\& \node[st0] {};\& \node[st0] {};\& \node[st0] {};\\
		\node[st0] {};\& \node[st0] {};\& \node[st0] {};\& \node[st0] {};\\
		\node[st0] {};\& \node[st0] {};\& \node[st0] {};\& \node[st0] {};\\
		};
	\end{tikzpicture}}}
	\end{minipage}
	\quad
	\begin{minipage}{.3\textwidth}
	\centerline{
	\scalebox{1}{\begin{tikzpicture}[help lines/.style={draw=gray!50},
		every node/.style={help lines,rectangle,minimum size=3mm},
		ca/.style={draw=none,row sep=0mm,column sep=0mm, ampersand 
		replacement=\&},
		st0/.style={fill=white,help lines},
		stg/.style={fill=gray,help lines},
		st1/.style={fill=black,help lines}]
		\matrix[ca] {
		\node[st0] {};\& \node[st1] {};\& \node[st0] {};\& \node[st0] {};\\
		\node[st0] {};\& \node[stg] {};\& \node[st0] {};\& \node[st0] {};\\
        \node[st0] {};\& \node[stg] {};\& \node[stg] {};\& \node[st0] {};\\
		\node[st0] {};\& \node[stg] {};\& \node[stg] {};\& \node[st0] {};\\
		\node[st1] {};\& \node[st1] {};\& \node[st1] {};\& \node[st0] {};\\
		\node[stg] {};\& \node[stg] {};\& \node[stg] {};\& \node[st0] {};\\
        \node[stg] {};\& \node[stg] {};\& \node[stg] {};\& \node[st0] {};\\
		\node[stg] {};\& \node[st0] {};\& \node[stg] {};\& \node[st0] {};\\
		\node[st0] {};\& \node[st0] {};\& \node[st1] {};\& \node[st0] {};\\
		};
	\end{tikzpicture}}}
	\end{minipage}}
	
	\centerline{
	\begin{minipage}{.33\textwidth}
	\centerline{(a)}
	\end{minipage}
	\quad
	\begin{minipage}{.33\textwidth}
	\centerline{(b)}
	\end{minipage}
	\quad
	\begin{minipage}{.3\textwidth}
	\centerline{(c)}
	\end{minipage}
	}
	\caption{Space-time diagrams (time going downward) of configuration 
		$0100$ following Rule $45$ under: 
		(a) the update mode $\mu_{1} = (0,1,2,3)$, 
		(b) the update mode $\mu_{2} = (2,0,3,1)$, 
		(c) the update mode $\mu_{3} = (3,2,1,0)$. Configurations colored in gray (resp. black) represent the ones obtained at substeps (resp. steps).}
	\label{fig:space-time}
\end{figure}

\subsection{Dynamical Systems}

An ECA $f$ along with an update mode $\mu$ defines a discrete dynamical system on $\Bn\rightarrow\Bn$. More formally in our framework, let $f$ be an ECA which is applied over a grid of size $n$ and let $\mu$ be a periodic update mode represented as a periodical sequence of subsets of $\integers{n}$ such that
$\mu = (B_0, ..., B_{p-1})$. Function $f_{\mu} = (f, \mu)$, defined $f_{\mu}:\Bn\to\Bn$ defines the \textit{discrete dynamical system} related to ECA $f$ under the update mode $\mu$.
To avoid confusion when it is useful, since ECA $f$ is associated with its Wolfram code, we add this code to the notation as follows: if ECA $f_{\mu}$ is the dynamical system of the rule having Wolfram's code $w$ under $\mu$, we will denote it as $f_{\mu,w}$. {Therefore, for $x\in\Bn$, $f_{\mu,w}(x)=y$ denotes the configuration obtained after applying rule $w$ to all the cells of $x$ according to the order established by $\mu$.}


Let $x\in\Bn$ be a configuration of $f_{\mu}$. The \emph{orbit} of $x$ is the sequence of configurations which results from applying successively $f_{\mu}$ to $x$, {defining} the set $\O(x)$. 
Since $f$ is defined over a grid of finite size and the boundary condition is periodic, the temporal evolution of $x$ governed by the successive applications of $f_{\mu}$ must lead it to eventually enter into a \emph{limit phase}, \ie, a cyclic sequence $\C(x)$ such that for all $y = f_{\mu}^k(x) \in \C(x),$ there  exists $t \in \N, f_{\mu}^t(y) = y$, with $k \in \N$.
The orbit is then separated into two phases: The limit phase and the \emph{transient phase} which corresponds to the finite sequence $(x^0=x,x^1=f_\mu(x),\dots,x^{\ell}=f^{\ell}_\mu(x))$ of length $\ell$ such that for all $i \in \integers{\ell + 1},$ there does not exist $t \in \N, x^{i + t} = x^{i}$. 
The \emph{limit set} of $x$ is the set of configurations belonging to $\C(x)$. In other words, the limit set of $x$ is the asymptotic orbit with respect to $t$ of $x$. If the cardinality of the limit set is equal to 1, we call it a \textit{fixed point}. {An \emph{isolated fixed point} is a fixed point that cannot be reached from any other configuration.} 

In the context of ECAs, it is convenient to represent orbits by \emph{space-time diagrams} which give a visual aspect of the latter, as illustrated in Figure~\ref{fig:space-time}. 

For some of the proofs, we need to use the following specific notations.

Let $x \in \B^n$ be a configuration and $[i,j] \subseteq \integers{n}$ be a subset of cells. 
We denote by $x_{[i,j]}$ the projection of $x$ on $[i,j]$. 
Since we work on ECAs, such a projection defines a sub-configuration and can be 
of three kinds: 
\begin{itemize}
    \item $i < j$ and $x_{[i,j]} = (x_i, x_{i+1}, \dots, x_{j-1}, x_{j})$,
    \item or $i = j$ and $x_{[i,j]} = (x_i)$,
    \item or $i > j$ and $x_{[i,j]} = (x_i, x_{i+1}, \dots, x_n, 
x_0, \dots, x_{j-1}, x_j$).
\end{itemize}

Thus, given $x \in \B^n$ and $i \in \integers{n}$, an ECA with local rule $f$ can be rewritten as 
$$F(x) = (f(x_{[n-1,1]}), f(x_{[0,2]}), \dots, f(x_{[i,i+2]}), \dots, f(x_{[n-3,n-1]}),  
f(x_{[n-2,0]})\text{.}$$

Now, we introduce the notation {$x^{i.j}$}: let $x$ be the initial configuration, and let $i,j\in\N$, $x^{i.j}$ is the configuration obtained starting from $x$ after applying a rule $f$ for $i$ steps and $j$ substeps. For example, following this notation, the initial configuration $x$ is $x=x^0=x^{0.0}$, the configuration after the first substep is $x^{0.1}$, and the configuration after two steps and three substeps is $x^{2.3}$.

\begin{figure}[t]
	\centerline{
	\begin{minipage}{.45\textwidth}
	\centerline{
	\scalebox{1}{\begin{tikzpicture}[help lines/.style={draw=gray!50},
		every node/.style={help lines,rectangle,minimum size=3mm},
		ca/.style={draw=none,row sep=0mm,column sep=0mm, ampersand 
		replacement=\&},
		st0/.style={fill=white,help lines},
		stg/.style={fill=gray,help lines},
		st1/.style={fill=black,help lines}]

		\matrix[ca] {
\node[st1]{};\& \node[st0]{};\& \node[st0]{};\& \node[st1]{};\& \node[st0]{};\& \node[st0]{};\& \node[st0]{};\& \node[st0]{};\\
\node[st1]{};\& \node[st1]{};\& \node[st0]{};\& \node[st1]{};\& \node[st1]{};\& \node[st0]{};\& \node[st0]{};\& \node[st0]{};\\
\node[st1]{};\& \node[st0]{};\& \node[st0]{};\& \node[st1]{};\& \node[st0]{};\& \node[st1]{};\& \node[st0]{};\& \node[st0]{};\\
\node[st1]{};\& \node[st1]{};\& \node[st0]{};\& \node[st1]{};\& \node[st0]{};\& \node[st1]{};\& \node[st1]{};\& \node[st0]{};\\
\node[st1]{};\& \node[st0]{};\& \node[st0]{};\& \node[st1]{};\& \node[st0]{};\& \node[st1]{};\& \node[st0]{};\& \node[st0]{};\\
\node[st1]{};\& \node[st1]{};\& \node[st0]{};\& \node[st1]{};\& \node[st0]{};\& \node[st1]{};\& \node[st1]{};\& \node[st0]{};\\
		};
	\end{tikzpicture}}}
	\end{minipage}
	\quad
	\begin{minipage}{.45\textwidth}
	\centerline{
	\scalebox{1}{\begin{tikzpicture}[help lines/.style={draw=gray!50},
		every node/.style={help lines,rectangle,minimum size=3mm},
		ca/.style={draw=none,row sep=0mm,column sep=0mm, ampersand 
		replacement=\&},
		st0/.style={fill=white,help lines},
		stg/.style={fill=gray,help lines},
		st1/.style={fill=black,help lines}]
        
		\matrix[ca] {
\node[st1]{};\& \node[st0]{};\& \node[st0]{};\& \node[st1]{};\& \node[st0]{};\& \node[st0]{};\& \node[st0]{};\& \node[st0]{};\\
\node[st1]{};\& \node[st1]{};\& \node[st0]{};\& \node[st1]{};\& \node[st1]{};\& \node[st0]{};\& \node[st1]{};\& \node[st0]{};\\
\node[st1]{};\& \node[st0]{};\& \node[st0]{};\& \node[st1]{};\& \node[st0]{};\& \node[st0]{};\& \node[st1]{};\& \node[st0]{};\\
\node[st1]{};\& \node[st1]{};\& \node[st0]{};\& \node[st1]{};\& \node[st1]{};\& \node[st0]{};\& \node[st1]{};\& \node[st0]{};\\
\node[st1]{};\& \node[st0]{};\& \node[st0]{};\& \node[st1]{};\& \node[st0]{};\& \node[st0]{};\& \node[st1]{};\& \node[st0]{};\\
\node[st1]{};\& \node[st1]{};\& \node[st0]{};\& \node[st1]{};\& \node[st1]{};\& \node[st0]{};\& \node[st1]{};\& \node[st0]{};\\
		};
	\end{tikzpicture}}}
	\end{minipage}}
	
	\centerline{
	\begin{minipage}{.45\textwidth}
	\centerline{(a)}
	\end{minipage}
	\quad
	\begin{minipage}{.45\textwidth}
	\centerline{(b)}
	\end{minipage}
	}
	\caption{Space-time diagrams (time going downward) of configuration $10010000$ following Rule $28$ (a) and Rule $29$ (b) under the update mode $\mu = (7,6,5,4,3,2,1,0)$. We are only showing the resulting configurations after full steps have been completed. In this example, $01$ is a wall for both Rule $28$ and Rule $29$. The example starts with two isles of $0$s separated by one $1$. A new isle is created at time step $2$ for case (a) and at time step $1$ for case (b).}
	\label{fig:example_wall_and_isle}
\end{figure}

\begin{definition}[Wall and isle]
Abusing notations, the word $u \in \mathbb{B}^k$, with $k\leq n$, is called a \emph{wall} for a given dynamical system if for all $a, b \in \B$, $f(aub) = u$. In this work we assume that walls are of size $2$, i.e. $k = 2$, unless otherwise stated. Additionally, a configuration $x$ is said to have an \emph{isle} of $1$s (resp. an isle of $0$s) if 
there exists an interval $I = [a,b] \subsetneq \integers{n}$ such that $x_i = 1$ 
(resp. $x_i = 0$) for all $i \in I$ and $x_i = 0$ (resp. $x_i = 1$) for {$i\in\{a-1,b+1\}$. Figure~\ref{fig:example_wall_and_isle} has visual representations of both concepts}.
\end{definition}

\begin{definition}[Universal update mode]\label{def:universal}
If for a given $f$ there exists an update mode $\mu$ that allows all $x\in\Bn$ to reach a fixed point, then $\mu$ is \emph{universal} for $f$.
\end{definition}

\begin{definition}[Covering]\label{def:covering}
A set $S=\{\mu_1,\dots,\mu_k\}$ is a \emph{covering} for $f$ if for any $x\in\Bn$ there exists $\mu\in S$ such that there is a time $t$ in which $f_{\mu}^t(x)$ is a fixed point. 
\end{definition}

\begin{definition}[$X$-Universal update mode]\label{def:almost_universal}
If for a given $f$ there exists an update mode $\mu$ that allows all $x\in X\subsetneq\Bn$ to reach a fixed point, then $\mu$ is \emph{$X$-universal} for $f$.
\end{definition}

\begin{definition}[$X$-Covering]\label{def:partial_covering}
A set $S=\{\mu_1,\dots,\mu_k\}$ is an \emph{$X$-covering} for $f$ if for any $x\in X$ with $X\subsetneq\Bn$ there exists $\mu\in S$ such that there is a time $t$ in which $f_{\mu}^t(x)$ is a fixed point.
\end{definition}

{Note that this means that if there exists a covering (resp. $X$-covering) for $f$ whose cardinality is equal to $1$, then the update mode in such a covering is universal (resp. $X$-universal).}

The following theorem states conditions on an ECA that guarantee its convergence to some fixed point under asynchronous updating, where a neighborhood of an asynchronous cellular automaton is considered \textit{active} if a rule $f$ flips the state of the cell (1 to 0 or 0 to 1). Otherwise, the neighborhood is \textit{passive}. 

\begin{thm}[\cite{das2013,Sethi2016}]\label{thm:DAS}
    Let $f$ be an ECA. Then there exists (at least) one asynchronous update mode $\mu$ for $f$ such that $f_{\mu}$ converges to fixed points if any of the following condition is verified:
    \begin{enumerate}
        \item[(i)] $000$ (resp. $111$) is passive and $010$ (resp. $101$) is active.
        \item[(ii)] $000$, $001$, $010$ and $100$ (resp. $011$, $101$, $110$ and $111$) are passive and $011$ or $110$ (resp. $001$ or $100$) are active.
        \item[(iii)] Either $001$, $010$, $100$ and $101$, or $010$, $011$, $101$ and $110$ are passive.
    \end{enumerate}
\end{thm}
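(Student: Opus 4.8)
The plan is to establish the three sufficient conditions one at a time, after two preliminary reductions. First, the two alternatives displayed in each of (i), (ii) and (iii) are interchanged by the conjugation $x\mapsto\overline{x}$, which turns $f$ into its complementary rule and preserves the property of reaching a fixed point; hence it suffices to treat the first alternative of each item. Second, ``$f_\mu$ converges to fixed points under some asynchronous update mode'' amounts, for one-cell-at-a-time dynamics, to the statement that from every $x\in\Bn$ there is a \emph{finite} sequence of single-cell updates ending at a fixed point (a configuration admitting no such sequence would be recurrent and non-fixed). So for each $x$ I would exhibit such a sequence. The recurring device is a frozen cell, or a wall: as soon as the configuration contains a cell whose value can no longer change, the ring $\Z/n\Z$ is effectively cut into a line, on which a monotone left-to-right sweep can be driven to a fixed pattern.

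For condition (ii), the passivity of $000$, $001$ and $100$ says exactly that $00$ is a wall, and $010$ is passive as well. If $x$ already contains the factor $00$, I would run the sweep on the resulting line: since one of $011$, $110$ is active, the sweep erodes every block of two or more consecutive $1$s (updating its left end when $f(011)=0$, its right end when $f(110)=0$), strictly decreasing the number of $1$s and touching no $0$, until no two $1$s are adjacent; a short extra step (if $101$ is active, update the $0$ between two close $1$s to merge them into a longer block and erode again, still decreasing the number of $1$s) reaches a configuration that, by the passivity of $000$, $001$, $010$, $100$, is a fixed point. If $x$ contains no factor $00$, then every $0$ is isolated and there is a factor $11$; eroding any maximal $1$-block as above again strictly decreases the number of $1$s and, since $1$s cannot invade a frozen $0$-run, eventually produces a factor $00$, after which the previous argument applies. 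The number of $1$s is the monovariant guaranteeing termination.

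Condition (iii) rests on the dual observation that, from the passivity of $001$, $100$ and $101$, one gets $f(x_{i-1}0x_{i+1})=0$ whenever $x_{i-1}=1$ or $x_{i+1}=1$: a $0$-cell can flip only when \emph{both} its neighbours are $0$, i.e.\ in pattern $000$. If $000$ is passive, every $0$ is frozen and acts as a wall; I would then erode each maximal $1$-block from whichever side the corresponding transition among $011$, $110$, $111$ is active (if none of the three is active then, $010$ being passive too, $f$ is the identity and every configuration is fixed), ending on frozen $1$-remnants separated by frozen $0$s, a fixed point. If $000$ is active, the only remaining instability is ``isolated $0\mapsto 1$'', and since filling an isolated $0$ creates the pattern $010$ and hence no new adjacent $1$s, sweeping through the isolated $0$s (after the erosion phase) strictly decreases the number of $0$s and terminates at a fixed point.

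Condition (i) is the main obstacle, since it pins down only $f(000)=0$ and $f(010)=0$ --- equivalently $f(0\,\cdot\,0)=0$, a cell with both neighbours $0$ becomes $0$ --- leaving the six remaining transitions free. After disposing of the configuration all-$0$ (already fixed) and all-$1$ (either fixed, or $111$ is active and one update creates a $0$), the plan is to manufacture a run of at least three $0$s: its central cell is then frozen and cuts the ring into a line, and the sweep of conditions (ii) and (iii) applies there. Producing such a run is the delicate part: starting from a maximal $0$-run, one pushes its ends outward by killing bordering isolated $1$s (free, since $f(010)=0$) or eroding bordering $1$-blocks via whichever of $011$, $110$, $111$ is active; when none of those is active those blocks are frozen, but one shows that enlarging such a block by one cell through a free transition at $001$, $100$ or $101$ and then puncturing it at the freshly created $111$ splits it into isolated $1$s that die --- and if that too is impossible, all the relevant transitions are passive and the current configuration is already a fixed point. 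The whole difficulty of (i) is the bookkeeping needed to chain these phases, each with its own monovariant (the number of $1$s while eroding, the number of $0$s while filling, the length of the longest frozen $0$-run while growing), so that the process never stalls outside a fixed point; the complete verification is carried out in \cite{das2013}.
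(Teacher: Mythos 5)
Your treatments of conditions (ii) and (iii) are essentially sound and follow the same strategy as the paper's proof: identify the patterns frozen by the passivity hypotheses (the wall $00$ in (ii), the fact that a $0$ can only flip inside $000$ in (iii)), erode the blocks of $1$s from whichever side among $011$, $110$, $111$ is active, and use the number of $1$s (resp.\ of $0$s) as a monovariant. For (ii) you are in fact more careful than the paper on one point: after reducing to isolated $1$s a pattern $101$ may still be active, and your merge-and-erode step (a net loss of one $1$) is needed there, whereas the paper's proof declares $x'=0^{a_1'}1\dots0^{a_{k'}'}1$ a fixed point without excluding $a_i'=1$. (In (iii), note that what $000$ being active destabilizes is a $0$ with both neighbours $0$, i.e.\ the interior of a $0$-run, not an ``isolated $0$'', which sits in the passive pattern $101$; your argument is recoverable but the terminology is wrong.)

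The genuine gap is in condition (i). Your plan is to manufacture a run $000$, declare its central cell frozen so that it cuts the ring into a line, and then ``apply the sweep of conditions (ii) and (iii)'' there. This fails for two reasons. First, under (i) only $000$ and $010$ are constrained to be passive; $001$ and $100$ may be active, so a run of $0$s is not a wall: its border cells can be invaded by adjacent $1$s, after which the central cell sees $001$ or $100$ and may itself become active. It is ``frozen'' only conditionally on its neighbours staying $0$, and since a fixed point must have every cell passive you cannot simply refuse forever to update the invading cells. Second, and more importantly, the sweeps of (ii) and (iii) rest precisely on the passivity of $001$, $100$ (and $101$ for (iii)), none of which is available under (i), so there is no argument to borrow even on the cut-open line. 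What is actually required --- and what the paper's proof does --- is a direct case analysis over the six unconstrained transitions $001$, $011$, $100$, $101$, $110$, $111$, exhibiting for each active/passive combination a sequence of updates driving an arbitrary configuration to a fixed point, which in several branches is not $0^n$ but a configuration such as $0^{a_1'}1^{b_1'}\cdots 0^{a_{k'}'}1^{b_{k'}'}$ all of whose surviving patterns are passive. Your closing sentence defers exactly this bookkeeping to \cite{das2013}, so the hardest of the three cases is asserted rather than proved.
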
 

{Remark that, in what follows, we present results related to sequential update modes, which are a specific kind of asynchronism. As such, our results are stronger than those of~\cite{das2013,Sethi2018size}, in the sense that ours induce the asynchronous case, while the results associated with asynchronism do not necessarily imply the ones associated with sequential update modes.}

\section{Results}
\label{sec:results}
In this section we present our results, starting with the 50 rules that fulfill the conditions of Theorem~\ref{thm:DAS} and later, among the 38 rules that fall outside the purview of said theorem, we will touch upon the ones which admit fixed points. {Note that in the paper, they consider $146$ rules, whilst we will focus on the minimal representative rules, which are $50$.} 

\subsection{Rules for which Theorem~\ref{thm:DAS} holds.}

We will begin this subsection studying the rules that can reach a fixed point from all initial configurations using only one sequential update mode, then we will show the rules that have a covering of sequential update modes. Additionally, since Rules $90$ {and $106$ are} the only ones that fulfill a condition of Theorem~\ref{thm:DAS} that {do not always converge under sequential update modes, they will be given special attention.}

Observe that it was proven in~\cite{Donoso2025} that 16 of the 50 non-equivalent rules that fulfill conditions of the theorem always reach fixed points regardless of the update mode and that 13 of them lead to fixed points under any sequential update mode. Now, we will present the results associated with the remaining 21 rules.

\subsubsection*{Rules that admit a Universal Sequential Update Mode.}
{Remark that the theorems in this section give us a more specific result than the one in~\cite{Sethi2016}. Indeed, in that paper the theorem states that there is \emph{some} way of updating the cells such that a fixed point can be reached (which can vary depending on the initial configuration state) for the rules that fulfill (at least) one of the conditions of the theorem. Meanwhile, here we prove that for 12 ECAs there exists a sequential update mode under which \emph{all} initial configurations can reach a fixed point, while for 7 other elementary rules we prove that any initial configuration state can converge under a sequential update mode.}

\begin{figure}[t]
	\centerline{
	\begin{minipage}{.25\textwidth}
	\centerline{
	\scalebox{1}{\begin{tikzpicture}[help lines/.style={draw=gray!50},
		every node/.style={help lines,rectangle,minimum size=3mm},
		ca/.style={draw=none,row sep=0mm,column sep=0mm, ampersand 
		replacement=\&},
		st0/.style={fill=white,help lines},
		stg/.style={fill=gray,help lines},
		st1/.style={fill=black,help lines}]

		\matrix[ca] {
\node[st0]{};\& \node[st1]{};\& \node[st0]{};\& \node[st1]{};\& \node[st0]{};\& \node[st0]{};\& \node[st1]{};\& \node[st1]{};\\
\node[st0]{};\& \node[stg]{};\& \node[st0]{};\& \node[stg]{};\& \node[st0]{};\& \node[st0]{};\& \node[stg]{};\& \node[st0]{};\\
\node[st0]{};\& \node[stg]{};\& \node[st0]{};\& \node[stg]{};\& \node[st0]{};\& \node[st0]{};\& \node[st0]{};\& \node[st0]{};\\
\node[st0]{};\& \node[stg]{};\& \node[st0]{};\& \node[stg]{};\& \node[st0]{};\& \node[st0]{};\& \node[st0]{};\& \node[st0]{};\\
\node[st0]{};\& \node[stg]{};\& \node[st0]{};\& \node[stg]{};\& \node[st0]{};\& \node[st0]{};\& \node[st0]{};\& \node[st0]{};\\
\node[st0]{};\& \node[stg]{};\& \node[st0]{};\& \node[st0]{};\& \node[st0]{};\& \node[st0]{};\& \node[st0]{};\& \node[st0]{};\\
\node[st0]{};\& \node[stg]{};\& \node[st0]{};\& \node[st0]{};\& \node[st0]{};\& \node[st0]{};\& \node[st0]{};\& \node[st0]{};\\
\node[st0]{};\& \node[st0]{};\& \node[st0]{};\& \node[st0]{};\& \node[st0]{};\& \node[st0]{};\& \node[st0]{};\& \node[st0]{};\\
\node[st0]{};\& \node[st0]{};\& \node[st0]{};\& \node[st0]{};\& \node[st0]{};\& \node[st0]{};\& \node[st0]{};\& \node[st0]{};\\
		};
	\end{tikzpicture}}}
	\end{minipage} 
	\begin{minipage}{.24\textwidth}
	\centerline{
	\scalebox{1}{\begin{tikzpicture}[help lines/.style={draw=gray!50},
		every node/.style={help lines,rectangle,minimum size=3mm},
		ca/.style={draw=none,row sep=0mm,column sep=0mm, ampersand 
		replacement=\&},
		st0/.style={fill=white,help lines},
		stg/.style={fill=gray,help lines},
		st1/.style={fill=black,help lines}]
        
		\matrix[ca] {
\node[st0]{};\& \node[st1]{};\& \node[st0]{};\& \node[st1]{};\& \node[st0]{};\& \node[st0]{};\& \node[st1]{};\& \node[st1]{};\\
\node[st0]{};\& \node[st0]{};\& \node[st0]{};\& \node[stg]{};\& \node[st0]{};\& \node[st0]{};\& \node[stg]{};\& \node[stg]{};\\
\node[st0]{};\& \node[st0]{};\& \node[st0]{};\& \node[st0]{};\& \node[st0]{};\& \node[st0]{};\& \node[stg]{};\& \node[stg]{};\\
\node[st0]{};\& \node[st0]{};\& \node[st0]{};\& \node[st0]{};\& \node[st0]{};\& \node[st0]{};\& \node[stg]{};\& \node[st0]{};\\
\node[st0]{};\& \node[st0]{};\& \node[st0]{};\& \node[st0]{};\& \node[st0]{};\& \node[st0]{};\& \node[st0]{};\& \node[st0]{};\\
\node[st0]{};\& \node[st0]{};\& \node[st0]{};\& \node[st0]{};\& \node[st0]{};\& \node[st0]{};\& \node[st0]{};\& \node[st0]{};\\
\node[st0]{};\& \node[st0]{};\& \node[st0]{};\& \node[st0]{};\& \node[st0]{};\& \node[st0]{};\& \node[st0]{};\& \node[st0]{};\\
\node[st0]{};\& \node[st0]{};\& \node[st0]{};\& \node[st0]{};\& \node[st0]{};\& \node[st0]{};\& \node[st0]{};\& \node[st0]{};\\
\node[st0]{};\& \node[st0]{};\& \node[st0]{};\& \node[st0]{};\& \node[st0]{};\& \node[st0]{};\& \node[st0]{};\& \node[st0]{};\\
		};
	\end{tikzpicture}}}
	\end{minipage}
	\begin{minipage}{.24\textwidth}
	\centerline{
	\scalebox{1}{\begin{tikzpicture}[help lines/.style={draw=gray!50},
		every node/.style={help lines,rectangle,minimum size=3mm},
		ca/.style={draw=none,row sep=0mm,column sep=0mm, ampersand 
		replacement=\&},
		st0/.style={fill=white,help lines},
		stg/.style={fill=gray,help lines},
		st1/.style={fill=black,help lines}]

		\matrix[ca] {
\node[st1]{};\& \node[st1]{};\& \node[st0]{};\& \node[st1]{};\& \node[st1]{};\& \node[st1]{};\& \node[st0]{};\& \node[st0]{};\\
\node[stg]{};\& \node[stg]{};\& \node[st0]{};\& \node[stg]{};\& \node[stg]{};\& \node[stg]{};\& \node[st0]{};\& \node[stg]{};\\
\node[stg]{};\& \node[stg]{};\& \node[st0]{};\& \node[stg]{};\& \node[stg]{};\& \node[stg]{};\& \node[st0]{};\& \node[stg]{};\\
\node[stg]{};\& \node[stg]{};\& \node[st0]{};\& \node[stg]{};\& \node[stg]{};\& \node[st0]{};\& \node[st0]{};\& \node[stg]{};\\
\node[stg]{};\& \node[stg]{};\& \node[st0]{};\& \node[stg]{};\& \node[st0]{};\& \node[st0]{};\& \node[st0]{};\& \node[stg]{};\\
\node[stg]{};\& \node[stg]{};\& \node[st0]{};\& \node[st0]{};\& \node[st0]{};\& \node[st0]{};\& \node[st0]{};\& \node[stg]{};\\
\node[stg]{};\& \node[stg]{};\& \node[st0]{};\& \node[st0]{};\& \node[st0]{};\& \node[st0]{};\& \node[st0]{};\& \node[stg]{};\\
\node[stg]{};\& \node[st0]{};\& \node[st0]{};\& \node[st0]{};\& \node[st0]{};\& \node[st0]{};\& \node[st0]{};\& \node[stg]{};\\
\node[st0]{};\& \node[st0]{};\& \node[st0]{};\& \node[st0]{};\& \node[st0]{};\& \node[st0]{};\& \node[st0]{};\& \node[st1]{};\\
		};
	\end{tikzpicture}}}
	\end{minipage}
	\begin{minipage}{.25\textwidth}
	\centerline{
	\scalebox{1}{\begin{tikzpicture}[help lines/.style={draw=gray!50},
		every node/.style={help lines,rectangle,minimum size=3mm},
		ca/.style={draw=none,row sep=0mm,column sep=0mm, ampersand 
		replacement=\&},
		st0/.style={fill=white,help lines},
		stg/.style={fill=gray,help lines},
		st1/.style={fill=black,help lines}]
		\matrix[ca] {
\node[st1]{};\& \node[st1]{};\& \node[st0]{};\& \node[st1]{};\& \node[st1]{};\& \node[st1]{};\& \node[st0]{};\& \node[st0]{};\\
\node[stg]{};\& \node[stg]{};\& \node[st0]{};\& \node[stg]{};\& \node[stg]{};\& \node[st0]{};\& \node[st0]{};\& \node[st0]{};\\
\node[stg]{};\& \node[stg]{};\& \node[st0]{};\& \node[stg]{};\& \node[st0]{};\& \node[st0]{};\& \node[st0]{};\& \node[st0]{};\\
\node[stg]{};\& \node[stg]{};\& \node[st0]{};\& \node[st0]{};\& \node[st0]{};\& \node[st0]{};\& \node[st0]{};\& \node[st0]{};\\
\node[stg]{};\& \node[st0]{};\& \node[st0]{};\& \node[st0]{};\& \node[st0]{};\& \node[st0]{};\& \node[st0]{};\& \node[st0]{};\\
\node[st0]{};\& \node[st0]{};\& \node[st0]{};\& \node[st0]{};\& \node[st0]{};\& \node[st0]{};\& \node[st0]{};\& \node[st0]{};\\
\node[st0]{};\& \node[st0]{};\& \node[st0]{};\& \node[st0]{};\& \node[st0]{};\& \node[st0]{};\& \node[st0]{};\& \node[st0]{};\\
\node[st0]{};\& \node[st0]{};\& \node[st0]{};\& \node[st0]{};\& \node[st0]{};\& \node[st0]{};\& \node[st0]{};\& \node[st0]{};\\
\node[st0]{};\& \node[st0]{};\& \node[st0]{};\& \node[st0]{};\& \node[st0]{};\& \node[st0]{};\& \node[st0]{};\& \node[st0]{};\\
		};
	\end{tikzpicture}}}
	\end{minipage}}
	
	\centerline{
	\begin{minipage}{.25\textwidth}
	\centerline{(a)}
	\end{minipage}
	\begin{minipage}{.24\textwidth}
	\centerline{(b)}
	\end{minipage}
	\begin{minipage}{.24\textwidth}
	\centerline{(c)}
	\end{minipage}
	\begin{minipage}{.25\textwidth}
	\centerline{(d)}
	\end{minipage}
	}
	\caption{Space-time diagrams (time going downward) of configuration 
		$01010011$, in (a) and (b), and $11011100$, in (c) and (d), following Rule $138$ under: 
		(a) and (c) the update mode $\mu_{1} = (9,8,7,6,5,4,3,2,1,0)$, 
		(b) the update mode $\mu_{2} = (1,3,9,8,0,2,4,5,6,7)$, 
		(d) the update mode $\mu_{3} = (5,4,3,1,0,2,6,7,8,9)$. Configurations colored in gray (resp. black) represent the ones obtained at substeps (resp. steps). Update mode $\mu_1$ is the one defined in Theorem~\ref{thm:universal}, while update modes $\mu_2$ and $\mu_3$ are obtained following the instructions given by the proof of Theorem~\ref{thm:DAS} in~\cite{Sethi2016}.}
	\label{fig:example}
\end{figure}

{In Figure~\ref{fig:example} we illustrate the difference between our results and the ones obtained in~\cite{Sethi2016}, presenting four example dynamics with local rule $138$, starting from two configurations, updated following $\mu_1$ (the update mode defined in Theorem~\ref{thm:universal}) and $\mu_2$ and $\mu_3$ (the update modes obtained according to the proof presented in~\cite{Sethi2016}. Notice that the update modes $\mu_2$ and $\mu_3$ have the advantage of reaching the fixed point in a smaller number of iterations (in particular, case (c) does not reach a fixed point after one step, needing two complete substeps to converge), while the update mode $\mu_1$ has the advantage of leading \emph{all} initial configurations to a fixed point for the rules of Theorem~\ref{thm:universal}.}

\begin{table}[t]
\begin{center}
\begin{tabular}{|c|c|c|c|c|c|c|c|c|}
    \hline
         &111&110&101&100&011&010&001&000  \\\hline
         2  &0&0&0&0&0&0&1&0\\\hline 
         10 &0&0&0&0&1&0&1&0\\\hline 
         26 &0&0&0&1&1&0&1&0\\\hline 
         34 &0&0&1&0&0&0&1&0\\\hline 
         42 &0&0&1&0&1&0&1&0\\\hline 
         58 &0&0&1&1&1&0&1&0\\\hline 
         130&1&0&0&0&0&0&1&0\\\hline 
         138&1&0&0&0&1&0&1&0\\\hline 
         154&1&0&0&1&1&0&1&0\\\hline 
         162&1&0&1&0&0&0&1&0\\\hline 
         170&1&0&1&0&1&0&1&0\\\hline 
    \end{tabular}    
    \caption{Definition of rules $2$, $10$, $26$, $34$, $42$, $58$, $130$, $138$, $154$, $162$ and $170$.}
    \label{tab:rules_universal}
\end{center}
\end{table}

\begin{thm}\label{thm:universal}
    The sequential update mode $\mu=(n-1,n-2,\dots,0)$ is universal for an ECA rule R if any of the following is true:
    \begin{itemize}
        \item $000$ and $010$ are passive, and $001$, $010$ and $110$ are active. 
        \item $000$ and $011$ are passive, and $001$, $010$, $100$ and $110$ are active.
    \end{itemize}
\end{thm}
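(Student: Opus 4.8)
The plan is to read one application of $f_\mu$ as a single right-to-left sweep and then treat the two cases of the hypothesis separately. Under $\mu=(n-1,n-2,\dots,0)$ a step first sets $x_{n-1}\leftarrow f(x_{n-2},x_{n-1},x_0)$, then $x_i\leftarrow f(x_{i-1},x_i,x_{i+1})$ for $i=n-2,n-3,\dots,1$ — so each such cell reads its \emph{already updated} right neighbour and its \emph{old} left neighbour — and finally $x_0\leftarrow f(x_{n-1},x_0,x_1)$ with both neighbours already updated; the pair $\{n-1,0\}$, which I will call the \emph{seam}, is the only place where a cell is updated before its right neighbour. Both cases of the hypothesis give $f(000)=0$, $f(001)=1$ and $f(010)=f(110)=0$, hence $f(a,1,0)=0$ for every $a$; in particular $\0^n$ is a fixed point (and $\1^n$ is one whenever $f(111)=1$), so it suffices to show that every orbit reaches $\0^n$ or $\1^n$. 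In the first case one also has $f(100)=0$, so that $f(a,c,0)=0$ for \emph{all} $a,c$; in the second case $f(011)=1$ and $f(100)=1$, so that $f(0,c,r)=r$ for all $c$ while $f(a,c,0)=1\iff(a,c)=(1,0)$.

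\emph{First case.} Since $f(a,c,0)=0$ identically, the sweep has a cascade property: as soon as some cell receives the value $0$, every cell updated after it receives $0$ as well. Hence, if $x_0=0$ then $x_{n-1}\leftarrow f(\cdot,\cdot,0)=0$ and the whole step collapses to $\0^n$. If instead $x_0=1$ and $(f_\mu(x))_0=1$, then running the cascade implication in reverse forces $x_1,\dots,x_{n-1}$ all to have received $1$, i.e.\ $f_\mu(x)=\1^n$. Thus, from $x_0=1$ one either reaches $\1^n$ — fixed when $f(111)=1$, and otherwise mapped to $\0^n$ by one more step — or reaches some $y$ with $y_0=0$, which is mapped to $\0^n$ by one more step. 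So every configuration reaches a fixed point within two steps; a finite check over the free entries $f(011),f(101),f(111)$ handles the subcases and finishes this case.

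\emph{Second case (rules $26$, $58$, $154$).} Here $f(1,0,0)=1$ breaks the cascade, so I would instead track $B(x)=\#\{\,i\in\integers{n}:x_i=1\text{ and }x_{(i+1)\bmod n}=0\,\}$, the number of maximal cyclic blocks of $1$s. Using the sweep together with $f(0,c,r)=r$ and $f(a,c,0)=1\iff(a,c)=(1,0)$, one checks: (a)~a cell $i\notin\{n-1,0\}$ can be the right end of a block of $f_\mu(x)$ only when $(x_{i-1},x_i)=(1,0)$, i.e.\ only when $i-1$ is the right end of a block of $x$, and the cells $n-1$ and $0$ cannot both be right ends of $f_\mu(x)$; (b)~away from the seam the block pattern evolves by a rule-specific ``shift/erase'' rewriting under which the \emph{number} of blocks is preserved; (c)~at the seam, where cell $n-1$ reads the \emph{old} value of cell $0$, a short rule-by-rule computation shows that every bounded number of steps the block touching the seam is strictly shortened and eventually removed. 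Combining (a)--(c), $B$ is non-increasing and cannot stay positive forever, so every orbit reaches a configuration with $B=0$, that is $\0^n$ or $\1^n$; since $\0^n$ is fixed and $\1^n$ is either fixed ($f(111)=1$, as for rule $154$) or satisfies $f_\mu(\1^n)=\0^n$ ($f(111)=0$, as for rules $26$ and $58$), every orbit converges to a fixed point.

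The step I expect to be the main obstacle is (c): precisely quantifying the ``leak'' at the seam that forces $B$ to drop — equivalently, ruling out a limit cycle carrying $B>0$. This means following the left-most and right-most $1$-blocks through one sweep, separately for each of rules $26$, $58$ and $154$ (their sweep maps differ); once one shows that the seam erodes its adjacent block by at least one cell every bounded number of steps, an induction on $B$, using the total number of $1$s (or the distance from the blocks to the seam) as a tie-breaker, closes the argument, in parallel with the explicit treatment of Rule $26$.
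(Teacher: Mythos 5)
Your treatment of the first bullet is correct and genuinely different from (and cleaner than) the paper's argument: the paper proves that case by an explicit rule-by-rule description of the transients of Rules $2$, $10$, $34$, $42$, $130$, $138$, $162$ and $170$, whereas your observation that $f(a,c,0)=0$ for all $a,c$ turns the right-to-left sweep into a single cascade, giving convergence to $0^n$ or $1^n$ in at most two steps, uniformly over all eight rules. I would accept that part as written (modulo reading the first bullet of the statement as ``$000$ and $100$ are passive'', since as printed it asserts $010$ to be both passive and active; your use of $f(100)=0$ is the intended reading).

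The second bullet (Rules $26$, $58$, $154$) is where the proposal has a genuine gap, and you have named it yourself: step (c) is not proved, and it is precisely the content of the theorem in this case. Moreover, the accounting proposed in (b)--(c) --- block count preserved in the bulk, strictly decreased only by erosion at the seam --- does not match what actually happens. For Rule $26$ the paper computes that a configuration with $k$ isles of $1$s first becomes a configuration with $k$ isolated $1$s and then, because $f(001)=1$ makes a $1$ propagate leftward through the whole right-to-left sweep until something stops it, about half of these isolated $1$s are absorbed into long blocks at the next step: the number of isles drops from $k$ to roughly $k/2$ every two steps, by merging in the bulk, not by a one-cell-per-period leak at the seam. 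So the quantity $B$ you track does eventually reach $0$, but not by the mechanism your outline relies on, and no statement of the form ``the seam erodes its adjacent block every bounded number of steps'' is established (nor, I suspect, true as stated). As it stands the second case is a plan rather than a proof; to complete it you would need either to carry out the sweep computation explicitly, as the paper does for Rule $26$ (and then extend it honestly to $58$ and $154$, which the paper itself only waves at), or to find a correct monotonicity argument that accounts for the bulk merging.
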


\begin{proof}
Let $\mu=(n-1,n-2,\dots,0)$.
Let us study each of the rules that fulfill either condition.
    \begin{itemize}
        \item Rule 2: By definition (see Table~\ref{tab:rules_universal}), only $f_2(001)=1$. Since we are updating the cells of the configuration one-by-one from right to left (by definition of $\mu$), the state of all cells immediately becomes $0$, unless the state of the first cell of the configuration is $1$ and the last two have state equal to $0$: $x^{0}(=x)=1(1^{b_1}0^{a_1})1^{b_2}\dots0^{a_{k-1}}(1^{b_k}0^{a_k})00$, such that $\sum_{i=1}^{k}a_i+b_i=n-3$. In this case, after the first substep, configuration $x^{0.0}$ becomes $x^{0.1}=1(1^{b_1}0^{a_1})1^{b_2}\dots0^{a_{k-1}}(1^{b_k}0^{a_k})(01)$. Then, after $a_k$ substeps, $x^{0.1}$ becomes $x^{0.a_k+1}=1(1^{b_1}0^{a_1})1^{b_2}\dots0^{a_{k-1}}(1^{b_k}01^{a_k+2})$. Making another substep does not produce any change, since $f_2(101)=0$, so $x^{0.a_k+2}=x^{0.a_k+1}$.
        
        Then, by definition, the remaining substeps of the first step will make all the cells from $0$ to $n-a_{k}-2$ become $0$ so that $x^1=0^{n-a_k-2}1^{a_k+2}$. Finally, the first $a_k+2$ substeps of the second step lead the configuration to the fixed point $0^n$, since $f_2(110)=0$.
        \item Rule 10: The analysis is the same as that for Rule $2$, but now it includes a second type of configuration $x_1=10^{a_1}0^{b_1}\dots0^{a_k}1^{b_k}01$ whose orbits have a  transient part longer than zero.
        \item Rule 34: The analysis is the same as that for Rule $2$, adding $x_2=10^{a_1}0^{b_1}\dots0^{a_k}1^{b_k}10$ as another type of configuration whose orbits have a  transient part longer than zero.
        \item Rule 42: The analysis is the same as that for Rule $2$, also including $x_1$ from Rule 10 and $x_2$ from Rule 34 as configurations whose orbits have a  transient part longer than zero.
        \item Rule 130: It is identical to Rule 2, except that now the configuration $1^n$ is also a fixed point, but with an empty basin of attraction.
        \item Rule 138: It is identical to Rule 10, except that now the configuration $1^n$ is also a fixed point, but with an empty basin of attraction.
        \item Rule 162: It is identical to Rule 34, except that now configuration $1^n$ is also a fixed point and configurations $y_1=10^{n-1}$, $y_2=1^20^{n-2}$, \dots, $y_{n-1}=1^{n-1}0$ lead to it.
        \item Rule 170: It is identical to Rule 42, except that now configuration $1^n$ is also a fixed point. It is easy to see that the configurations lead to $1^n$ or $0^n$ depending on if the state of the first cell is $1$ or $0$ respectively.
        \item Rule 26: We need to do a case-by-case analysis:
        \begin{equation*}
            \begin{split}
                x = 1^{n-1}0 &\rightarrow f_{\mu,26}(x)=0^n\text{,}\\
                x = 1^{n-2}0^{2}\rightarrow f_{\mu,26}(x)=0^{n-1}1 &\rightarrow f_{\mu,26}^2(x)=0^n\text{,}\\
                x = 1^{n-3}0^{3}\rightarrow f_{\mu,26}(x)=0^{n-2}1^2 &\rightarrow f_{\mu,26}^2(x)=0^n\text{,}\\
                \vdots\\
                x = 1^{n-k}0^{k}\rightarrow f_{\mu,26}(x)=0^{n-k+1}1^{k-1} &\rightarrow f_{\mu,26}^2(x)=0^n\text{,}\\
            \end{split}
        \end{equation*}
        Now, if we have an isle of $1$s surrounded by $0$s: $x=0^{k_1}1^{k_2}0^{n-k_1-k_2}$
        Then, we will have:      
        \begin{equation*}
            \begin{split}
                f_{\mu,26}(x) &= 0^{k_1+k_2}10^{n-(k_1+k_2)-1}\text{,}\\
                f_{\mu,26}^2(x) &= 1^{k_1+k_2+2}0^{n-(k_1+k_2+2)}\text{,}\\
                f_{\mu,26}^4(x)&=0^n\text{.}\\
            \end{split}
        \end{equation*}
        The analysis is analogous for $x=1^{k_1}0^{k_2}1^{n-k_1-k_2}$. Now, let us consider $x=0^{a_1}1^{b_1}\dots0^{a_k}1^{b_k}0^{a_{k+1}}$, then 
        $$f_{\mu,26}(x) = 0^{a_1+b_1}10^{a_2+b_2-1}1\dots 0^{a_k+b_k-1}10^{a_{k+1}-1}\text{.}$$
        If $k$ is odd, then
        $$f_{\mu,26}^2(x) = 1^{a_1+b_1+2}0^{a_2+b_2}\dots 1^{a_k+b_k}10^{a_{k+1}-2}\text{.}$$
        If $k$ is even, then
        $$f_{\mu,26}^2(x) = 0^{a_1+b_1+2}1^{a_2+b_2}0^{a_3+b_3}\dots 1^{a_k+b_k}10^{a_{k+1}-2}\text{.}$$
        In both cases, the number of isles decreases by half every two iterations until we are left with only one isle of $1$s and one of $0$s, which we already know leads to a fixed point.
        \item Rule 58: The analysis follows a similar reasoning as that for Rule 26.
        \item Rule 154: The analysis follows a similar reasoning as that for Rule 26, but in this case $1^n$ is a fixed point with an empty basin of attraction.
    \end{itemize}
\end{proof}

\begin{thm}
    For Rule $24$ the sequential update mode $\mu=(0,1,\dots,n-1)$ is universal.
\end{thm}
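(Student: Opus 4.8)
The plan is to show that under $\mu=(0,1,\dots,n-1)$ every configuration is driven to $0^n$. Since a fixed point containing a $1$ at some position $i$ would, as $011$ is the only $1$‑producing neighborhood of Rule~$24$ with centre $1$, force $x_{i-1}=0$ and $x_{i+1}=1$, and then the $1$ at $i+1$ would force $x_i=0$, the configuration $0^n$ is the unique fixed point; so reaching $0^n$ from every configuration is exactly universality. I will write $x'=f_{\mu,24}(x)$ for one full left‑to‑right sweep, so that $x'_0=f(x_{n-1},x_0,x_1)$, $x'_j=f(x'_{j-1},x_j,x_{j+1})$ for $1\le j\le n-2$, $x'_{n-1}=f(x'_{n-2},x_{n-1},x'_0)$, and I write $|x|$ for the number of cells in state $1$. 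Recall $f_{24}(abc)=1$ exactly for $abc\in\{100,011\}$, so during a sweep a cell ends in state $1$ only via the pattern $100$ (turning a $0$ into a $1$) or $011$ (preserving a $1$). Call the sweep from $x$ a \emph{surge} when $x_{n-1}=1$ and $x_0=x_1=0$ (so $x'_0=f_{24}(100)=1$), and \emph{plain} otherwise.

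The crux is the following claim about a plain sweep: (a)~$x'_{n-1}=0$, and (b)~every $j$ with $x'_j=1$ satisfies $x_j=x_{j+1}=1$. I prove (b) by ruling out the pattern $100$ entirely in a plain sweep. If $x'_m=1$ arose through $100$, then $m\ge 1$ (the pattern $100$ at cell $0$ is precisely the surge condition), $x'_{m-1}=1$ and $x_m=0$; but the only way a cell of index $\ge 1$ can end in state $1$ while the cell to its right in $x$ equals $0$ is the pattern $100$ again (the pattern $011$ would require that right neighbour to be $1$). Stepping down, the pattern $100$ occurs at every cell $m,m-1,\dots,1$, and at cell $1$ it reads $x'_0=1$, $x_1=0$; then $f(x_{n-1},x_0,0)=1$ forces $(x_{n-1},x_0)=(1,0)$, i.e. the surge condition $x_{n-1}=1,x_0=0,x_1=0$, a contradiction. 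Hence in a plain sweep only $011$ produces $1$s, which is (b). For (a): if $x'_{n-1}=1$ it arose either through $100$, i.e. $x'_{n-2}=1$ with $x_{n-1}=0$, contradicting (b) (which needs $x_{n-1}=1$); or through $011$, i.e. $x'_0=1$ with $x_{n-1}=1$, but in a plain sweep $x'_0=f(x_{n-1},x_0,x_1)=1$ is possible only via the neighborhood $011$, which requires $x_{n-1}=0$.

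Two consequences finish the argument. First, by (a) a plain sweep is followed by a plain sweep, and since a surge produces $x'_0=1$ a surge is followed by a plain sweep; hence a surge can occur only as the very first sweep, and all later sweeps are plain. Second, in a plain sweep from any $x\ne 0^n$ one has $|x'|<|x|$: by (b) the $1$s of $x'$ sit at positions $j$ with $x_j=x_{j+1}=1$, so if $x\ne 1^n$ the rightmost $1$ of every maximal block of $1$s of $x$ is excluded and $|x'|\le|x|-1$, while if $x=1^n$ (which is plain) then (a) alone gives $|x'|\le n-1<|x|$. Therefore, from the first plain sweep onward $|x^{(t)}|$ is a strictly decreasing sequence of non‑negative integers and reaches $0$, i.e. $x^{(t)}=0^n$, a fixed point; so $\mu$ is universal for Rule~$24$. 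The only real obstacle is the surge, which is genuinely the single mechanism able to increase $|x|$ in one step (e.g. $0^{n-1}1\mapsto 1^{n-2}01$); the proof works because a surge destroys its own precondition by setting cell $0$ to $1$, hence can strike at most once.
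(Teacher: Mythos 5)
Your proof is correct, and it takes a genuinely different route from the paper's. The paper handles Rule $24$ in a single line, asserting that it follows the argument for Rule $10$ ``up to the update mode inversion''; that argument, in turn, is the explicit orbit computation done for Rule $2$: write a configuration as blocks $0^{a_i}1^{b_i}$ and trace how the sweep erases them, listing the handful of shapes with longer transients. You instead run a monovariant argument: you isolate the unique mechanism by which a left-to-right sweep can create a $1$ (the ``surge'' $x_{n-1}x_0x_1=100$ at cell $0$), prove by a leftward descent that in any non-surge sweep the pattern $100$ can never fire (it would propagate down to cell $0$ and become the surge condition), conclude that a plain sweep keeps a $1$ at position $j$ only if $x_j=x_{j+1}=1$ and always sets $x'_{n-1}=0$, and deduce that the number of $1$s strictly decreases from the second sweep onward. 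The wrap-around cells $0$ and $n-1$, where the neighborhood mixes old and new values and where a careless argument would break, are exactly the places you check; the observation that a surge destroys its own precondition, so it can strike at most once, is the key structural point. What your approach buys is a self-contained, verifiable proof with an explicit linear bound on the number of sweeps to convergence; what the paper's buys is brevity, at the cost of leaving the reader to reconstruct an analogy that is not a literal symmetry (the reflection of Rule $10$ is Rule $80$, not Rule $24$, so the block-tracing really does have to be redone for this rule).
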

\begin{proof}
    The reasoning follows the lines of that for Rule $10$, up to the update mode inversion.
\end{proof}
\subsubsection{Rules that admit a Covering of Sequential Update Modes}
\begin{table}[t]
\begin{center}
\begin{tabular}{|c|c|c|c|c|c|c|c|c|}
    \hline
         &111&110&101&100&011&010&001&000  \\\hline
         104&0&1&1&0&1&0&0&0\\\hline 
    \end{tabular}    
    \caption{Definition of Rule $104$.}
    \label{tab:104}
\end{center}
\end{table}
\begin{thm}\label{thm:104_almost_universal}
    For Rule $104$, if the size of the ring is even, then the sequential update mode $\mu=(n-1,n-2,\dots,0)$ is $E$-universal.
\end{thm}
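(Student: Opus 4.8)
Throughout write $\mu=(n-1,n-2,\dots,0)$ and $f:=f_{104}$, and read ``$E$-universal'' here as: $\mu$ is universal on the set $E$ of all configurations living on rings of even size, i.e.\ for every even $n$ and every $x\in\Bn$ the orbit of $x$ under $f_{\mu,104}$ reaches a fixed point. The plan is to first record the structural facts. Since $f_{104}(a,b,c)=1$ exactly when two of $a,b,c$ equal $1$ (Table~\ref{tab:104}), one checks immediately that $f_{104}(a,0,0)=f_{104}(0,0,b)=0$ for all $a,b\in\B$, so $00$ is a wall; hence once two cyclically consecutive cells both carry $0$ they keep it forever. Inspecting which neighbourhoods leave a state unchanged, one also sees that a configuration of $\Bn$, $n\ge 2$, is a fixed point iff it is $0^n$ or its $1$s form isolated blocks of length exactly $2$ pairwise separated by blocks of $0$s of length at least $2$ (consistent with the equality of parallel and block‑sequential fixed points recalled from~\cite{Robert1986}); in particular every fixed point contains a $00$. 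Combining this with the wall property, the theorem reduces to two claims, for $n$ even: (A) the orbit of any $x\in\Bn$ reaches, after finitely many steps, a configuration containing a $00$; and (B) from a configuration containing a $00$, the orbit reaches a fixed point.

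For (B) I would use the $00$ to cut the ring: if cells $j,j+1$ are frozen at $0$, the remaining $n-2$ cells evolve as a finite segment with both ends $0$-padded, and I would prove by strong induction on the segment length that $f_\mu$ converges on such segments. The inductive step is a finite case analysis of the rightmost maximal block of $1$s: after boundedly many sweeps it is annihilated, or it has spawned a fresh $00$ inside the segment, or it has frozen into an isolated $11$ bordered by $0$s; in each case one is left with one or several strictly shorter $0$-padded segments and the induction hypothesis applies. Minor care is needed for cells shed by a shrinking block that drift leftward and merge with the next block before it has settled, and for the fact that, depending on where the $00$ sits relative to index $0$, the sweep traverses the segment in a slightly interrupted order; neither changes the conclusion.

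Claim (A) is the heart of the argument and the only place that uses $n$ even. It suffices to treat $00$-free configurations, i.e.\ those whose $0$s are all isolated: $x=1^{a_1}01^{a_2}0\cdots1^{a_k}0$ read cyclically, or $x=1^n$. I would analyse one right-to-left sweep on the block structure of such an $x$ and show that $f_\mu(x)$ either already contains a $00$ or is again $00$-free but has made definite progress toward creating one, so that a $00$ appears within boundedly many steps. The mechanism is that the sweep thins and merges $1$-blocks, reducing after finitely many steps to a single block of $1$s; closing the loop at cell $0$ — the unique cell both of whose neighbours have already been updated in the current sweep — one tracks the parity of the cell positions and checks that, when $n$ is even, this final stage is forced to produce two adjacent $0$s, whereas for odd $n$ the parities are self-consistent and the sweep can return a $00$-free configuration. (The obstruction is genuine: on $\Z/3\Z$, Rule $104$ under this $\mu$ runs through the cycle $111\to110\to101\to011\to111$, none of whose configurations contains a $00$, so $\mu$ is not universal for odd rings.) I expect exactly this step — isolating the correct progress measure on $00$-free configurations and carrying out the parity computation at cell $0$ — to be the main obstacle; part (B) and the fixed-point characterisation are routine, if somewhat lengthy, finite case checks.
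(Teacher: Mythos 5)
Your overall architecture is sound and in some respects cleaner than the paper's: the wall property of $00$, the characterisation of the fixed points (which is correct: $0^n$, or $1$-blocks of length exactly $2$ separated by $0$-blocks of length at least $2$, so that every fixed point contains a $00$), the reduction to claims (A) and (B), and the $n=3$ cycle $111\to110\to101\to011\to111$ as a witness that evenness is needed are all right. The paper instead runs a single case analysis on the block decomposition $x=0^{a_1}1^{b_1}\cdots0^{a_k}1^{b_k}$ --- isolated $0$s vanish since $f_{104}(101)=1$, $0$-blocks of length at least $2$ are frozen walls, isolated $1$s vanish since $f_{104}(010)=0$, and blocks of $1$s of length at least $3$ are cut up by $f_{104}(111)=0$ into pairs and isolated $1$s --- and it exhibits the explicit periodic orbit of $011^{n-2}$ for odd $n$; your (A)/(B) decomposition covers the same ground in a different packaging.

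However, as written this is a plan, not a proof. Claim (A) --- that on an even ring every $00$-free configuration produces a $00$ within boundedly many steps --- is exactly where the hypothesis that $n$ is even enters and exactly where the statement could fail, and you do not prove it: you describe a mechanism (``the sweep thins and merges $1$-blocks, reducing to a single block, and the parity at cell $0$ forces two adjacent $0$s'') and then explicitly defer ``the parity computation at cell $0$'' as the main obstacle. Neither the assertion that a $00$-free configuration reduces to a single $1$-block in finitely many sweeps, nor the parity argument at cell $0$, is established, and the intermediate dynamics is genuinely delicate: the paper's own computation of the orbit of $011^{n-2}$ shows the sweep oscillating between shapes of the form $1^{t+2}(01)^{\cdots}$ and $(01)^{\cdots}1^{\cdots}$ for on the order of $n$ steps before anything resolves, so ``boundedly many steps'' needs an actual progress measure. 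Claim (B) is likewise only sketched, though it is the more routine half. Until (A) is carried out --- for instance by an explicit computation of one right-to-left sweep on $1^{a_1}01^{a_2}0\cdots1^{a_k}0$ analogous to the paper's block-by-block analysis --- the proof is incomplete.
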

\begin{proof}
Note that, by definition of Rule $104$, the word $w=00$ is a wall (see Table~\ref{tab:104}). Let $\mu=(n-1,n-2,\dots,0)$.

If $n$ is odd, the update mode $\mu$ cannot lead all initial configurations to a fixed point, that is, there exists at least one initial configuration that leads to a cycle of length $T\geq2$.

Indeed, let us consider $x=(01)1^{n-2}$, with $n$ an odd number.
\begin{equation*}
\begin{split}
f_{\mu,104}(x)=&111(01)^{\frac{n-3}{2}}\text{,}\\
f_{\mu,104}^2(x)=&(01)^21^{n-4}\text{,}\\
f_{\mu,104}^3(x)=&11111(01)^{\frac{n-5}{2}}\text{,}\\
f_{\mu,104}^4(x)=&(01)^{3}1^{n-6}\text{,}\\
\vdots&\\
f_{\mu,104}^t(x)=&1^{t+2}(01)^{\frac{n-5}{2}}\text{, with $t$ an odd number,}\\
f_{\mu,104}^{t+1}(x)=&(01)^{\frac{t+3}{2}}1^{n-(t+2)}\text{.}\\
\end{split}
\end{equation*}    
After $t^{\star}=n-4$ steps, we have
\begin{equation*}
\begin{split}
f_{\mu,104}^{t^{\star}}(x)=&1^{n-2}(01)\text{,}\\
f_{\mu,104}^{t^{\star}+1}(x)=&(01)^{\frac{n-1}{2}}1\text{,}\\
f_{\mu,104}^{t^{\star}+2}(x)=&1^n\text{,}\\
f_{\mu,104}^{t^{\star}+3}(x)=&1(01)^{\frac{n-1}{2}}\text{,}\\
f_{\mu,104}^{t^{\star}+4}(x)=&(10)1^{n-2}\text{.}\\
\end{split}
\end{equation*}
Let $y=f_{\mu,104}^{t^{\star}+4}(x)=x^{n}$. We have
    \begin{equation*}
\begin{split}
f_{\mu,104}(y)=&111(10)^{\frac{n-3}{2}}\text{,}\\
f_{\mu,104}^2(y)=&(10)^21^{n-4}\text{,}\\
\vdots&\\
f_{\mu,104}^t(y)=&1^{t+2}(10)^{\frac{n-5}{2}}\text{, with $t$ an odd number,}\\
f_{\mu,104}^{t+1}(y)=&(10)^{\frac{t+3}{2}}1^{n-(t+2)}\text{,}\\
\vdots&\\
f_{\mu,104}^{t^{\star}}(y)=&1^{n-2}(10)\text{, with }t^{\star}=n-4\\
f_{\mu,104}^{t^{\star}+1}(y)=&(10)^{\frac{n-1}{2}}1\text{,}\\
f_{\mu,104}^{t^{\star}+2}(y)=&01^{n-1}=x\text{.}\\
\end{split}
\end{equation*}  
Thus we have found a configuration $x$ that does not reach a fixed point under $f_{\mu,104}$.

    Now, let us consider an initial configuration $x=0^{a_1}1^{b_1}\dots 0^{a_k}1^{b_k}$, such that $\sum_{i=1}^k a_i+b_i=n\equiv 0\mod2$.

    If $a_i=1$, with  $i\in\{1,\dots,k\}$, those isolated $0$s will disappear (because $f_{104}(101)=1$). If $a_i \geq 2$, with  $i\in\{1,\dots,k\}$, those isles of $0$ cannot decrease in size because they contain walls.

    If $b_i=1$, with  $i\in\{1,\dots,k\}$, those isolated $1$s will disappear (because $f_{104}(010)=0$). 
    Consider the case in which $b_i=2$, with  $i\in\{1,\dots,k\}$. Since $f_{104}(011)=f_{104}(110)=1$, we could have a wall depending on the number of $0$s at either side of the isle of $1$s. We know that the $0$s to the right of the isle of $1$s will not decrease. 
    If there is only one $0$ to the left of the isle of $1$s of size two, then we already know that the isolated $0$s disappear and after this iteration there will be an isle of $1$s of size at least 4.

    Finally, in the case where $b_i\geq 3$, with  $i\in\{1,\dots,k\}$, then the isle of $1$s will be broken up because $f_{104}(111)=0$:
    \begin{itemize}
        \item If $b_i$ is even, then after the first iteration this isle of $1$s will become $11(01)^{\frac{b_i-2}{2}}$ and after the second it will be $110^{b_i-2}$. And from there it follows the analysis for $b_i=2$.
        \item If $b_i$ is odd, after the first iteration the isle will become $0(01)^{\frac{b_i-1}{2}}$; and after the second the isolated $1$s will disappear.
    \end{itemize}
Therefore, we have proven that all initial configurations of a ring of even size $n$ arrive at a fixed point under $f_{\mu,104}$.
\end{proof}

In Theorem~\ref{thm:104_almost_universal} we chose an update mode with a simple definition, that is: $\mu=(n-1,n-2,\dots,0)$, to prove that there exists at least one sequential update mode which is universal when the size of the ring is even. Nevertheless, when we run computer simulations for Rule $104$ with a ring of size $n=8$ we found  544 different sequential update modes that lead all configurations to a fixed point. Here we present a few examples: $\mu_1=(0,1,2,3,4,5,6,7)$, $\mu_2=(0,1,3,6,4,5,2,7)$, $\mu_3= (0,2,6,1,3,4,5,7)$ and $\mu_4=(2,6,3,0,1,4,5,7)$.

So far, the rules have had a universal sequential update mode (restricted to rings of even size, in the cases of Rules $104$ and $106$). In what remains of this section, we have six rules that have a covering of sequential update modes and one rule that does not have it.
\begin{thm}
For Rule $104$, if the size of the ring is odd, there exists a covering of sequential update modes. 
\end{thm}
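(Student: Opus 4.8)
The plan is to exhibit an explicit finite covering and to split the configurations of an odd-size ring according to whether or not they contain the wall $00$.

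\emph{Configurations containing a $00$.} For any $x\in\Bn$ in which two consecutive cells equal $0$, I claim the sweep $\mu_{0}=(n-1,n-2,\dots,0)$ already drives $x$ to a fixed point. This is exactly the mechanism of the proof of Theorem~\ref{thm:104_almost_universal}: a block $00$ is a wall and so never shrinks, isolated $1$s vanish through $f_{104}(010)=0$, isles of $1$s of length at least $3$ are chopped by $f_{104}(111)=0$, and isles of length $2$ get absorbed against the walls; the parity of $n$ entered that proof only to rule out the exceptional periodic orbit of $01^{n-1}$, which contains no $00$, so the argument carries over verbatim to odd rings as soon as a wall is present. This already handles every configuration except $1^{n}$ and those in which every $0$ is isolated (i.e. flanked by $1$s on both sides).

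\emph{Wall-free configurations.} For these I would use a second family of modes, each of which spends its opening substep(s) manufacturing a $00$ (hence a $000$) and then performs a right-to-left sweep of the rest of the ring, after which convergence follows exactly as in the previous case. If $x$ has an isolated $1$ at some cell $c$, updating $c$ first flips it to $0$ by $f_{104}(010)=0$, producing $000$ on $\{c-1,c,c+1\}$; if every isle of $1$s has length at least $2$ but one has length at least $3$, updating an interior $1$ of that isle by $f_{104}(111)=0$ creates an isolated $1$ and reduces to the previous sub-case; and if every isle of $1$s has length exactly $2$ --- so that $n=3m$ with $m$ odd and $x$ is a rotation of $(011)^{m}$ --- then updating a $0$-cell by $f_{104}(101)=1$ merges three isles into one of length $5$, reducing again to the previous sub-case. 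Finally, $1^{n}$ is handled directly by the even--odd mode $\mu^{\star}=(0,2,4,\dots,n-1,1,3,\dots,n-2)$, under which one computes $1^{n}\mapsto 0(10)^{(n-3)/2}11\mapsto 0^{n-2}11$, and $0^{n-2}11$ is readily checked to be a fixed point. The union of $\{\mu_{0},\mu^{\star}\}$ with the finitely many modes obtained above is then a covering. (The case $n=3$ is genuinely special: the merging step needs $n\ge 5$, and indeed $111$ reaches no fixed point under any sequential mode on a ring of size $3$, so the statement is to be read for odd $n\ge 5$.)

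The step I expect to be the main obstacle is the rigorous verification of the wall-free case. After the opening substeps have installed the $00$, the order that follows is $\mu_{0}$ restarted from a shifted cell acting on a configuration that may still carry several isles of $1$s, so one must re-run the isle-collapse bookkeeping of Theorem~\ref{thm:104_almost_universal} for these slightly non-standard orders and check that nothing pathological occurs. The remaining ingredients --- that the opening flips do what is claimed, that the reductions among the wall-free sub-cases terminate, and the direct computation for $1^{n}$ --- are routine. A tempting shortcut would be to prove instead that $\mu^{\star}$ alone is universal on odd rings (it behaves well on every small instance with $n\ge 5$), but a uniform description of its dynamics looks at least as delicate, which is why I would organize the result as a covering rather than as a single universal mode.
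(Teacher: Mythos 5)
Your route is genuinely different from the paper's, and the difference matters for how much work is left. The paper's covering is built configuration by configuration so that a fixed point is reached within a \emph{single} step: given $x$, one first updates an interior cell of each isle of $1$s of length at least $3$ (using $f_{104}(111)=0$) so that only isolated $1$s and pairs remain, then updates the isolated $1$s (using $f_{104}(010)=0$), lands on a configuration whose isles of $1$s all have length $2$, and finally updates the untouched cells in any order to complete the permutation. Because the fixed point is reached before the first step is over, no analysis of the long-run dynamics is needed. Your construction instead insists on reusing the fixed sweep $\mu_0=(n-1,\dots,0)$, or a sweep preceded by a few wall-creating substeps, and that is exactly what forces you into the two claims you leave unverified: (1) that every configuration containing $00$ converges under $\mu_0$ on an odd ring, and (2) that once the opening substeps have manufactured a wall, the shifted sweep still collapses all isles over the subsequent steps. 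Neither claim follows by ``carrying over'' the proof of Theorem~\ref{thm:104_almost_universal}: that proof is written under the standing hypothesis that $n$ is even and never isolates ``presence of a wall'' as the property that replaces parity. Both claims are plausible (walls are permanent, so the ring decomposes into independent segments), but they require precisely the iterated isle-collapse bookkeeping you postpone, so as written the argument has a genuine gap at its load-bearing step --- a gap the paper's per-configuration ordering avoids entirely.

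Two further remarks. Your observation about $n=3$ is correct and valuable: there the only fixed point is $0^3$, and from $1^3$ every sequential orbit is trapped among the rotations of $011$ (an isolated $0$ satisfies $f_{104}(101)=1$), so no covering exists and the statement must be read for odd $n\ge 5$; note that the paper's construction also quietly fails whenever it leaves an isolated $0$ between two pairs of $1$s, since such a configuration is not actually fixed. Finally, a small slip: updating a $0$-cell of $(011)^{m}$ merges \emph{two} isles of length $2$ (not three) into one of length $5$.
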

\begin{proof}
Let $x=0^{a_1}1^{b_1}\dots0^{a_k}1^{b_k}$ with $\sum_{i=1}^k a_i+b_i=n$.

By definition, we know that the configuration $0^{n}$ is a fixed point, and that any configuration where $b_i=2$ for all $i\in\{1,\dots,k\}$, is also a fixed point, since $f_{104}(110)=f_{104}(011)=1$ (see Table~\ref{tab:104}).

To reach a fixed point from any initial configuration such as $x$, we can start by dividing the isles of $1$s where $b_i\geq 3$, using that $f_{104}(111)=0$, leaving only $1$s that are isolated or in pairs,~\ie, $x'=0^{a_1'}1^{b_1'}\dots0^{a_{k'}'}1^{b_{k'}'}$ with $\sum_{i=1}^{k'} a_i'+b_i'=n$, where $b_i'\in\{1,2\}$, for all $i\in\{1,\dots,k'\}$.

Then, we can update all isolated $1$s, using that $f_{104}(010)=0$. This results in the configuration $x''=0^{c_1}1^{d_1}\dots0^{c_{\ell}}1^{d_{\ell}}$ with $\sum_{i=1}^{\ell} c_i+d_i=n$, where $d_i=2$, for all $i\in\{1,\dots,\ell\}$.

Since we have already reached a fixed point, we can now update any cells that have not yet been updating, thus completing a sequential update mode.
\end{proof}

\begin{table}[t]
\begin{center}
\begin{tabular}{|c|c|c|c|c|c|c|c|c|}
    \hline
         &111&110&101&100&011&010&001&000  \\\hline
         18 &0&0&0&1&0&0&1&0\\\hline 
         50 &0&0&1&1&0&0&1&0\\\hline 
         74 &0&1&0&0&1&0&1&0\\\hline 
         122&0&1&1&1&1&0&1&0\\\hline 
         146&1&0&0&1&0&0&1&0\\\hline 
         178&1&0&1&1&0&0&1&0\\\hline 
    \end{tabular}    
    \caption{Definition of rules $18$, $50$, $74$, $122$, $146$ and $178$.}
    \label{tab:rules_coverings}
\end{center}
\end{table}
\begin{thm}\label{thm:coverings}
For Rules $18$, $50$, $74$, $122$, $146$ and $178$ there exists a covering of sequential update modes.
\end{thm}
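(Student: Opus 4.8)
The plan is to construct, for each rule $R$ in the list and each configuration $x\in\Bn$, a sequential update mode $\mu_x$ whose orbit from $x$ reaches a fixed point; since $\Bn$ is finite this yields a finite covering. The first step is to read off from Table~\ref{tab:rules_coverings} the fixed points: $0^n$ is a fixed point of all six rules, $1^n$ is one for Rules $146$ and $178$, and $(110)^{n/3}$ is one for Rule $74$ whenever $3\mid n$. By the result of~\cite{Robert1986} these are all the fixed points, but only the fact that they are fixed points is needed below.

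I would then split the six rules according to what updating a cell in state $1$ does. For Rules $18$ and $50$ one checks that $f(a1b)=0$ for every $a,b$, so a cell in state $1$ always becomes $0$: taking $\mu_x$ to update all cells of $x$ that are in state $1$ (in any order) and then all the remaining cells, the first phase sends every $1$ to $0$ while leaving the $0$-cells untouched, so $f_{\mu_x}(x)=0^n$. For Rules $146$ and $178$ one has $f(a1b)=0$ except $f(111)=1$, so a $1$ survives only when both neighbours are $1$; hence for $x\notin\{0^n,1^n\}$ I would decompose $x$ into its maximal runs of $1$s (each flanked by $0$s) and let $\mu_x$ update first all cells in state $1$, running through each such run from one endpoint toward the other so that the already-updated neighbour of the current cell is a $0$. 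Every cell of every run is then sent to $0$, and again $f_{\mu_x}(x)=0^n$; if $x\in\{0^n,1^n\}$ the configuration is already fixed and any mode works. This handles Rules $18$, $50$, $146$, $178$.

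For Rules $74$ and $122$ one has $f(a1b)=0$ for the patterns $010$ and $111$ but $f(011)=f(110)=1$: an isolated $1$ and the interior cell of a run of length at least $3$ are sent to $0$, whereas a run $11$ of exactly two $1$s persists. For these two rules I would build $\mu_x$ over possibly several steps by repeatedly applying, while possible, the reductions: (1) in a run of $1$s of length $\ge 3$, knock out its interior (update alternate cells, then the rest), which leaves only isolated $1$s and runs of length $2$; (2) delete any isolated $1$ by updating it, since $f(010)=0$; (3) dismantle a run $11$ by first updating one of its neighbouring $0$-cells to $1$ — possible since $f(001)=1$ for both rules, and additionally $f(100)=f(101)=1$ for Rule $122$ — which turns the run into a $111$, then knocking out the new middle cell and the two resulting isolated $1$s. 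One shows that a suitable potential (for instance the number of cells in state $1$, evaluated after each completed reduction) strictly decreases, and that the only configurations on which no reduction is available are the fixed points themselves ($0^n$ for Rule $122$; $0^n$ and $(110)^{n/3}$ for Rule $74$), so the process terminates at a fixed point.

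The main obstacle is reduction (3) for Rules $74$ and $122$ when the pair $11$ is hemmed in by further $1$s — patterns such as $1\,0\,11$ or $11\,0\,1$ — where updating the adjacent $0$ merges it into a neighbouring run rather than isolating the pair. There the argument must cascade outward along the ring, and one must prove that the cascade terminates and never lands on a configuration from which no fixed point can be reached; in particular one has to verify that $(110)^{n/3}$ is the only non-trivial ``trap'' for Rule $74$ and that Rule $122$ has none. Once a clean monovariant accommodating the cyclic, densely-packed cases is secured, what remains — arranging, at each step, the chosen substeps into an actual permutation of $\integers{n}$ — is routine.
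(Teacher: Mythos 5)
Your treatment of Rules $18$, $50$, $146$ and $178$ is complete and is essentially the paper's argument: for each $x$ you build one permutation that updates the cells in state $1$ first, in an order guaranteeing that each such cell sees the required $0$ neighbour at its turn, and the $0$-cells afterwards, reaching $0^n$ in a single step. The genuine gap is in Rules $74$ and $122$, and it is precisely the part you defer as ``the main obstacle'': you never actually dismantle a block $11$, and this is not a routine cascade. The deeper issue is that your plan applies different reductions at different steps, whereas a sequential update mode is a \emph{single} permutation repeated at every step, so the ordering constraints needed at step $t$ must be compatible with those needed at step $t+1$. For Rule $74$ this bites at once: from $0^{n-2}11$ the only way to remove the pair is to first grow it (using $f_{74}(001)=1$), cut it with $f_{74}(111)=0$ and kill the survivor with $f_{74}(010)=0$, which leaves an isolated $1$ one cell further along; under the same permutation the next step typically recreates a pair shifted along the ring (with $\mu=(0,1,\dots,n-1)$ the orbit of $0^{n-2}11$ drifts forever and never reaches $0^n$). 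A correct proof must exhibit a permutation whose constraints are simultaneously satisfiable at every step --- this can be done (for $n=4$ the order $(1,2,0,3)$ sends $0011$ to $0100$ and then to $0^4$) --- but it is the whole content of the theorem for these two rules, not a finishing touch.

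For Rule $122$ the situation is worse. Your observation that $0^n$ is its only fixed point is correct: since $f_{122}(001)=f_{122}(100)=1$, a block $11$ flanked by $0$s is \emph{not} fixed, contrary to what the paper's own proof asserts at this point, so you cannot close this case by imitating the paper, and the claim is strictly harder than the paper acknowledges. Moreover, under Rule $122$ every $0$ adjacent to a $1$ becomes a $1$ when updated, and a $1$ dies only when its two neighbours agree; on a ring of size $3$ one checks directly that $011$ cycles under all six sequential update modes, so no covering exists there and the statement itself needs qualification for this rule. In short: four of the six rules are fine, and for the remaining two the step you postpone is exactly where the theorem lives (and, for Rule $122$, where it appears to fail).
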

\begin{proof}
By definition (see Table~\ref{tab:rules_coverings}), we know that $0^n$ is a fixed point for all these rules and additionally $1^n$ is a fixed point for rules $146$ and $178$.

Let $x=0^{a_1}1^{b_1}\dots0^{a_k}1^{b_k}$ with $a_1,b_1,\dots,a_k,b_k\geq0$ such that $\sum_{i=1}^k a_i+b_i+=n$.
\begin{enumerate}
\item Rule $18$: Since $f_{18}(111)=0$ we can start by updating every other cell of the isles of $1$s. The resulting configuration will only have $1$s in pairs or isolated. Then we eliminate the remaining $1$s using that $f_{18}(010)=f_{18}(011)=f_{18}(110)=0$. 
\item Rule $50$: It is identical to Rule $18$.
\item Rule $74$: It is identical to Rule $18$, except that $f_{74}(110)=1$ and thus pairs of $1$s must be updated from right-to-left.
\item Rule $122$: The analysis is the same as for Rule $18$, except that $f_{122}(011)=f_{122}(110)=1$ which means that configurations with pairs of $1$s, such as $x'=0^{c_1}11\dots0^{c_{k-1}}110^{c_k}$ are also fixed points.
\item Rule $146$: Since $f_{146}(011)=f_{146}(110)=0$, we can start by updating the cells of the isles of $1$s from left-to-right (or right-to-left) and given that $f_{146}(010)=0$ the remaining $1$s can be eliminated, thus arriving at $0^n$.
\item Rule $178$: It is identical to Rule $146$.
\end{enumerate}
Note that after eliminating all unwanted $1$s for each of these rules we must still establish an order for updating the cells that belonged to an isle of $0$s in the original configuration, in order to define a sequential update mode.
\end{proof}
\FloatBarrier
\subsubsection{Rules with no Covering}

\begin{table}[t]
\begin{center}
\begin{tabular}{|c|c|c|c|c|c|c|c|c|}
    \hline
         &111&110&101&100&011&010&001&000  \\\hline
         90 &0&1&0&1&1&0&1&0\\\hline 
         106&0&1&1&0&1&0&1&0\\\hline 
    \end{tabular}    
    \caption{Definition of Rules $90$ and $106$.}
    \label{tab:90}
\end{center}
\end{table}
Unlike the rest of the rules that fulfill one of the conditions of Theorem~\ref{thm:DAS}, Rules $90$ and $106$ do not allow all configurations to reach a fixed point, no matter which of the $n!$ sequential update mode is being used.
Indeed, for Rule $90$ (resp. Rule $106$) for rings of size $n=9$ or more, if $n$ is not a power of $2$ (resp. $n$ is not even), there are configurations that cannot reach a fixed point, regardless of the sequential update mode being used. Instead, it appears to be necessary to allow for one cell to be updated more than once per step. {Remark that this does not contradict Theorem~\ref{thm:DAS}~\cite{Sethi2016}, since in it asynchronous update modes are considered, which is a more general class than sequential update modes.}

\begin{thm}\label{thm:rule90}
    Let $\mu$ be any sequential update mode, and let $f_{\mu,90}$ be the discrete dynamical system in which Rule $90$ evolves under the update mode $\mu$. Then, configurations that contain only two $1$s surrounded by $0$s cannot reach a fixed point under sequential update modes.
\end{thm}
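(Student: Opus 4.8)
The plan: exploit that Rule $90$ is the XOR rule $f_{90}(x)_i=x_{i-1}\oplus x_{i+1}$, so every substep $\mathrm{update}_i$ (which changes only cell $i$, to $x_{i-1}\oplus x_{i+1}$) is an $\mathbb{F}_2$-linear endomorphism of $\mathbb{B}^n=\mathbb{F}_2^n$, hence $f_{\mu,90}$ and all its iterates are linear. A configuration ``with only two $1$s surrounded by $0$s'' must be read as an \emph{isle} of two $1$s, that is $0^{a}110^{b}$ up to rotation: for two non-adjacent $1$s the update mode that updates the two $1$-cells first sends the configuration to $0^n$ inside a single step, so the statement genuinely concerns adjacent $1$s. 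Fixing such a configuration and using rotation invariance (which does not touch the universal quantifier on $\mu$), we may assume $x=110^{n-2}=e_0\oplus e_1$. Since by the cited Robert result the fixed points of $f_{\mu,90}$ are exactly those of the parallel rule — namely $0^n$, together with the three rotations of $(110)^{n/3}$ when $3\mid n$ — and $e_0\oplus e_1$ is none of these, it suffices to show that the orbit of $e_0\oplus e_1$ never reaches $0^n$ (the $(110)^{n/3}$ fixed points, when $3\mid n$, are excluded by the same non-vanishing argument below, or one restricts to $3\nmid n$).

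By linearity, reaching $0^n$ amounts to $e_0\oplus e_1$ lying in the generalized $0$-eigenspace $V_0(f_{\mu,90})=\bigcup_{t}\ker f_{\mu,90}^{\,t}$; equivalently the two single-cell orbits $(f_{\mu,90}^{\,t}(e_0))_t$ and $(f_{\mu,90}^{\,t}(e_1))_t$ eventually coincide. In the parallel case this cannot happen unless $n$ is a power of $2$: writing $n=2^km$ with $m$ odd, one has $x^n+1=(x+1)^{2^k}g_0(x)^{2^k}$ over $\mathbb{F}_2$ with $g_0=(x^m+1)/(x+1)$ of degree $m-1$, and the parallel $f_{90}$ is nilpotent on the $(x+1)$-primary CRT factor but invertible on the $g_0$-factor; since the isle corresponds to the polynomial $1+x$, which is nonzero modulo $g_0^{2^k}$ as soon as $m>1$ (degree reasons), $e_0\oplus e_1\notin V_0$ there. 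This recovers the known fact that an isle of two $1$s collapses under the parallel rule precisely when $n$ is a power of $2$, and the content of the theorem is that the same dichotomy holds under any sequential mode.

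Transferring this obstruction to an arbitrary sequential mode $\mu$ is the step I expect to be the crux. The difficulty is that a substep $\mathrm{update}_i$, although linear, is not a multiplication operator on $\mathbb{F}_2[x]/(x^n+1)$ and does not respect its CRT decomposition; in fact no nonzero linear functional on $\mathbb{F}_2^n$ is preserved by every substep, so membership of $e_0\oplus e_1$ in $V_0(f_{\mu,90})$ cannot be ruled out by a single conserved quantity and one must control $V_0(f_{\mu,90})$ — or the orbit itself — by hand. Moreover the statement genuinely requires $n\neq 2^k$ (for instance $1100$ reaches $0000$ under $\mu=(2,0,3,1)$ on the ring of size $4$). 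Two routes I would try are: (a) track the pair of domain walls $d(x)_i=x_i\oplus x_{i+1}$ — the isle has exactly two walls, initially at distance $2$; updating cell $i$ leaves $d_{i-1}\oplus d_i$ unchanged and replaces $(d_{i-1},d_i)$ by $(x_{i+1},x_{i-1})$, so the wall count stays even and the two walls only move by local steps, and one must show they never annihilate on a ring whose length is not a power of $2$ (annihilation being exactly what $x\mapsto 0^n$ forces), by producing a $\mathbb{Z}/n$- or parity-type invariant of the two-wall position that one full pass of $\mu$ moves in a controlled way and that reaches the ``annihilated'' value only when $2^k$ divides a quantity necessarily smaller than $n$, mirroring the parallel divisibility obstruction; or (b) a direct structural induction on the image, under one $\mu$-step, of a configuration made of isles and isolated cells, following the value of the associated polynomial at a primitive $m$-th root of unity $\zeta\in\overline{\mathbb{F}_2}$ (a unit when $m>1$) and showing no single substep can drive it to $0$ along the orbit of $e_0\oplus e_1$.

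Combining these ingredients, for $n$ not a power of $2$ (and, as in the surrounding discussion, $n\geq 9$) the orbit of an isle of two $1$s under any sequential update mode never contains $0^n$, hence never reaches a fixed point, as claimed.
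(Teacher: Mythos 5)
There is a genuine gap: the central step of your argument is never carried out. After setting up the $\mathbb{F}_2$-linear framework and correctly reducing the problem to showing that $e_0\oplus e_1$ (and its translates by the nonzero fixed points) never lies in $\bigcup_t \ker f_{\mu,90}^{\,t}$, you yourself flag the transfer from the parallel case to an arbitrary sequential mode as ``the crux'' and then offer only two candidate strategies --- the domain-wall invariant of route (a) and the root-of-unity evaluation of route (b) --- neither of which is executed. The concluding paragraph asserts the theorem ``combining these ingredients,'' but the decisive ingredient (an invariant or induction that actually rules out annihilation of the two walls under \emph{every} permutation of $\integers{n}$) is missing. The CRT analysis you do complete only governs the parallel operator; as you note, the substeps do not act by multiplication on $\mathbb{F}_2[x]/(x^n+1)$, so nothing you have proven constrains $\ker f_{\mu,90}^{\,t}$ for a general $\mu$. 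As it stands this is a well-motivated plan, not a proof.

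That said, two of your observations are valuable and worth keeping. First, your $n=4$ computation ($1100\to 0000$ under $\mu=(2,0,3,1)$) shows the theorem as literally stated is false without the restriction (present only in the surrounding prose of the paper) that $n$ is not a power of $2$; any correct proof must use that hypothesis, and your polynomial factorization $x^n+1=(x+1)^{2^k}g_0(x)^{2^k}$ is exactly where it would enter. Second, your reading of ``two $1$s surrounded by $0$s'' as an isle of two adjacent $1$s is the right one, since two isolated $1$s die in one step under a suitable order. For comparison, the paper's own proof is a direct combinatorial argument tracking the isle of $1$s and arguing that deleting it would force some cell to be updated twice within a single step, which is impossible sequentially; your algebraic route, if the kernel-membership obstruction were actually established for all $\mu$, would be a genuinely different and more systematic argument, but it is not there yet.
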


\begin{proof}
Let $\mu$ be a sequential update mode.

    Let $x=0^{k_1}110^{k_2}$ such that $k_1+k_2+2=n$. Let $i,j$ be the cells whose state is $1$.

    If the cell to the left of the pair of $1$s updates first, which we will denote $\mu_{i-1}<\mu_{i}$, then we will have three consecutive $1$s. Then we can update $i$ which results in two isolated $1$s and then we update $j$ which leaves us with only one $1$ located in $i-1$. 
    
    Since we want the last one to disappear, we would need that $\mu_{i-2}<\mu_{i-1}$, which would mean that, in the next iteration, cell $i-2$ will change its state to $1$ and we, once again, have two consecutive cells whose state is $1$.

    Analogously, if instead we have that $\mu_{j}<\mu_{j+1}$ we obtain the two consecutive ones located in the cells $j+1$ and $j+1$. Thus, if we follow this strategy, we would need the cell $i-1$ (or $j+1$) to be updated twice before $i-2$ (or $j+2$) can change its state to $1$.

    Similarly, if we force the isle of $1$s to increase to a larger size, we would need that the number of cells whose state will become $1$ to be updated consecutively from left-to-right or right-to-left, depending on the side from which we want to extend the isle, since the only two neighborhoods that increase the number of $1$s are $100$ and $001$ (see Table~\ref{tab:90}). 
    
	But given that we need every other cell to be updated consecutively within an isle for it to disappear (since $f_{90}(111)=0$ and then $f_{90}(010)=0$), using a similar reasoning as before, we can see that the larger isle cannot disappear.
	
Additionally, Rule $90$ under sequential update modes cannot lead a configuration such as $x=0^{k_1}110^{k_2}$ to the fixed points $(011)^{\frac{n}{3}}$, $(101)^{\frac{n}{3}}$ and $(110)^{\frac{n}{3}}$ which, naturally, only exist when the size of the ring is a multiple of $3$. Using a similar reasoning as the previous case, we can see that we would need one order to allow the size of the isle of $1$s to increase and a different order to allow every third cell to be updated, thus creating pairs of $1$s separated by isolated $0$s.

	Therefore, there does not exist a sequential update mode which would allow Rule $90$ to make a configuration composed of an isle of two $1$s surrounded by $0$s delete all $1$s and thus reach the fixed point $0^n$ nor the fixed points $(011)^{\frac{n}{3}}$, $(101)^{\frac{n}{3}}$ and $(110)^{\frac{n}{3}}$.	
	\end{proof}

\begin{corollary}
    There is no covering of sequential update modes for Rule $90$.
\end{corollary}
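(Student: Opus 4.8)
The plan is to obtain the corollary as an immediate consequence of Theorem~\ref{thm:rule90}. First I would unpack the definition of a covering: a finite set $S=\{\mu_0,\dots,\mu_k\}$ of sequential update modes is a covering for Rule $90$ on a ring of size $n$ precisely when \emph{every} configuration $x\in\Bn$ is driven to a fixed point by $f_{\mu,90}$ for at least one $\mu\in S$. Consequently, to rule out the existence of such an $S$ it suffices to exhibit a single configuration that no sequential update mode can bring to a fixed point.

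Such a configuration is handed to us by Theorem~\ref{thm:rule90}: take $x=0^{n-2}11$, a configuration consisting of exactly two $1$s surrounded by $0$s. The theorem guarantees that, for every sequential update mode $\mu$, the orbit of $x$ under $f_{\mu,90}$ never reaches a fixed point. Assume now, for contradiction, that $S$ is a covering of sequential update modes for Rule $90$. Since every member of $S$ is a sequential update mode, Theorem~\ref{thm:rule90} forbids each of them from carrying $x$ to a fixed point, so there is no $\mu\in S$ and no $t\in\N$ with $f_{\mu,90}^t(x)$ a fixed point. This contradicts the defining property of a covering applied to the configuration $x$; hence no covering of sequential update modes exists for Rule $90$.

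The only point requiring a moment's care is that the statement is not vacuous, i.e.\ that Rule $90$ does admit fixed points on the rings under consideration and that the chosen witness is a legitimate element of $\Bn$; both are clear, since $0^n$ is always a fixed point of Rule $90$ and $0^{n-2}11$ is a well-defined configuration whenever $n\ge 2$. I do not anticipate any genuine obstacle here: the whole content of the corollary is already carried by Theorem~\ref{thm:rule90}, and the argument is simply the standard observation that one uniformly ``bad'' configuration defeats every candidate covering at once.
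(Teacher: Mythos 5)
Your argument is correct and is exactly the paper's: the corollary is stated there as an immediate consequence of Theorem~\ref{thm:rule90}, and your write-up merely spells out the standard step that a single configuration unreachable to a fixed point under every sequential update mode defeats any candidate covering. Nothing further is needed.
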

\begin{proof}
    It is direct from Theorem~\ref{thm:rule90}.
\end{proof}

\begin{thm}
For Rule $106$, if the size of the ring is even, then the sequential update mode $\mu=(n-1,n-2,\dots,0)$ is universal.
\end{thm}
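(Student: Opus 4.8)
The plan is to show that $\mu=(n-1,n-2,\dots,0)$ drives every configuration of an even-size ring to $0^n$, after first checking that $0^n$ is the only configuration it could possibly converge to. By~\cite{Robert1986} the fixed points of Rule $106$ coincide with those under the parallel mode, so I describe them directly. Since $f_{106}(a,0,c)=c$, a cell in state $0$ is stable only when its right neighbour is $0$; since $f_{106}(a,1,c)=c\oplus a$, a cell in state $1$ is stable only when exactly one of its neighbours is $1$, i.e.\ in context $011$ or $110$. The second condition forces every maximal run of $1$s to have length exactly $2$ and to be flanked by $0$s; then the first condition, propagated rightward from the $0$ immediately after such a run all the way around the ring, yields $x_{l}=0$ for the first cell $l$ of that run, a contradiction. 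Hence $0^n$ is the unique fixed point, and the statement reduces to proving $f_{\mu,106}^{\,t}(x)=0^n$ for some $t$ and every $x$.

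The engine of the argument is the algebraic identity $f_{106}(a,b,c)=c\oplus(a\wedge b)$. Because $\mu$ updates cells from right to left, when cell $i$ (for $1\le i\le n-1$) is updated its right neighbour already holds its new value while its left neighbour still holds the old one, and cell $0$ is updated last with both neighbours new. Writing $y=f_{\mu,106}(x)$ and $g_j=x_{j-1}\wedge x_j$, this gives the closed form $y_i=x_0\oplus\bigoplus_{j=i}^{n-1}g_j$ for $1\le i\le n-1$ and $y_0=y_1\oplus(y_{n-1}\wedge x_0)$. In words: a cell in state $0$ copies the already-updated value of its right neighbour, so a maximal block of $0$s not containing cell $0$ becomes constant, equal to the updated value just to its right; a cell in state $1$ becomes the XOR of its old left neighbour and its new right neighbour; and the block of $0$s containing cell $0$ becomes constant, equal (when $x_0=0$) to $\bigoplus_{j=1}^{n-1}g_j$.

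From here I would run an isle-tracking analysis in the spirit of the proof of Theorem~\ref{thm:104_almost_universal}. Writing $x\neq 0^n$ cyclically as $0^{a_1}1^{b_1}\cdots 0^{a_k}1^{b_k}$ and applying the sweep formula, one finds that each maximal $1$-block of length $b$ is replaced by the length-$b$ word obtained by reading, from the right, the alternating string $\bar v,v,\bar v,\dots$ of length $b-1$ determined by the carry $v$ entering from the right, followed by a repeat of its last symbol, while the $0$-block to its left is overwritten by the constant $v\oplus[b\text{ even}]$. The carry starts at $0$ on the rightmost $0$-block, is unchanged across odd-length $1$-blocks and flipped across even-length ones, and after one loop around the ring equals $m\bmod 2$, where $m$ is the number of even-length $1$-blocks; this is also the value dumped into the $0$-block around cell $0$ (consistently with $\bigoplus_{j=1}^{n-1}g_j\equiv m\pmod 2$). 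One then introduces a potential $\Phi$ — a count of $1$s reweighted to penalise even-length $1$-blocks and the $0$-runs forced to flip to $1$ — and argues that along every orbit other than the one at $0^n$ the value of $\Phi$ strictly decreases, possibly after a bounded delay, so that the orbit reaches $0^n$.

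The hard part is this last descent step. The number of $1$s is badly non-monotone — a lone pair $0^{k_1}110^{k_2}$ maps to $1^{k_1+2}0^{k_2}$, and configurations can keep growing for several steps before collapsing — so the potential must be chosen carefully and its monotonicity checked through the two awkward features of the sweep: the wrap-around, and the case $x_0=1$, where the seam cuts a $1$-block and the formula has a slightly different shape and must be tracked in parallel. It is precisely in controlling these two features that the hypothesis ``$n$ even'' enters: for odd $n$ the alternating patterns produced by shattering $1$-blocks can close up consistently around the ring and produce genuine cycles — e.g.\ $110\to101\to011\to111\to110$ when $n=3$, mirroring the orbit $(01)1^{n-2}$ exhibited for Rule $104$ in Theorem~\ref{thm:104_almost_universal} — whereas evenness forces the carries to cancel and the $1$-content to be annihilated.
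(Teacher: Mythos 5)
Your setup is sound and, as far as it goes, correct: the identity $f_{106}(a,b,c)=c\oplus(a\wedge b)$ does hold, the resulting characterization of $0^n$ as the unique fixed point is right, and the telescoped sweep formula $y_i=x_0\oplus\bigoplus_{j=i}^{n-1}(x_{j-1}\wedge x_j)$ for the right-to-left update is a genuinely nicer handle on a single step of $f_{\mu,106}$ than anything in the paper, which instead argues by tracking isles and reducing to the case analysis already carried out for Rule $104$ in Theorem~\ref{thm:104_almost_universal}. Your $n=3$ cycle $110\to101\to011\to111\to110$ is also correct and matches the paper's remark that the odd-size counterexample for Rule $104$ transfers to Rule $106$.

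However, there is a genuine gap, and you name it yourself: the descent step. Everything after ``one then introduces a potential $\Phi$'' is a description of a proof you have not given. The potential is never defined; the claim that it strictly decreases ``possibly after a bounded delay'' is the statement of the theorem rephrased, not an argument for it; and the place where the hypothesis that $n$ is even actually enters is only gestured at (``evenness forces the carries to cancel''), not established. Since the $1$-count is, as you correctly observe, badly non-monotone --- $0^{k_1}110^{k_2}$ blows up to $1^{k_1+2}0^{k_2}$ in one sweep --- the choice of $\Phi$ and the verification of its monotonicity through the wrap-around and the $x_0=1$ seam case constitute essentially the entire mathematical content of the theorem, and none of it is present. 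By contrast the paper, however tersely, closes the argument by observing that under Rule $106$ an isle of $0$s can shrink only at the seam (and only when $x_0=1$, spawning an isle of $1$s of the same size) and then invoking the completed isle-evolution analysis of Theorem~\ref{thm:104_almost_universal}. To repair your proof you would need either to exhibit $\Phi$ explicitly and check its decrease case by case on the block decomposition you set up, or to carry out the isle-by-isle analysis showing that on an even ring every isle structure eventually collapses to $0^n$.
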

\begin{proof}
Rule 106 allows isles of $0$s to decrease in size (only at the end of the configuration and only if the state of the first cell of the configuration is $1$), in which case a new isle of $1$s appears of the same size as the original isle of $0$s. From there, we follow a similar analysis to the one used in Theorem~\ref{thm:104_almost_universal}. Note that for rings of odd size, we can use the same counterexample for $f_{\mu,106}$ as we did for $f_{\mu,104}$ in said Theorem.
\end{proof}
\begin{thm}
There is no covering of sequential update modes for Rule $106$, when the size of the ring is odd.
\end{thm}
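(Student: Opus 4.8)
The plan is to produce a single configuration from which no sequential update mode ever reaches a fixed point. Since Rule $106$ has $0^n$ as its \emph{only} fixed point — its passive neighborhoods are $000$, $011$, $100$ and $110$, and a short check shows that the only cyclic word over $\B$ all of whose length-$3$ factors lie in $\{000,011,100,110\}$ is $0^n$; equivalently, by the theorem of Robert recalled earlier the sequential fixed points coincide with the parallel ones, which are just $0^n$ — it is enough to find $x \in \Bn$ such that $f_{\mu,106}^t(x) \neq 0^n$ for every sequential update mode $\mu$ and every $t$. As in Theorem~\ref{thm:rule90}, the candidate is a configuration with exactly two adjacent $1$s surrounded by $0$s, say $x = 0^{n-2}11$.

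First I would record the local moves of Rule $106$ from the identity $f_{106}(a,b,c) = c \oplus ab$: an updated cell holding a $1$ becomes $0$ exactly when its two current neighbors agree (neighborhoods $010$ and $111$); an updated cell holding a $0$ becomes $1$ exactly when its current right neighbor is $1$ (neighborhoods $001$ and $101$); in every other case the cell is unchanged. Three consequences are used repeatedly: (i) the two cells of an isolated pair $11$ are frozen, and the only cell whose update changes such a configuration is the one immediately to the left of the pair, which turns to $1$ — hence an isolated pair can only be extended leftward and never shrinks; (ii) consequently the $1$s can be removed only by first growing a block of length $\geq 3$ and then deleting an interior cell (neighborhood $111$), which splits the block and, iterated, yields isolated $1$s; (iii) an isolated $1$ at cell $p$ can be deleted only by updating $p$ while cell $p-1$ is still $0$, but during any step cell $p-1$ is updated once and, seeing the $1$ at $p$, turns to $1$, re-creating the pair $11$ at $\{p-1,p\}$.

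The core of the argument is then a one-step bookkeeping: using (i)--(iii) I would show that from any configuration of the form ``one isolated pair $11$, all other cells $0$'', every sequential update mode yields, after a full step, a configuration again containing an isolated pair $11$ (shifted by two positions around the ring, once the transient debris has settled), so the orbit never reaches $0^n$. Parity enters because eliminating the pair forces it to be grown leftward around the ring; done completely this produces $1^n$ and then, after the first deletion, an isolated block of length $n-1$, which is \emph{even} precisely when $n$ is odd — and an even-length isolated block, when split apart, always leaves a length-$2$ sub-block, i.e. an isolated pair, so the process cannot terminate at $0^n$. For even $n$ the same maneuver leaves an odd-length block, which \emph{can} be dissolved into isolated $1$s with no leftover pair (this is exactly what the universal mode of the preceding theorem does), which is why the obstruction is specific to odd $n$.

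I expect the main obstacle to be the one-step bookkeeping itself: a naive invariant such as ``the configuration always contains two adjacent $1$s'' is \emph{not} preserved by single-cell updates of arbitrary configurations, so the persistence of a pair along the orbit must be proved by induction on steps, with the case analysis above as the inductive step, and that analysis must be carried out uniformly over all $n!$ orders while controlling the temporary blow-up of the configuration (to $1^n$, or to a long block punctured by several holes). As in Theorem~\ref{thm:rule90}, the sharpest way to state the contradiction is that finishing the deletion inside one step would require updating some cell twice, which a sequential mode forbids. Having proved the claim for $x = 0^{n-2}11$, the absence of a covering of sequential update modes for Rule $106$ with $n$ odd is immediate.
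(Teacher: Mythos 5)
Your proposal takes essentially the same route as the paper: both reduce the statement to the facts that $0^n$ is the only fixed point of Rule $106$ for odd $n$ and that a configuration consisting of a single adjacent pair of $1$s in a sea of $0$s can never reach it under any sequential order, by mimicking the argument given for Rule $90$ (the paper's proof is literally a one-line reference to that argument). Your sketch --- the explicit identification of the passive neighborhoods and the unique fixed point, the leftward-only growth of the pair, and the parity mechanism distinguishing odd from even $n$ --- is if anything more detailed than the published proof, although the inductive invariant you flag as the ``one-step bookkeeping'' is left unproved in both.
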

\begin{proof}
It follows a similar reasoning as the one used for Theorem~\ref{thm:rule90}, where configurations that contain only two isolated $1$s cannot reach $0^n$, which is the only fixed point for this rule, when the size of the ring is odd.
\end{proof}

{So far, we have proven that sequential update modes are enough to ensure the convergence of 48 out of the 50 elementary rules that can be obtained from Theorem~\ref{thm:DAS}. Let us now see what happens with the ones outside of it.}

\subsection{Notable Rules Outside the Theorem}
\begin{table}[t]
\resizebox{\textwidth}{!}{
\begin{tabular}{|c|c|c|c|c|c|c|}
\hline
\multicolumn{2}{|c|}{Outside Theorem}&I&II&III&IV&Total\\\hline
\multirow{6}{*}{\makecell{With\\Fixed\\Points}}&\makecell{$E$-Universal}&$\emptyset$&\makecell{7,15}&$\emptyset$&$\emptyset$&2\\\cline{2-7}
&\makecell{$T$-Universal\\(conjecture)}&$\emptyset$&$\emptyset$&\makecell{45}&$\emptyset$&1\\\cline{2-7}
&\makecell{With $E$-Covering}&$\emptyset$&\makecell{23}&30&$\emptyset$&2\\\cline{2-7}
&\makecell{Without\\Covering}&$\emptyset$&\makecell{6,14,28,29,(37),38,\\46,62,73,108,\\134,142,156}&\makecell{22,60,105,\\126,150}&54,110&20\\\hline
\multicolumn{2}{|c|}{\makecell{Without\\Fixed Points}}&$\emptyset$&\makecell{1,3,9,11,19,\\25,27,33,35,\\41,43,51,57}&$\emptyset$&$\emptyset$&13\\\hline
\end{tabular}}
\caption{Summary of the classification of the rules that do not fulfill any  of the conditions stated by Theorem~\ref{thm:DAS}}
\label{tab:outside-theorem}
\end{table}
Theorem~\ref{thm:DAS} provides a thorough characterization of ECAs for which there is always, for each initial configuration, at least one asynchronous update mode that allows it to converge to a fixed point. However, we have determined additional conditions (that depend on the multiplicity of the size of the ring) under which certain rules that do not fulfill any of the conditions of said theorem are still able to reach fixed points.

\FloatBarrier
\subsubsection{Restriction to Rings of Even Size}
\label{sssec:even}
The rules studied in this section only have (non-isolated) fixed points if the size of the ring is even.  Hence, we cannot generalize the results for these rules, but we can restrict our study to the constraints that allow fixed points to exist. Hence, given $E$ the set of configurations of even size (over a grid of even size), the results of this section are related to $E$-universality and $E$-coverings. 

\begin{table}[t]
\begin{center}
\begin{tabular}{|c|c|c|c|c|c|c|c|c|}
    \hline
         &111&110&101&100&011&010&001&000  \\\hline
         7&0&0&0&0&0&1&1&1\\\hline 
         15&0&0&0&0&1&1&1&1\\\hline 
         23&0&0&0&1&0&1&1&1\\\hline 
         30&0&0&0&1&1&1&1&0\\\hline 
    \end{tabular}    
    \caption{Definition of rules $7$, $15$, $23$ and $30$.}
    \label{tab:rules7_15_23}
\end{center}
\end{table}

{The convergence of Rules $7$, $15$, $23$ and $30$ (along with that of Rules $6$, $14$ and $22$) has been studied under asynchronous updating in~\cite{Sethi2018size}, where it was shown that for a given initial configuration of even size, there is an asynchronous way of updating the cells (which depends on the initial conditions) such that a fixed point can be reached. Here, we will show that sequential update modes are enough to ensure convergence and, moreover, that for Rules $7$ and $15$ there is a particular sequential update mode that guarantees that any initial configuration of even size can always reach a fixed point.}

\begin{lemma}\label{lem:7}
For Rules $7$, $15$ and $23$, there are two fixed points for a ring of size $n$, if and only if $n$ is even: $(01)^{\frac{n}{2}}$ and $(10)^{\frac{n}{2}}$. Rule $30$ has the configuration $0^n$ as a fixed point, in addition to the previous two.
\end{lemma}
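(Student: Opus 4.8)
The plan is to reduce the statement to a direct description of the fixed points of the \emph{parallel} update. Recall from the introduction that, by Robert's result~\cite{Robert1986}, an ECA has exactly the same fixed points under the parallel update as under any block-sequential (hence any sequential) update mode; so it suffices to characterise the configurations $x\in\B^n$ satisfying $f(x_{[i-1,i+1]}) = x_i$ for every $i\in\integers{n}$, which I would do assuming $n\geq 3$ so that each neighbourhood consists of three distinct cells. The core of the argument is then a purely local reading of each rule table: for a fixed value of $x_i$, list the length-$3$ words with that central bit which the rule maps to $x_i$, and deduce the constraint this imposes on $(x_{i-1},x_{i+1})$.

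For Rules $7$, $15$ and $23$ I would show that these local conditions force $x_{i-1}\neq x_i$ for every $i$, so that any fixed point is alternating. For instance, for Rule $15$ the only words with central bit $1$ mapped to $1$ are $010$ and $011$, and the only words with central bit $0$ mapped to $0$ are $100$ and $101$; in each case the leftmost bit is the opposite of the central one. Reading the tables for Rules $7$ and $23$ the same way gives the identical conclusion (with the additional information, not needed here, that a $1$ must be flanked by two $0$s). On the ring $\Z/n\Z$, the relation ``$x_{i-1}\neq x_i$ for all $i$'' has a solution if and only if $n$ is even, and then the only two solutions are $(01)^{\frac{n}{2}}$ and $(10)^{\frac{n}{2}}$; conversely these configurations exhibit only the neighbourhoods $010$ and $101$, both of which are fixed by each of the three rules, which yields the ``if'' direction. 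This settles the three rules.

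For Rule $30$ I would argue as follows. First, $0^n$ is a fixed point for every $n$ because $f_{30}(000)=0$, whereas $1^n$ is never a fixed point because $f_{30}(111)=0$. Now let $x$ be a fixed point containing at least one $1$, say $x_i=1$. The table forces $x_{i-1}=0$ (the central-$1$ words mapped to $1$ are $010$ and $011$), so every $1$ is preceded by a $0$; and for a $0$-cell the only admissible words are $000$ and $101$, i.e.\ $x_{j-1}=x_{j+1}$ whenever $x_j=0$. Combining these, from $x_i=1$ one gets $x_{i+1}=0$ (otherwise cell $i+1$, being a $1$, would force $x_i=0$), then $x_{i+2}=x_i=1$, and iterating propagates the alternating pattern around the whole cycle, so $n$ must be even and $x\in\{(01)^{\frac{n}{2}},(10)^{\frac{n}{2}}\}$. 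Together with the converse check of the alternating configurations (valid for Rule $30$ as well, since $010\mapsto 1$ and $101\mapsto 0$) and the remark on $0^n$, this gives exactly three fixed points when $n$ is even and exactly one, namely $0^n$, when $n$ is odd.

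None of these steps is genuinely difficult — they amount to table look-ups plus a short propagation argument — so I do not expect a real obstacle; the only points requiring care are verifying that the alternating constraint closes up consistently around the cycle (which is precisely what pins down the parity of $n$) and, for Rule $30$, handling cleanly the case split ``$x$ contains a $1$'' versus ``$x=0^n$''.
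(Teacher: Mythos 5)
Your proposal is correct and follows essentially the same route as the paper, which simply asserts that the lemma ``is direct from the definitions'' of the four rules: your local reading of the rule tables (every $1$ forces a neighbouring $0$ and vice versa, hence an alternating configuration that closes up only for even $n$, plus the separate check that $f_{30}(000)=0$) is exactly the table inspection the paper leaves implicit. No gaps; you have merely written out the details the authors omitted.
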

\begin{proof}
    It is direct from the definitions if these four Rules (see Table~\ref{tab:rules7_15_23}).
\end{proof}
\begin{thm}
    Given a ring of even size $n$, $\mu=(0,1,2,\dots,n-2,n-1)$ is an $E$-universal sequential update mode for Rules $7$ and $15$.
\end{thm}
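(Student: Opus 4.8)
The plan is to exploit the very restricted form of these two rules, read off from Table~\ref{tab:rules7_15_23}: Rule $15$ depends only on the left neighbour, $f_{15}(a,b,c)=\overline{a}$, while Rule $7$ is the refinement $f_{7}(a,b,c)=\overline{a}\wedge\overline{b\wedge c}$, \ie~it agrees with $f_{15}$ except that it outputs $0$ when a cell and its right neighbour are both $1$. Fix $\mu=(0,1,\dots,n-1)$. One step of $f_{\mu,R}$ then updates cell $0$ from the \emph{old} value of cell $n-1$, updates each cell $i\in\{1,\dots,n-2\}$ from the \emph{new} value of cell $i-1$ together with the old values of cells $i$ and $i+1$, and updates cell $n-1$ from the new values of cells $n-2$ and $0$. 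Note that Theorem~\ref{thm:DAS} gives nothing here, since $000$ is active for both rules, so a direct argument is needed.

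For Rule $15$ a single step suffices. Writing $y=f_{\mu,15}(x)$, the update rule gives $y_0=\overline{x_{n-1}}$ and $y_i=\overline{y_{i-1}}$ for all $i\in\{1,\dots,n-1\}$, hence $y=(\overline{x_{n-1}},x_{n-1},\overline{x_{n-1}},\dots)$; since $n$ is even this alternating word is consistent around the ring, so $y$ equals $(01)^{\frac{n}{2}}$ or $(10)^{\frac{n}{2}}$, which is a fixed point by Lemma~\ref{lem:7}.

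For Rule $7$ I would establish convergence in at most two steps via two facts. First, since $f_7(1,b,c)=0$, the recursion $y_i=f_7(y_{i-1},x_i,x_{i+1})$ (and $y_{n-1}=f_7(y_{n-2},x_{n-1},y_0)$) yields $y_{i-1}=1\Rightarrow y_i=0$ for every $i\in\{1,\dots,n-1\}$; thus after one step the word $y$ contains no factor $11$ among the pairs $(y_0,y_1),\dots,(y_{n-2},y_{n-1})$, only the wrap-around pair $(y_{n-1},y_0)$ possibly being $11$. Second, from any configuration $v$ with $v_{j-1}v_j\neq 11$ for all $j\in\{1,\dots,n-1\}$, one more step of $f_{\mu,7}$ reaches a fixed point: with $w=f_{\mu,7}(v)$ one gets $w_0=\overline{v_{n-1}}$ (using $v_0v_1\neq 11$, whether or not $v_{n-1}=1$), then $w_i=\overline{w_{i-1}}$ for $1\le i\le n-2$ (using $v_iv_{i+1}\neq 11$), and finally $w_{n-1}=\overline{w_{n-2}}$ (using that $v_{n-1}$ and $w_0=\overline{v_{n-1}}$ are not both $1$); exactly as for Rule $15$, and again using $n$ even, this forces $w=(01)^{\frac{n}{2}}$ or $(10)^{\frac{n}{2}}$, a fixed point by Lemma~\ref{lem:7}. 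Combining the two facts, an arbitrary $x\in\Bn$ maps after one step to a configuration satisfying the hypothesis of the second fact, and hence to a fixed point after the second step.

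The conceptual content is just the two one-line observations $f_7(1,b,c)=0$ and ``$b=0$ or $c=0$ implies $f_7(a,b,c)=\overline{a}$''. The only place requiring care --- and the main, minor, obstacle --- is the bookkeeping around the two ``seam'' cells $0$ and $n-1$, whose updates mix new and old values, and using the parity of $n$ exactly once to close the alternating pattern around the ring. (Incidentally, the two-step bound for Rule $7$ is tight: $0^n$ reaches a fixed point after one step, while $1^n$ needs two.)
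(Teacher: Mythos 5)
Your proof is correct, and it takes a genuinely different route from the paper's. The paper argues by exhaustive case analysis on the isle decomposition $x=0^{a_1}1^{b_1}\dots0^{a_k}1^{b_k}$, tracking the parities of all the $a_i,b_i$ through explicit computations of $f_{\mu,7}(x)$ and $f_{\mu,7}^2(x)$ (eight sub-cases already for two isles of each kind, plus separate treatment of configurations beginning or ending differently), and then dispatches Rule $15$ as ``very similar but one step faster.'' You instead extract the algebraic form of the local rules from Table~\ref{tab:rules7_15_23} --- $f_{15}(a,b,c)=\overline{a}$ and $f_{7}(a,b,c)=\overline{a}\wedge\overline{b\wedge c}$ --- and observe that the left-to-right sweep turns the step into the first-order recursion $y_i=\overline{y_{i-1}}$, immediately for Rule $15$ and after one cleanup step (eliminating all non-wrap-around $11$ factors via $f_7(1,b,c)=0$) for Rule $7$. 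Your handling of the two seam cells $0$ and $n-1$, and of the single use of the parity of $n$ to close the alternation around the ring, is accurate; I also verified that the wrap-around pair $(y_{n-1},y_0)$ can indeed be $11$ after the first step, so the intermediate fact is stated with exactly the right hypothesis. What your approach buys is a short, case-free argument with a tight and explicit convergence bound (one step for Rule $15$, two for Rule $7$); what the paper's approach buys is the explicit image configuration for each isle pattern, which has some descriptive value but is not needed for the statement. Both reduce at the end to Lemma~\ref{lem:7} to identify $(01)^{\frac{n}{2}}$ and $(10)^{\frac{n}{2}}$ as fixed points.
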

\begin{proof}
From Lemma~\ref{lem:7}, we know that $x=(01)^{\frac{n}{2}}$ and $y=(10)^{\frac{n}{2}}$ are fixed points.

    Let $x=0^n$ and $\mu=(0,1,2,\dots,n-2,n-1)$. Since $f_{7}(000)=1$, it is easy to calculate that $f_{\mu,7}(x) = (10)^{\frac{n}{2}}$. Similarly, for $x=1^n$ we have that $f^2_{\mu,7}(x)=(01)^{\frac{n}{2}}$.

    Next, let $x=0^{a}1^{b}$ such that $a+b=n$, which means that $a$ and $b$ must either both be even or both be odd.
    
    If they are odd, then:
    \begin{equation*}
        \begin{split}
            f_{\mu,7}(x) =&\hspace{5pt} (01)^{\frac{a-1}{2}}0^b1\text{, and}\\
            f_{\mu,7}^2(x) =&\hspace{5pt} (01)^{\frac{a-1}{2}}(01)^{\frac{b-1}{2}}01\text{.}\\
        \end{split}
    \end{equation*}
    
    If they are even, then:
    \begin{equation*}
        \begin{split}
            f_{\mu,7}(x) =&\hspace{5pt} (01)^{\frac{a}{2}}0^{b-1}1\text{, and}\\
            f_{\mu,7}^2(x) =&\hspace{5pt} (01)^{\frac{a}{2}}(01)^{\frac{b}{2}}\text{.}\\
        \end{split}
    \end{equation*}

    Now, if we consider $x=0^{a_1}1^{b_1}0^{a_2}1^{b_2}$ such that $a_1+a_2+b_1+b_2=n$. We know that there must be an even number of odd sequences. 
\begin{enumerate}
\item If all are even, then:
    \begin{equation*}
        \begin{split}
            f_{\mu,7}(x) =&\hspace{5pt} (01)^{\frac{a_1}{2}}0^{b_1-1}1(01)^{\frac{a_2}{2}}0^{b_2-1}1\text{, and}\\
            f_{\mu,7}^2(x) =&\hspace{5pt} (01)^{\frac{a_1}{2}}(01)^{\frac{b_1}{2}}(01)^{\frac{a_2}{2}}(01)^{\frac{b_2}{2}}\text{.}\\
        \end{split}
    \end{equation*}
\item If $a_1,b_1$ are odd, and the rest are even, then:
    \begin{equation*}
        \begin{split}
            f_{\mu,7}(x) =&\hspace{5pt} (01)^{\frac{a_1-1}{2}}0^{b_1-1}(01)^{\frac{a_2+2}{2}}0^{b_2-1}1\text{, and}\\
            f_{\mu,7}^2(x) =&\hspace{5pt} (01)^{\frac{a_1-1}{2}}(01)^{\frac{b_1-1}{2}}(01)^{\frac{a_2+2}{2}}(01)^{\frac{b_2}{2}}\text{.}\\
        \end{split}
    \end{equation*}
    \item If $a_1,a_2$ are odd, and the rest are even, then:
    \begin{equation*}
        \begin{split}
            f_{\mu,7}(x) =&\hspace{5pt} (01)^{\frac{a_1-1}{2}}0^{b_1}1(01)^{\frac{a_2-1}{2}}0^{b_2}1\text{, and}\\
            f_{\mu,7}^2(x) =&\hspace{5pt} (01)^{\frac{a_1-1}{2}}(01)^{\frac{b_1}{2}}(01)^{\frac{a_2+1}{2}}(01)^{\frac{b_2}{2}}\text{.}\\
        \end{split}
    \end{equation*}
    \item If $a_1,b_2$ are odd, and the rest are even, then:
    \begin{equation*}
        \begin{split}
            f_{\mu,7}(x) =&\hspace{5pt} (01)^{\frac{a_1-1}{2}}0^{b_1}1(01)^{\frac{a_2}{2}}0^{b_2-1}1\text{, and}\\
            f_{\mu,7}^2(x) =&\hspace{5pt} (01)^{\frac{a_1-1}{2}}(01)^{\frac{b_1}{2}}(01)^{\frac{a_2}{2}}(01)^{\frac{b_2+1}{2}}\text{.}\\
        \end{split}
    \end{equation*}
    \item If $b_1,a_2$ are odd, and the rest are even, then:
    \begin{equation*}
        \begin{split}
            f_{\mu,7}(x) =&\hspace{5pt} (01)^{\frac{a_1}{2}}0^{b_1-1}1(01)^{\frac{a_2-1}{2}}0^{b_2-1}1\text{, and}\\
            f_{\mu,7}^2(x) =&\hspace{5pt} (01)^{\frac{a_1}{2}}(01)^{\frac{b_1+1}{2}}(01)^{\frac{a_2}{2}}(01)^{\frac{b_2-1}{2}}\text{.}\\
        \end{split}
    \end{equation*}
    \item If $b_1,b_2$ are odd, and the rest are even, then:
    \begin{equation*}
        \begin{split}
            f_{\mu,7}(x) =&\hspace{5pt} (01)^{\frac{a_1}{2}}0^{b_1-1}1(01)^{\frac{a_2}{2}}0^{b_2-1}1\text{, and}\\
            f_{\mu,7}^2(x) =&\hspace{5pt} (01)^{\frac{a_1}{2}}(01)^{\frac{b_1+1}{2}}(01)^{\frac{a_2}{2}}(01)^{\frac{b_2-1}{2}}\text{.}\\
        \end{split}
    \end{equation*}
    \item If $a_2,b_2$ are odd, and the rest are even, then:
    \begin{equation*}
        \begin{split}
            f_{\mu,7}(x) =&\hspace{5pt} (01)^{\frac{a_1}{2}}0^{b_1-1}1(01)^{\frac{a_2-1}{2}}0^{b_2}1\text{, and}\\
            f_{\mu,7}^2(x) =&\hspace{5pt} (01)^{\frac{a_1}{2}}(01)^{\frac{b_1}{2}}(01)^{\frac{a_2+b_2}{2}}\text{.}\\
        \end{split}
    \end{equation*}
\item If all are odd, then:
    \begin{equation*}
        \begin{split}
            f_{\mu,7}(x) =&\hspace{5pt} (01)^{\frac{a_1-1}{2}}0^{b_1}1(01)^{\frac{a_2-1}{2}}0^{b_2}1\text{, and}\\
            f_{\mu,7}^2(x) =&\hspace{5pt} (01)^{\frac{a_1+b_1}{2}}(01)^{\frac{a_2+b_2}{2}}\text{.}\\
        \end{split}
    \end{equation*}
\end{enumerate}
Now, let $x=0^{a_1}1^{b_1}\dots0^{a_k}1^{b_k}$ with $\sum_{i=1}^k a_i+b_i=n$.
After the first step of $f_{\mu,7}$, all isles of $0$s of even number will become sequences of $(01)$ of half the length of the original isle, while the ones of odd length will become sequences of $(01)$ of half the original length minus $1$. At the same time, all isles of $1$s will become isles of $0$s whose length will depend on the parity of the isle of $0$s to its left, followed by one $1$ at the end. 

In the next step, as we saw in the case with two isles of $1$s and two of $0$s, the isles of $0$s belonging to $f_{\mu,7}(x)$ become sequences of $(01)$ that follow the order established by the first isle of $0$s of the original configuration. From where we obtain that $f_{\mu,7}^2(0^{a_1}1^{b_1}\dots0^{a_k}1^{b_k})=(01)^{\frac{n}{2}}$.

Let us start with $x=0^{a_1}1^b0^{a_2}$, since $f_7(000)=1$, the first cell becomes $1$, and thus 
$f_{\mu,7}(x)=(10)^{\frac{a_1-1}{2}}10^{b-1}(10)^{\frac{a_2+1}{2}}$ if $a_1,a_2$ are odd, 
and $f_{\mu,7}(x)=(10)^{\frac{a_1}{2}}0^{b-1}(10)^{\frac{a_2+1}{2}}1$ if $a_2,b$ are odd, 
$f_{\mu,7}(x)=(10)^{\frac{a_1-1}{2}}10^{b-1}1(01)^{\frac{a_2}{2}}1$ if $a_1,b$ are odd. If $a_2$ is odd (first two subcases), the result is $f_{\mu,7}(x)=(10)^{\frac{n}{2}}$, if $a_2$ is even, it is $f_{\mu,7}(x)=(01)^{\frac{n}{2}}$.

In the case with  $x=0^{a_1}1^{b_1}\dots0^{a_k}1^{b_k}0^{a_{k+1}}$ with $\sum_{i=1}^k (a_i+b_i)+a_{k+1}=n$, the first sequence will be $10$ instead of $01$, and then the isles of $1$s become isles of $0$s, from where all configurations where $a_{k+1}$ is odd result in $f_{\mu,7}(x)=(10)^{\frac{n}{2}}$, while if $a_{k+1}$ is even, then the result is $f_{\mu,7}(x)=(01)^{\frac{n}{2}}$.

It is easy to see that if we start with $x'=1^{b_1'}0^{a_1'}\dots1^{b_k'}0^{a_k'}1^{b_{k+1}'}$ with $\sum_{i=1}^k (a_i'+b_i') +b_{k+1}'=n$, then $f_{\mu,7}(x')$ will start with an isle of $0$s and it will end in a $1$, that corresponds to the case  $x=0^{a_1}1^{b_1}\dots0^{a_k}1^{b_k}$ with $\sum_{i=1}^k a_i+b_i=n$, which we have already studied.

    Therefore, any initial configuration can reach a fixed point under $f_{\mu,7}$.

    The analysis for Rule 15 is very similar, except that since $f_{15}(011)=1$, we are able to reach the fixed point after just one step.
\end{proof}
\begin{thm}
    For Rules $23$ and $30$, if the size of the ring is even, there exists an $E$-covering subset of sequential update modes.
\end{thm}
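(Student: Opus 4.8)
The plan exploits that a covering need not consist of a single mode: it suffices to produce, for each even $n$ and each $x\in\B^n$, one sequential update mode $\sigma_x$ whose orbit on $x$ reaches a fixed point of Lemma~\ref{lem:7} --- for Rule~$23$ one of $(10)^{n/2}$, $(01)^{n/2}$, and for Rule~$30$ additionally $0^n$ --- after which $S=\{\sigma_x:x\in\B^n\}$ is an $E$-covering by definition. I would moreover arrange that the fixed point is reached already at the end of the first step of $\sigma_x$, so that the later iterations require no analysis. The argument runs parallel to the preceding treatment of Rules~$7$ and~$15$; the genuinely new feature is that here no single mode works for all configurations --- for Rule~$23$ this is caused by the value $f_{23}(100)=1$, which makes the sweep $(0,1,\dots,n-1)$ fail to self-stabilise (for instance it cycles from $0^n$, unlike for Rule~$7$) --- which is exactly why one only gets a covering and not a universal or $E$-universal mode.

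To build $\sigma_x$, I would write $x$ in block form $x=0^{a_1}1^{b_1}\cdots 0^{a_k}1^{b_k}$, cut the ring at the places where the alternation $\dots 0101\dots$ breaks (the occurrences of $11$ and $00$), and take the target $T$ to be whichever of the two alternating configurations agrees in phase with the majority of the resulting runs (for Rule~$30$, $0^n$ is kept as a fall-back for runs that cannot be re-phased). The mode $\sigma_x$ then visits the $n$ cells in one step in three passes. Pass~1 sweeps left-to-right through every block of $1$s of length $\ge 2$ and through every run out of phase with $T$, using $f_{23}(111)=f_{23}(011)=0$ (resp.\ $f_{30}(111)=0$) to delete a $1$ and $f_{23}(100)=f_{23}(000)=1$ (resp.\ $f_{30}(100)=f_{30}(001)=f_{30}(011)=1$) to recreate one in the adjacent even cell, so that each such piece ends up aligned with $T$. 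Pass~2 seeds a $1$ on every even cell still equal to $0$: after Pass~1 such a cell has both neighbours equal to $0$, and $f_{23}(000)=1$ applies; for Rule~$30$, whose local map is $f_{30}(abc)=a\oplus(b\vee c)$, an even $0$-cell whose two neighbours are already $0$ cannot be turned to $1$, and it is precisely these configurations that are routed to the target $0^n$. Pass~3 visits the remaining cells, each odd one seeing the neighbourhood $1\,\cdot\,1$ and mapping to $0$ by $f_{23}(101)=f_{23}(111)=0$ (resp.\ $f_{30}(101)=f_{30}(111)=0$), each correctly placed even $1$ seeing $010$ and being kept by $f_{23}(010)=f_{30}(010)=1$. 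Since the three passes together visit each cell exactly once, $\sigma_x$ is a bona fide sequential update mode, and at the end of its first step the configuration equals $T$ (or $0^n$), a fixed point, so the orbit remains there.

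I expect the main obstacle to be Pass~1 together with the parity bookkeeping at the ends of the runs and across the cyclic junction between $1^{b_k}$ and $0^{a_1}$: just as for Rules~$7$ and~$15$, which $T$ is reachable --- and, for Rule~$30$, when $0^n$ must be used instead --- depends on the residues of $a_1,b_1,\dots,a_k,b_k$ and of $k$ modulo $2$, so the proof is a routine but lengthy case distinction on the block/run structure rather than one uniform mode. The additional care for Rule~$30$ is its XOR structure, which makes Pass~2 non-monotone (an even $0$-cell can be turned to $1$ only while it still has a $1$-neighbour) and so constrains the order of Passes~1 and~2; apart from this the two rules are treated identically, because they agree on the neighbourhoods $010$, $101$, $111$ that govern Passes~2 and~3, and the completion of $\sigma_x$ by the untouched cells is the same device already used in Theorem~\ref{thm:coverings}.
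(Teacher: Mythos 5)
Your treatment of Rule $23$ is essentially the paper's: one mode per configuration, a preliminary phase correction, thinning the isles of $1$s via $f_{23}(111)=0$, seeding the isles of $0$s via $f_{23}(000)=1$ by updating every other cell, and finishing with updates that cannot change anything. Organizing this around a fixed target parity instead of the block structure is cosmetic, and your observation that the uniform sweep fails (so only a covering is available) is consistent with the paper.

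The Rule $30$ half, however, has a genuine gap: your fall-back target $0^n$ is not reachable. For Rule $30$ a cell in state $1$ maps to $a\oplus(1\vee c)=\bar a$, so it can be switched off only if its left neighbour holds a $1$ at the moment it is updated. Consider the (last) step that would end at $0^n$ and any cell $i$ that begins that step in state $1$. Since each cell is updated exactly once per step, the left neighbour's value when $i$ is updated is either its initial value (if not yet updated) or its final value $0$ (if already updated); to kill the $1$ at $i$ it must therefore be un-updated and initially $1$, i.e.\ cell $i-1$ starts at $1$ and is updated after cell $i$. Iterating around the ring produces a cyclic descending chain in the update order, a contradiction. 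Hence $0^n$ is an isolated fixed point of Rule $30$ under sequential modes, and every configuration you propose to ``route to $0^n$'' (for instance $0^{n-1}1$) is stranded by your scheme. The obstruction you correctly identified --- $f_{30}(000)=0$ prevents seeding a $1$ between two $0$s --- must be resolved the other way around: first grow the existing $1$s through the $0$-isles (updating each isle of $0$s left to right, $f_{30}(100)=1$ converts the whole isle to $1$s except its last cell), and only then carve out the alternating pattern; this is what the paper does, at the price of a second application of the same sequential mode rather than reaching the fixed point within one step. Dropping your self-imposed one-step requirement and inserting this growth phase repairs the argument.
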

\begin{proof}
By definition of Rules $23$ and $30$, these are the fixed points for a ring of size $n$, if and only if $n$ is even: $(01)^{\frac{n}{2}}$ and $(10)^{\frac{n}{2}}$. Additionally, $0^n$ is always a fixed point for Rule $30$ ($f_{30}(000)=0$). Now, let $x=0^{a_1}1^{b_1}\dots 0^{a_k}1^{b_k}$, with $a_1,b_1,\dots,a_k,b_k\in\{0,\dots,n\}$ such that $\sum_{i=1}^k a_i+b_i=n$ an arbitrary initial configuration.

In this proof, a sequential update mode which leads $x$ to a fixed point is constructed.

In the case of Rule $23$, we start by updating the first cell of each isle of $1$s if it is necessary to ensure that the number of cells between the first $1$ of each isle is odd. This would increase the size of the corresponding isle of $0$s, since $f_{23}(011)=0$. Let $L$ be the set of cells that need to be updated.
The resulting configuration is $x'=x^{0.|L|}=0^{a_1'}1^{b_1'}\dots 0^{a_k'}1^{b_k'}$, with $a_1',b_1',\dots,a_k',b_k'\in\{0,\dots,n\}$ such that $\sum_{i=1}^k a_i'+b_i'=n$ and $a_{i+1}'+b_i'\equiv 0\mod2$, for all $i\in\{1,\dots,k-1\}$.

Then, we must update every other cell of the isles of $1$s (using that $f_{23}(111)=0$), which leaves only isolated $1$s. Note that if the isle if of even size, we would have to update the last cell of the isle (since $f_{23}(110)=0$). The result is as follows:
\begin{equation*}
\begin{split}
x^{0.|L|+1}&=0^{a_1'}0101^{b_1'-2}\dots 0^{a_k'}1^{b_k'}\text{,}\\
x^{0.|L|+2}&=0^{a_1'}(0101)01^{b_1'-4}\dots 0^{a_k'}1^{b_k'}\text{,}\\
x^{0.|L|+3}&=0^{a_1'}(01)^301^{b_1'-6}\dots 0^{a_k'}1^{b_k'}\text{,}\\
\vdots&\\
x^{0.|L|+\left\lfloor\frac{b_1'}{2}\right\rfloor}&=0^{a_1'}(01)^{\left\lfloor\frac{b_1'}{2}\right\rfloor}0^{a_2'}1^{b_2'}\dots 0^{a_k'}1^{b_k'}\text{.}\\
\end{split}
\end{equation*}
We continue as such with the rest of the isles of $1$s, after which we have:
 $$x''=x^{0.|L|+\sum_{i=1}^k\left\lfloor\frac{b_i'}{2}\right\rfloor}=0^{a_1'}(01)^{\left\lfloor\frac{b_1'}{2}\right\rfloor}0^{a_2'}(01)^{\left\lfloor\frac{b_2'}{2}\right\rfloor}\dots 0^{a_k'}(01)^{\left\lfloor\frac{b_k'}{2}\right\rfloor}\text{.}$$

Afterwards, we update every other cell of the isles of $0$s, starting with the first one:

\begin{equation*}
\begin{split}
x^{0.|L|+\sum_{i=1}^k\left\lfloor\frac{b_i'}{2}\right\rfloor+1}&=010^{a_1'-2}0101^{b_1'-2}\dots 0^{a_k'}1^{b_k'}\text{,}\\
x^{0.|L|+\sum_{i=1}^k\left\lfloor\frac{b_i'}{2}\right\rfloor+2}&=(01)^20^{a_1'-4}(0101)01^{b_1'-4}\dots 0^{a_k'}1^{b_k'}\text{,}\\
x^{0.|L|+\sum_{i=1}^k\left\lfloor\frac{b_i'}{2}\right\rfloor+3}&=(01)^30^{a_1'}(01)^301^{b_1'-6}\dots 0^{a_k'}1^{b_k'}\text{,}\\
\vdots&\\
x^{0.|L|+\sum_{i=1}^k\left\lfloor\frac{b_i'}{2}\right\rfloor+\left\lfloor\frac{a_1'}{2}\right\rfloor}&=(01)^{\left\lfloor\frac{a_1'}{2}\right\rfloor}(01)^{\left\lfloor\frac{b_1'}{2}\right\rfloor}0^{a_2'}1^{b_2'}\dots 0^{a_k'}1^{b_k'}\text{.}\\
\end{split}
\end{equation*}

Continuing as such we obtain a configuration $x'''=(01)^{\frac{n}{2}}$. Finally, since we have already reached the fixed point, we proceed to update the cells that have not been updated yet and since $f_{23}(010)=1$ and $f_{23}(101)=0$, their respective state cannot change, and thus we have found a sequential update mode such that $f_{\mu,23}(x)=(01)^{\frac{n}{2}}$. 

Note that if the initial configuration is instead $y=1^{a_1}0^{b_1}\dots 1^{a_k}0^{b_k}$, with $a_1,b_1,\dots,a_k,b_k\in\{0,\dots,n\}$ such that $\sum_{i=1}^ka_i+b_i=n$, we can instead obtain $f_{\mu,23}(y)=(10)^{\frac{n}{2}}$ following a similar reasoning.\bigskip

In the case of Rule $30$, we update the last cell of each isle of $0$s, using that $f_{30}(001)=1$, if it is needed to ensure that the number of cells between the first $1$ of each isle is odd.
Let $L$ be the set of cells that we need to update.

Let $x'=x^{0.|L|}=0^{a_1}1^{b_1}0^{a_2'}1^{b_2'}\dots0^{a_k'}1^{b_k'}$, with $a_2',b_2',\dots,a_k',b_k'$ the new size of the isles, where $a_i'=a_i-1$ and $b_i'=b_i+1$, if the change was necessary.

Then, we update the cells of the first isle of $0$s from left to right, using that $f_{30}(100)=1$ to turn the isles of $0$s into isles of $1$s:
\begin{equation*}
\begin{split}
x^{0.|L|}&=0^{a_1}1^{b_1}0^{a_2'}1^{b_2'}\dots0^{a_k'}1^{b_k'}\text{,}\\
x^{0.|L|+1}&=10^{a_1-1}1^{b_1}0^{a_2'}1^{b_2'}\dots0^{a_k'}1^{b_k'}\text{,}\\
x^{0.|L|+2}&=1^20^{a_1-2}1^{b_1}0^{a_2'}1^{b_2'}\dots0^{a_k'}1^{b_k'}\text{,}\\
\vdots&\\
x^{0.|L|+a_1-1}&=1^{a_1-1}0^{1}1^{b_1}0^{a_2'}1^{b_2'}\dots0^{a_k'}1^{b_k'}\text{,}\\
\end{split}
\end{equation*}
and since $f_{30}(101)=0$, the last $0$ of the first isle does not change and thus we have 
$$x^{0.|L|+a_1}=1^{a_1-1}0^{1}1^{b_1}0^{a_2'}1^{b_2'}\dots0^{a_k'}1^{b_k'}\text{.}$$

Then, we continue with each isle of $0$s, to create a configuration such as $$x''=x^{0.\sum_{i=1}^k a_i}=1^{a_1-1}01^{b_1}\dots1^{a_k'-1}01^{b_k'}\text{.}$$ 
Note that we now have updated all cells whose state was originally $0$. 

Next, we update the cells corresponding to the original first isle of $1$s from left to right, using that $f_{30}(011)=1$ and $f_{30}(111)=0$, which creates the sequence $(10)^{\left\lfloor\frac{b_1}{2}\right\rfloor}$, as follows:
\begin{equation*}
\begin{split}
x^{0.1+\sum_{i=1}^k (a_i)}&=1^{a_1-1}01^{b_1}\dots1^{a_k'-1}01^{b_k'}\text{,}\\
x^{0.2+\sum_{i=1}^k (a_i)}&=1^{a_1-1}0101^{b_1-1}\dots1^{a_k'-1}01^{b_k'}\text{,}\\
\vdots&\\
x^{0.b_1+\sum_{i=1}^k (a_i)}&=1^{a_1-1}0(10)^{\left\lfloor\frac{b_1}{2}\right\rfloor}\dots1^{a_k'-1}01^{b_k'}\text{.}\\
\end{split}
\end{equation*} 
Then we can continue to update from left-to-right each isle of $1$s of the original configuration, after which we obtain
$$f_{\mu,30}(x)=x^1=1^{a_1-1}0(10)^{\left\lfloor\frac{b_1}{2}\right\rfloor}\dots1^{a_k'-1}0(10)^{\frac{b_k'}{2}}\text{.}$$
At this point we have updated all the cells whose original state was $1$, and therefore, we have updated all the cells of the ring.
 
If we apply one step of the rule once again, following the update mode that we have previously established, we obtain $f_{\mu,30}^{2}(x)=(01)^{\frac{n}{2}}$.  

Analogously, if the initial configuration is  $y=1^{a_1}0^{b_1}\dots 0^{a_k}1^{b_k}$, with $a_1,b_1,\dots,a_k,b_k\in{0,\dots,n}$ such that $\sum_{i=1}^ka_i+b_i=n$, we can obtain $f_{\mu,30}^2(y)=(10)^{\frac{n}{2}}$ following a similar reasoning.

Thus, we have found a subset of sequential update modes that ensure that every configuration can reach a fixed point.
\end{proof}

\subsubsection{Restriction to Rings of Size Multiple of $3$}

Rule $45$ is quite close to the three previous rules in the sense that it is possible to make it converge only under specific conditions. Indeed, for Rule $45$ to admit fixed points, Lemma~\ref{lem:45} below, states that the size of the grid must be a multiple of $3$. Thus, let $T\subsetneq\Bn$ the set of configuration whose size is a multiple of $3$. {Note that the convergence of Rule $45$ (and $37$) over rings whose size is a multiple of $3$ has been studied in~\cite{Sethi2018size} for the asynchronous case. Here we will show what occurs under periodic update modes for Rule $45$, along with conjectures for sequential update modes for both Rules $37$ and $45$. Thus, in this section we will focus on $T$-coverings for Rule $45$.}

\begin{table}[t]
\begin{center}
\begin{tabular}{|c|c|c|c|c|c|c|c|c|}
    \hline
         &111&110&101&100&011&010&001&000  \\\hline
         45&0&0&1&0&1&1&0&1\\\hline 
    \end{tabular}    
    \caption{Definition of Rule $45$.}
    \label{tab:rule45}
\end{center}
\end{table}
\begin{lemma}\label{lem:45}
For Rule $45$ there are three fixed points for a ring of size $n$, if and only if $n$ is a multiple of $3$: $(001)^{\frac{n}{3}}$, $(010)^{\frac{n}{3}}$ and $(100)^{\frac{n}{3}}$.
\end{lemma}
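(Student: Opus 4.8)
The plan is to determine the parallel fixed points of Rule~$45$ directly from its local table and show they are exactly the three periodic configurations listed. The key observation is that being a fixed point is a \emph{local} condition, so I would first translate Table~\ref{tab:rule45} into two cell-wise stability conditions and then propagate them around the ring.

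First I would read off from $f_{45}$ the two ``stable cell'' conditions. A cell $i$ with $x_i=1$ satisfies $f_{45}(x_{i-1},1,x_{i+1})=1$ iff $(x_{i-1},x_{i+1})\in\{(0,0),(0,1)\}$, i.e.\ iff $x_{i-1}=0$; a cell $i$ with $x_i=0$ satisfies $f_{45}(x_{i-1},0,x_{i+1})=0$ iff $(x_{i-1},x_{i+1})\in\{(0,1),(1,0)\}$, i.e.\ iff $x_{i-1}\neq x_{i+1}$. Thus $x$ is a fixed point iff every $1$ is cyclically preceded by a $0$ and every $0$ has distinct left and right neighbours.

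Next, the forward implication. Assume $x$ is a fixed point on a ring of size $n$. The $0$-cell condition rules out $0^n$ (a $0$ with two $0$ neighbours is unstable), so $x$ has at least one $1$; the $1$-cell condition forbids two consecutive $1$s, so $x$ splits cyclically into isolated $1$s separated by maximal blocks of $0$s. Consider such a block at positions $q+1,\dots,q+k$ with $x_q=1$. Applying the $0$-cell condition at position $q+1$ gives $x_{q+2}=0$, which forces $k\geq 2$; applying it at position $q+2$ gives $x_{q+1}\neq x_{q+3}$, hence $x_{q+3}=1$, which forces $k\leq 2$. So every block of $0$s has length exactly $2$, and $x$ is a cyclic concatenation of the word $100$. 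Consequently $3\mid n$, and according to the offset of position $0$ within its length-$3$ block, $x\in\{(100)^{n/3},(010)^{n/3},(001)^{n/3}\}$. For the converse, when $3\mid n$ a direct check of the three neighbourhoods occurring in $(100)^{n/3}$ (namely $010\mapsto 1$, $100\mapsto 0$, $001\mapsto 0$) shows it is a fixed point, and its two cyclic shifts are fixed points by translation invariance; these three configurations are pairwise distinct, so there are exactly three fixed points when $3\mid n$ and none otherwise.

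The argument is essentially a short propagation, so the only delicate point is the cyclic bookkeeping: one must check that the block-length computation still holds in the degenerate case of a single $1$ (one block of length $n-1$), where it forces $n-1=2$, and verify the small sizes $n=1,2$ by hand; I expect no genuine obstruction there.
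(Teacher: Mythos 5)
Your proof is correct, and it is in fact more complete than the paper's own argument. The paper's proof of Lemma~\ref{lem:45} only verifies the easy direction: it checks that $(001)^{n/3}$, $(010)^{n/3}$ and $(100)^{n/3}$ are fixed because $f_{45}(010)=1$ and $f_{45}(001)=f_{45}(100)=0$, and leaves the ``only if'' part (no fixed points unless $3\mid n$, and no other fixed points when $3\mid n$) implicit. You supply exactly that missing half: your two local stability conditions are read off correctly from the table (a $1$-cell is stable iff its left neighbour is $0$, since $f_{45}(110)=f_{45}(111)=0$ while $f_{45}(010)=f_{45}(011)=1$; a $0$-cell is stable iff its two neighbours differ, since $f_{45}(000)=f_{45}(101)=1$ while $f_{45}(001)=f_{45}(100)=0$), and the propagation argument forcing every maximal block of $0$s to have length exactly $2$ is sound, including the degenerate single-$1$ case and the small rings $n=1,2$. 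The only stylistic remark is that the paper states the lemma as an equivalence, so a referee would arguably want your full argument rather than the one-line verification the paper gives; your version is the one that actually establishes the statement as written.
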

\begin{proof}
    It is easy to see that $(001)^{\frac{n}{3}}$, $(010)^{\frac{n}{3}}$ and $(100)^{\frac{n}{3}}$ are fixed points, given that $f_{45}(010)=1$, and $f_{45}(001)=f_{45}(100)=0$ (see Table~\ref{tab:rule45}). 
\end{proof}

For the following proof we need to introduce the concept of \textit{temporal composition}. If we have two different update modes $\mu_1$ and $\mu_2$ and apply a Rule $f$ under the first one and then under the second one, the resulting dynamical system will be a temporal composition of the two update modes,~\ie,  $f_{\mu}=f_{\mu_2}\circ f_{\mu_1}$. It is easy to see that if we combine two sequential update modes the resulting one will be a periodic update mode that is not sequential.

\begin{thm}
    Any initial configuration of size $n$, with $n$ multiple of 3, can converge to a fixed point under a periodic update mode for Rule $45$.
\end{thm}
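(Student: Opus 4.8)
The plan is to exhibit, for each initial configuration $x$, a \emph{periodic} update mode — realised as a temporal composition of several sequential sweeps in the sense of the remark preceding the statement — that drives $x$ to one of the three fixed points listed in Lemma~\ref{lem:45}, namely $(100)^{\frac{n}{3}}$, $(010)^{\frac{n}{3}}$ or $(001)^{\frac{n}{3}}$ (note that, since $f_{45}(000)=1$, the all-$0$ configuration is \emph{not} an option). I will describe a finite sequence of single-cell updates that leads $x$ to, say, $(100)^{\frac{n}{3}}$ up to a cyclic relabelling; repeating this finite sequence forever is a periodic update mode, and once the fixed point is reached every further update is idle. To make each "sweep'' a genuine permutation of $\integers{n}$ one pads, at the very end, with the cells not yet used, which changes nothing since a fixed point has already been attained. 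The small and degenerate cases ($n=3$, a configuration with no $1$, a single run of $1$s wrapping almost all the way around) are dispatched directly first: $0^n$ becomes a single isolated $1$ after one update of cell $0$ since $f_{45}(000)=1$, and $1^n$ is first turned into a single run of $1$s by one update.

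\emph{Phase I (normalisation to isolated $1$s).} Using $f_{45}(101)=1$ one first merges any two maximal runs of $1$s separated by exactly one $0$, so that afterwards all maximal runs of $1$s are pairwise separated by at least two $0$s. Then, processing the runs from left to right and each run internally from right to left, one updates every cell of a run except its leftmost one using $f_{45}(110)=0$ (and $f_{45}(111)=0$): the rightmost $1$ of a run has neighbourhood $110$ and becomes $0$, the next one then again sees $110$, and so on, while the surrounding $0$s and the runs not yet treated are untouched. The result is a configuration consisting of isolated $1$s — call them \emph{tokens} — separated by gaps of $0$s, each gap of length at least $1$.

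\emph{Phase II (a block game).} Encode such a configuration as a cyclic word of \emph{blocks}, one block ``$1\,0^{g}$'' of length $1+g$ for each token followed by its gap; the total length is $n$, every block has length $\ge 2$, and the target fixed points are exactly the configurations all of whose blocks have length $3$. Three elementary moves are implementable by single-cell updates: a \emph{split} of a block of length $L\ge 4$ into two blocks of lengths $L_1,L_2\ge 2$ with $L_1+L_2=L$, by flipping an interior $0$ of the gap (neighbourhood $000$, $f_{45}(000)=1$); an \emph{absorption} of a length-$2$ block into an adjacent block of length $M$, yielding a block of length $M+2$, by flipping the separating $0$ via $f_{45}(101)=1$ and then trimming twice via $f_{45}(110)=0$ (this uses one cell twice, hence the need for a non-sequential periodic mode); and, by absorbing then re-splitting, a \emph{teleportation} that transposes a length-$2$ block with an adjacent block, so a spare length-$2$ block can be carried to any position on the cycle. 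The claim is that these moves reach the all-length-$3$ state: first absorb length-$2$ blocks into neighbours (this strictly decreases their number, so it halts); then, while some block has length $\not\equiv 0 \bmod 3$, split a length-$2$ block off a suitable ``bad'' block, teleport it and absorb it into another bad block so as to strictly decrease the number of blocks of length $\not\equiv 0\bmod 3$ (which, as block lengths sum to $n\equiv 0$, is never exactly $1$); finally split each block of length $3m$ into $m$ blocks of length $3$. Reading off the order of single-cell updates used in Phases I and II yields the announced periodic update mode.

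\emph{Where the difficulty lies.} The heart of the argument is the combinatorial bookkeeping of Phase II: one must verify that the split/absorb/teleport repertoire is genuinely rich enough to equalise all block lengths to $3$, with a clean terminating potential (the number of length-$2$ blocks, then the number of blocks of length $\not\equiv 0\bmod 3$, then the chopping-off of $3$s), and one must check carefully that each move is actually realisable as a sequence of single-cell updates consistent with the current neighbourhoods — in particular that the cells bounding a manipulated region are left holding precisely the values that the next move presupposes, and that no update is demanded ``out of turn''. Handling the degenerate configurations and the wrap-around of the ring cleanly is a further (routine but necessary) source of case analysis.
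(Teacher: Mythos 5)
Your proposal is correct in its overall strategy and reaches the same conclusion, but it organizes the argument quite differently from the paper. Both proofs rely on the same three local moves of Rule $45$ — trimming a run of $1$s from the right via $f_{45}(110)=0$, seeding a new $1$ inside a gap via $f_{45}(000)=1$, and merging via $f_{45}(101)=1$ — and both are forced out of the strictly sequential world (the paper by temporally composing two sequential sweeps, you by updating the absorbed cell twice). The difference is in the bookkeeping: the paper fixes the phase of each isle of $0$s locally, by an explicit case analysis on the residues of the isle lengths modulo $3$ (with hand-written update orders $\mu_0,\dots,\mu_7$ for one and two isles, then ``extending is direct''), whereas you reformulate the problem as a global block game with split/absorb/teleport moves and a two-stage potential function. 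Your version buys uniformity — it handles any number of isles without multiplying subcases, and makes the role of $n\equiv 0 \pmod 3$ transparent (the residues of the block lengths must sum to $0$, so bad blocks can always be paired off) — at the cost of a less explicit update mode and a reachability argument that you yourself flag as needing verification. That verification does go through, but note one concrete hole in your step 2 as stated: after a donor block of length $4$ gives up a $2$, a length-$2$ bad block reappears, and such a block cannot ``have a length-$2$ block split off it''; you must allow a length-$2$ bad block to be teleported and absorbed whole (which your moves already permit), otherwise the induction on the number of bad blocks stalls. With that amendment, and the routine check that each move only touches cells whose neighbourhoods are as advertised, your argument is a valid alternative proof.
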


\begin{proof}
   By Lemma~\ref{lem:45} we know that $(001)^{\frac{n}{3}}$, $(010)^{\frac{n}{3}}$ and $(100)^{\frac{n}{3}}$. are the only fixed points of Rule $45$. Furthermore, since $f_{45}(111)=f_{45}(110)=0$ and $f_{45}(000)=1$, we have that isles of two or more $1$s and isles of three or more $0$s cannot exist in a fixed point. 
    
Let us perform a case by case analysis.
    \begin{enumerate}
        \item Let $x=0^n$. We can define a sequential update mode $$\mu_0 = (0,3,6,\dots,n-3,1,4,7, \dots,n-2,2,5,8,\dots,n-1)$$ which results in $f_{45,\mu_0}(0^n)=(100)^{\frac{n}{3}}$.    
        \item Let $x=1^n$. We can define a sequential update mode $$\mu_1 = (2,5,8,\dots,n-1,1,4,7, \dots,n-2,0,3,6,\dots,n-3)$$ which results in $f_{\mu_1,45}(1^n)=(100)^{\frac{n}{3}}$.
        \item Let $x=0^{a}1^{b}$, such that $a+b=n$.
        \begin{enumerate}
            \item If $a\mod 3\equiv 0$ then $b\mod 3\equiv 0$, thus we can define a sequential update mode
            \begin{equation*}
                \begin{split}
                    \mu_{2}=(&a+2,a+5,a+8,\dots,n-1,\\
                    &a+1,a+4,a+7,\dots,n-2,\\&a,a+3,a+6,\dots,n-3,\\&0,3,6,\dots,a-3,\\&1,4,7,\dots,a-2,\\&2,5,8,\dots,a-1)
                \end{split}
            \end{equation*}
            which results in $f_{\mu_2,45}(0^{a}1^{b})=(100)^{\frac{n}{3}}$.
            
            \item If $a\mod 3\equiv 1$ then $b\mod 3\equiv 2$, thus we can define  a sequential update mode
            \begin{equation*}
                \begin{split}
                    \mu_{3}=(&a+2,a+5,a+8,\dots,n-3,\\& a+1,a+4,a+7,\dots,n-1,\\&a,a+3,a+6,\dots,n-2,\\&1,4,7,\dots,a-3,\\&0,3,6,\dots,a-1,\\&2,5,8,\dots,a-2)
                \end{split}
            \end{equation*}
            which results in $f_{45,\mu_3}(0^{a}1^{b})=(010)^{\frac{n}{3}}$.
            
            \item If $a\mod 3\equiv 2$ then $b\mod 3\equiv 1$, thus we can define  a sequential update mode
            \begin{equation*}
                \begin{split}
                    \mu_{4}=(&a+2,a+5,a+8,\dots,n-2,\\&a+1,a+4,a+7,\dots,n-3,\\&a,a+3,a+6,\dots,n-1,\\&2,5,8,\dots,a-3,\\&1,4,7,\dots,a-1,\\&0,3,6,\dots,a-2)
                \end{split}
            \end{equation*}
            which results in $f_{\mu_4,45}(0^{a}1^{b})=(001)^{\frac{n}{3}}$.
        \end{enumerate}
        The procedure is analogous for $x=1^b0^{a}$.
        \item Let $x=0^{a_1}1^{b_1}0^{a_2}1^{b_2}$. Let $\ha = a_1+b_1$.
        We need to ensure the correct distance between the two isles of $1$s. 
        \begin{enumerate}
            \item If  $(b_1+a_2)\mod 3\equiv0$  we can simply update the isles of $1$s from right-to-left, and then we can update the original isles of $0$s every third cell, according to a convenient distance between the remaining $1$s, following the same logic as with case 3.
            \begin{equation*}
                \begin{split}
                    \mu_5=(&n-1,n-2,\dots,n-b_2,\\&\ha-1,\ha-2,\dots,a_1,\ha+a_2-3,\\
                          &\ha+a_2-6,\dots,\ha,\\&\ha+a_2-2,\ha+a_2-5,\dots,\ha+a_2-1,\\
                          &\ha+a_2-4,\dots,a_1-3,a_1-6,\dots,0,\\
                          &a_1-2,a_1-5,\dots,a_1-1,a_1-4)\text{.}
                \end{split}
            \end{equation*}
            Note that if $a_1\mod 3\equiv2$, we must update the first cell of the configuration to prevent an unwanted $1$ to appear.

            The resulting configuration will be 
            $$f_{\mu_5,45}(x)=(100)^{\left\lfloor\frac{a_1}{3}\right\rfloor}10^{b_1-1}(100)^{\left\lfloor\frac{a_2}{3}\right\rfloor}10^{b_2-1}\text{,}$$
            if $a\mod 3\equiv0$, or
            $$f_{\mu_5,45}(x)=(010)^{\left\lfloor\frac{a_1}{3}\right\rfloor}010^{b_1-1}(010)^{\left\lfloor\frac{a_2}{3}\right\rfloor}010^{b_2-1}\text{,}$$
            if $a_1\mod 3\equiv1$, or in the especial case $a_1\mod 3\equiv2$
            $$f_{\mu_5,45}(x)=(001)^{\left\lfloor\frac{a_1}{3}\right\rfloor}0010^{b_1-1}(001)^{\left\lfloor\frac{a_2}{3}\right\rfloor}0010^{b_2-1}\text{.}$$
            
            \item If $(b_1+a_2)\mod 3\equiv2$, we have to start with cell $\ha+a_2-2$, then we must update cell $\ha+a_2-1$ to correct the distance between the isles of $1$s, and then we proceed with the same order as the one established in Case 4(a).
            \begin{equation*}
                \begin{split}
                    \mu_6=(&\ha+a_2-2,\ha+a_2-1,\\
 &n-1,n-2,\dots,n-b_2,\\
 &\ha-1,\ha-2,\dots,a_1,\\
 &\ha+a_2-3,\ha+a_2-6,\dots,\ha,\\
  &\ha+a_2-5,\dots,\ha+a_2-4,\dots,a_1-3,\\                          &a_1-6,\dots,0,\\
 &a_1-2,a_1-5,\dots,a_1-1,a_1-4)\text{.}
                \end{split}
            \end{equation*}

            The resulting configuration will be 
            $$f_{\mu_6,45}(x)=(100)^{\left\lfloor\frac{a_1}{3}\right\rfloor}10^{b_1-1}(100)^{\left\lfloor\frac{a_2-2}{3}\right\rfloor}110^{b_2}\text{,}$$
            if $a\mod 3\equiv0$, or
            $$f_{\mu_6,45}(x)=(010)^{\left\lfloor\frac{a_1}{3}\right\rfloor}010^{b_1-1}(010)^{\left\lfloor\frac{a_2-2}{3}\right\rfloor}0110^{b_2}\text{,}$$
            if $a_1\mod 3\equiv1$, or in the especial case $a_1\mod 3\equiv2$
            $$f_{\mu_6,45}(x)=(001)^{\left\lfloor\frac{a_1}{3}\right\rfloor}0010^{b_1-1}(001)^{\left\lfloor\frac{a_2-2}{3}\right\rfloor}00110^{b_2}\text{.}$$
            
            \item If $(b_1+a_2)\mod 3\equiv1$ we have to start with cell $a_1-2$, then we must update cell $a_1-1$ to correct the distance between the isles of $1$s, and then we proceed with the same order as the one established in Case 4(a).
            \begin{equation*}
                \begin{split}
                    \mu_7=(&a_1-2,a_1-1,\\
&n-1,n-2,\dots,n-b_2,\ha-1,\ha-2,\dots,a_1,\\
&\ha+a_2-3,\ha+a_2-6,\dots,\ha,\\\
&\ha+a_2-2,\ha+a_2-5,\dots,\ha+a_2-1,\\
&\ha+a_2-4,\dots,a_1-3,\\
&a_1-6,\dots,0,a_1-5,\dots,a_1-4,a_1-7,\dots)\text{.}
                \end{split}
            \end{equation*}

            The resulting configuration will be 
            $$f_{\mu_7,45}(x)=(100)^{\left\lfloor\frac{a_1-2}{3}\right\rfloor}110^{b_1}(100)^{\left\lfloor\frac{a_2}{3}\right\rfloor}10^{b_2-1}\text{,}$$
            if $(a_1-2)\mod 3\equiv0$, or
            $$f_{\mu_7,45}(x)=(010)^{\left\lfloor\frac{a_1-2}{3}\right\rfloor}0110^{b_1}(010)^{\left\lfloor\frac{a_2}{3}\right\rfloor}010^{b_2-1}\text{,}$$
            if $(a_1-2)\mod 3\equiv1$, or in the especial case $(a_1-2)\mod 3\equiv2$
            $$f_{\mu_7,45}(x)=(001)^{\left\lfloor\frac{a_1-2}{3}\right\rfloor}00110^{b_1}(001)^{\left\lfloor\frac{a_2}{3}\right\rfloor}0010^{b_2-1}\text{.}$$
        \end{enumerate}
        
        Then we create a new sequential update mode following the reasoning established in Case 1 to work with the two longer isles of $0$s.

        Note that this results in a temporal combination of two different sequential update modes, which results in a periodic update mode.
    \end{enumerate}
    Extending case 4 to $x = 0^{a_1}1^{b_1}\dots 0^{a_k}1^{b_k}$ and $x = 0^{a_1}1^{b_1}\dots 0^{a_k}1^{b_k}0^{a_{k+1}}$ is direct and analogous for $x = 1^{b_1}0^{a_1}\dots 1^{b_k}0^{a_k}$ and $x = 1^{b_1}0^{a_1}\dots 1^{b_k}0^{a_k}1^{b_{k+1}}$.
\end{proof}

Our result is not satisfactory in the context of our study, given that the update mode we have obtained is periodic but not sequential. Therefore, in order to find candidates for sequential update modes under which Rule $45$ might be able to lead all initial configurations to a fixed point, we ran computer simulations for different sized rings. These simulations considered all possible sequential update modes for rings of size $6$ and $9$. As a result we found that the best candidate to prove the existence of a universal sequential update mode is the one presented in the following conjecture.

\begin{conjecture}\label{cnjtr:rule45}
    Given a ring whose size $n$ is a multiple of $3$, $\mu=(0,1,2,\dots,n-2,n-1)$ is a $T$-universal update mode for Rule 45. 
\end{conjecture}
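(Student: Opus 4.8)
We sketch a possible route to a proof. The plan is to describe one application of the sweep $S:=f_{\mu,45}$ with $\mu=(0,1,\dots,n-1)$ as a finite-state transducer scanning the configuration from left to right. Writing $f_{45}(a,b,c)=a\oplus(b\vee\bar c)$ and letting $s_0=x_{n-1}$ and $s_{i+1}=f_{45}(s_i,x_i,x_{i+1})$ be the ``carry'' bit which, while cells $0,1,\dots$ are updated in this order, records the value of cell $i-1$ at the instant cell $i$ is processed (so $S(x)_i=s_{i+1}$ for $0\le i\le n-2$), a one-line check gives $s_{i+1}=s_i$ exactly when the factor $x_ix_{i+1}$ equals $01$, and $s_{i+1}=\bar s_i$ otherwise. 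Iterating, this yields the closed form
\[
S(x)_i\;=\;x_{n-1}\oplus(i+1)\oplus\sigma_{i+1}(x)\pmod 2,\qquad 0\le i\le n-2,
\]
where $\sigma_k(x)$ is the number of occurrences of the factor $01$ in the prefix $x_0x_1\cdots x_k$, the last entry being $S(x)_{n-1}=f_{45}(S(x)_{n-2},x_{n-1},S(x)_0)$. This identity is the technical backbone: it shows that runs of $1$s of length at least $2$ are split, that the spacing of the surviving $1$s shifts in a way dictated by the parities of positions and of prefix $01$-counts, and it lets $S$ be evaluated in linear time, which is convenient both for the proof and for exhaustive checks on small rings.

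Using this description I would then look for a functional $\Phi\colon\Bn\to\N$ which (i) attains its minimum exactly on the three fixed points $(001)^{n/3}$, $(010)^{n/3}$, $(100)^{n/3}$ of Lemma~\ref{lem:45} --- equivalently, on the configurations in which the $1$s are isolated and any two consecutive $1$s are separated by exactly two $0$s --- and (ii) is non-increasing along orbits of $S$. A natural candidate combines a count of the local windows that violate the fixed-point conditions (a factor $11$, a factor $000$, or a cell $0$ whose two neighbours coincide) with a weighted term measuring how far the current gaps between consecutive $1$s are from the admissible value. From the closed form one argues, window by window, that $\Phi(S(x))\le\Phi(x)$ and that equality forces $x$ either to be a fixed point or to lie in a short, explicitly enumerated family of near-fixed configurations on which a bounded number of further sweeps strictly decreases $\Phi$. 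Since $\Phi$ takes finitely many values and its minimisers are precisely the fixed points, every orbit reaches a fixed point, which is exactly the $T$-universality claimed.

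Two points are genuinely delicate. First, the wrap-around: the term $x_{n-1}$ and the special update of cell $n-1$ couple the whole ring, so no purely local bookkeeping can work; one seems forced to track a global ``phase defect'', namely the cyclic mismatch between the current word and the nearest period-three word, and to show that such defects drift monotonically under $S$ and annihilate in pairs, with $3\mid n$ guaranteeing that the number of defects is always even and self-cancelling. Second, strict monotonicity of $\Phi$ can fail on individual sweeps --- already on small rings a configuration may temporarily acquire a factor $11$ or many extra $1$s --- so one must instead bound, uniformly in $n$, the number of consecutive sweeps on which $\Phi$ can stay constant before it must drop, or replace $\Phi$ by a lexicographically ordered tuple that is genuinely monotone; proving such a bound is where the bulk of the effort will lie. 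Absent a complete argument, the closed form above makes it inexpensive to verify the conjecture exhaustively for all $n\le 15$ with $3\mid n$, which both confirms the small cases and tightly constrains the shape of any eventual proof.
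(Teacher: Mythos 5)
You should first be aware that the statement you are addressing is presented in the paper as a \emph{conjecture}: the authors give no proof, only exhaustive simulations for rings of size $6$ and $9$ (together with a convergent update mode for Rule $45$ that is periodic but not sequential), and they state explicitly that ``the formal analysis involves a great amount of sub-cases and could not be completed in a reasonable amount of time.'' So there is no paper proof to compare yours against; the only relevant question is whether your proposal closes the gap, and it does not. Your technical setup is correct and is worth keeping: the algebraic normal form $f_{45}(a,b,c)=a\oplus(b\vee\bar c)$ matches the rule table, the carry recursion $s_{i+1}=f_{45}(s_i,x_i,x_{i+1})$ with $s_0=x_{n-1}$ does compute the left-to-right sweep (when cell $i$ is processed it sees the already-updated value of cell $i-1$ and the still-old value of cell $i+1$, with the stated exception at $i=n-1$), and the closed form $S(x)_i=x_{n-1}\oplus(i+1)\oplus\sigma_{i+1}(x)$ follows because $s_{i+1}=s_i$ exactly on the factor $01$. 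This is a sharper description of one step than the isle-by-isle bookkeeping used elsewhere in the paper.

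Everything after that, however, is a plan rather than an argument. The functional $\Phi$ is never defined beyond ``a natural candidate combines\dots''; its non-increase under $S$ is asserted as something ``one argues, window by window,'' with no window actually argued; and the two points you yourself flag as delicate --- the global coupling of the whole ring through $x_{n-1}$ and the special update of cell $n-1$, and the fact that a sweep can create new factors $11$ so that $\Phi$ plateaus or oscillates --- are exactly where the authors' own attempt stalled. A Lyapunov argument whose monotonicity fails on individual steps proves nothing until you either exhibit a horizon, uniform in $n$, after which $\Phi$ must strictly decrease, or replace $\Phi$ by a lexicographic tuple that is genuinely monotone; you note that this is ``where the bulk of the effort will lie'' and then stop. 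Exhaustive verification for $n\le 15$ is evidence, not proof, and adds little to the simulations already reported in the paper. As it stands the conjecture remains open; your text should be offered as a proof strategy whose one solid ingredient is the closed form for a single sweep, not as a proof.
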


\begin{remark}
    Unlike for Rules 7 and 15, the simulations suggest that the update mode in this case is not unique. Indeed, there are 117 sequential update modes for $n=9$ that lead all configurations to a fixed point, and 15 sequential update modes for $n=6$. However, the formal analysis involves a great amount of sub-cases and could not be completed in a reasonable amount of time.
\end{remark}

\subsection{Without Covering of Sequential Update Modes}
In this section we pay attention to the rules for which, even though they have fixed points, there are configurations that can never reach them, regardless of the sequential update modes being used.

\begin{table}[t]
\begin{center}
\begin{tabular}{|c|c|c|c|c|c|c|c|c|}
    \hline
         &111&110&101&100&011&010&001&000  \\\hline
         38&0&0&1&0&0&1&1&0\\\hline 
         46&0&0&1&0&1&1&1&0\\\hline 
         54&0&0&1&1&0&1&1&0\\\hline 
         60&0&0&1&1&1&1&0&0\\\hline 
         62&0&0&1&1&1&1&1&0\\\hline 
    \end{tabular}    
    \caption{Definition of Rule $38$, $46$, $54$, $60$ and $62$.}
    \label{tab:isolated0}
\end{center}
\end{table}

{In~\cite{Fates2018} it is proven that the homogeneous configuration $0^n$ is not reachable under asynchronous update modes for Rules $38$, $46$, $54$ and $60$. Here, we present an alternative way of proving that result, which, under sequential update modes, can add Rule $62$ to the list.}

\begin{thm}\label{thm:isolated0}
    For Rules 38, 46, 54, 60 and 62, configuration $0^n$ is the only fixed point and it is isolated under sequential update modes.
\end{thm}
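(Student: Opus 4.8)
The plan is to split the statement into two claims — that $0^n$ is the unique fixed point of each of these five rules, and that its basin of attraction under every sequential update mode is reduced to $\{0^n\}$ — and to establish the isolation claim first, as it is the shorter of the two.

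For isolation, I would produce an invariant of single-cell updates. The key fact, read off Table~\ref{tab:isolated0}, is that $f(010) = 1$ for all five rules. I would then show that $\Bn \setminus \{0^n\}$ is stable under every single-cell update: given $z \neq 0^n$ and an update of cell $j$, either $z$ carries a $1$ at some cell $k \neq j$, which is left untouched and keeps the image away from $0^n$, or the only $1$ of $z$ is at $j$, so the neighbourhood of $j$ equals $010$ and the update leaves it at $1$ since $f(010)=1$, again keeping the image away from $0^n$. As every sequential update mode is a composition of single-cell updates, $0^n$ is never attained from a configuration distinct from $0^n$; equivalently $0^n \notin \O(x)$ whenever $x \neq 0^n$.

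For uniqueness of the fixed point, I would start from the theorem of~\cite{Robert1986} recalled above: the fixed points of $f_\mu$ are exactly the parallel ones, so it suffices to analyse $f(x_{i-1},x_i,x_{i+1}) = x_i$ for all $i$. For all five rules $f(110)=f(111)=0$, so a fixed point contains no two cyclically consecutive $1$s; if it is not $0^n$, there is a position with $x_{i-1}=0$, $x_i=1$, $x_{i+1}=0$. For Rules $38$, $46$, $54$ and $62$, stability of cell $i-1$ requires $f(x_{i-2},0,1)=0$, and since $f(101)=1$ this forces $x_{i-2}=0$; but then $f(001)=1\neq 0$, a contradiction, so the configuration is $0^n$. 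Rule $60$ is the one exception, being the only rule here with $f(001)=0$; for it I would instead use $f_{60}(a,b,c)=a\oplus b$, so that $f_{60}(x_{i-1},x_i,x_{i+1})=x_i$ forces $x_{i-1}=0$ for every $i$, whence $x=0^n$. Since $f(000)=0$ in every case, $0^n$ is indeed a fixed point, hence the only one.

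I do not expect a serious obstacle; the points requiring care are the reduction to the parallel case through~\cite{Robert1986} (which spares us from ruling out purely sequential fixed points directly), the uniform observation $f(010)=1$ that makes $0^n$ genuinely unreachable under single-cell updates, and the need to treat Rule $60$ separately from the other four in the fixed-point count. The degenerate ring sizes $n\leq 2$, where a cell's neighbourhood overlaps itself, are handled by direct inspection.
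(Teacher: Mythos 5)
Your proof is correct, but the isolation argument is genuinely different from (and tighter than) the paper's. The paper asserts uniqueness of the fixed point by appeal to the parallel case and then, for isolation, traces the actual dynamics of Rule~38: isles of $1$s can only be eroded down to isolated $1$s via $f_{38}(011)=f_{38}(110)=0$, after which $f_{38}(001)=1$ makes each surviving $1$ seed a new growing isle, so the number of $1$s never drops to zero; the other four rules are dispatched with ``similar reasoning.'' Your argument replaces all of this with a single one-substep invariant: since $f(010)=1$ for all five rules, updating the cell carrying the last remaining $1$ cannot erase it, and updating any other cell leaves some $1$ untouched, so $\Bn\setminus\{0^n\}$ is closed under single-cell updates and $0^n$ is unreachable from outside itself. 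This is shorter, uniform over the five rules (in particular it sidesteps the fact that $f_{60}(001)=0$, which makes the paper's ``growing isle'' picture not transfer literally to Rule~60), and it applies verbatim to any asynchronous single-cell update mode and to Rules~110 and~126 of the following theorem; what it does not give, and what the paper's version is implicitly after, is a description of the limit cycles actually reached. On the uniqueness side you do more than the paper: you carry out the verification via Robert's reduction to parallel fixed points and a local stability analysis, correctly isolating Rule~60 (where $f_{60}(a,b,c)=a\oplus b$ forces $x_{i-1}=0$ for all $i$) from the other four (where $f(101)=f(001)=1$ rules out an isolated $1$). The only caveat is your deferral of $n\le 2$ to ``direct inspection'': for $n=1$ the self-overlapping neighbourhood gives $111\mapsto 0$ and the isolation claim actually fails, so you should either exclude that degenerate size explicitly or note, as the paper implicitly does, that rings of size $n\ge 3$ are assumed throughout.
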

\begin{proof}
	Note that, for all these rules, $0^n$ is the only fixed point in parallel, therefore, it is the only fixed point under sequential update modes.
	
    We will focus on Rule $38$, the other rules follow a similar reasoning. Let $\mu$ be any sequential update mode. 
    
    Let us consider an initial configuration $x=0^{a_1}1^{b_1}\dots0^{a_k}1^{b_k}$ such that $\sum_{i=1}^k a_i+b_i=n$. Given that $f_{38}(011)=f_{38}(110)=0$ (see Table~\ref{tab:isolated0}), we can always reduce the size of the isles of $1$s until all we are left with are isolated $1$s. Let $\ell_0$ be a cell that contains one of the isolated $1$s. 
    
    Thanks to the definition of Rule $38$, we know that the isolated $1$s we have left must have been updated after the cells on either side of them (otherwise its state would have to be $0$, since $f_{38}(011)=f_{38}(110)=f_{38}(111)=0$). Let us notate this by saying that $\mu_{\ell_0-1}<\mu_{\ell_0}$ and  $\mu_{\ell_0}>\mu_{\ell_0+1}$.

    Because $f_{38}(001)=1$, we know that the cells to the left of each $1$ will change its state to $1$, until the growing isle of $1$s reaches a cell $\ell_1$ such that $\ell_1-1$ is updated before $\ell_1$, that is: $\mu_{\ell_1-1}<\mu_{\ell_1}$. Additionally, the state of cell $\ell_0$ will now be $0$, because $f(110)=0$.

    Since the previous statement holds for all isolated $1$s, that is, all isolated $1$s will create a growing isles of $1$s that start on a cell $\ell_i$ and finish on $\ell_{i-1}-1$, with $i\in\{0,\dots,k'\}$, with $k'$ the number of isles. Note that the isles of $1$s could merge, since $f_{38}(101)=1$, thus the new number of isles $k'$ could be different from $k$, the number of isles of the original configuration.

    Thus, once we reach isolated $1$s, in the next step the number of $1$s must increase (by increasing the size of the isles) or stay constant (if each $1$ moved exactly one cell to the left). Therefore, configuration $0^n$ cannot be reached.
\end{proof}

\begin{table}[t]
\begin{center}
\begin{tabular}{|c|c|c|c|c|c|c|c|c|}
    \hline
         &111&110&101&100&011&010&001&000  \\\hline
         110&0&1&1&0&1&1&1&0\\\hline 
         126&0&1&1&1&1&1&1&0\\\hline 
    \end{tabular}    
    \caption{Definition of Rule $110$ and $126$.}
    \label{tab:isolated0_2}
\end{center}
\end{table}
\begin{thm}\label{thm:isolated0_2}
    For Rules 110 and 126, configuration $0^n$ is the only fixed point and it is isolated under sequential update modes.
\end{thm}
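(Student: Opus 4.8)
The plan is to follow the template of Theorem~\ref{thm:isolated0}, showing first that $0^n$ is the only fixed point in parallel (hence the only fixed point under any sequential update mode by~\cite{Robert1986}), and then exhibiting a structural obstruction: starting from any nonzero configuration, one can never drive all $1$s to extinction under a sequential update mode. For the first part I would simply inspect Table~\ref{tab:isolated0_2}: for Rule $110$ we have $f_{110}(110)=f_{110}(101)=f_{110}(011)=f_{110}(010)=f_{110}(001)=1$, so a fixed point can contain no $1$ whatsoever except in the all-$1$ configuration, which fails since $f_{110}(111)=0$; hence $0^n$ is the unique parallel fixed point, and the same holds for Rule $126$ by the analogous computation.

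For the isolation part I would argue as in Theorem~\ref{thm:isolated0}. Let $\mu$ be an arbitrary sequential update mode and let $x=0^{a_1}1^{b_1}\dots 0^{a_k}1^{b_k}$ be a nonzero configuration. Using $f(111)=0$ one reduces large isles of $1$s, but the crucial point is that an isolated $1$ at cell $\ell_0$ can only have survived if it was updated \emph{after} both neighbors (otherwise $f(011)$, $f(110)$, or $f(111)$, all equal to $1$ for Rule $110$ — wait, here I must be careful: $f_{110}(011)=1$ means an isolated $1$ updated after its left neighbor and before its right neighbor would actually have had its left-zero turn into a $1$). The clean way to phrase the obstruction is: every cell that is $0$ and has a $1$ immediately to its right becomes $1$ (since $f(001)=1$ for both rules), and $f(010)=1$ as well; so the only way the total number of $1$s strictly decreases in a step is through the pattern $111\mapsto 0$ or $110\mapsto 0$, and both of these require a cell with a $0$ to its right, which is precisely a place where $001$ or $010$ will regenerate a $1$ on the next sweep. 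I would make this precise by a potential-function or invariant argument: define, after the isles of $1$s have been reduced to isolated $1$s, a lower bound on the number of $1$s present after one more full step, and show it is at least the current number of isles, which is at least one. The contradiction is that $0^n$ has zero $1$s, so it is unreachable.

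The main obstacle is that Rules $110$ and $126$ are genuinely more ``expansive'' than Rule $38$: $f(001)=f(010)=f(011)=1$ for both, so isolated and paired $1$s spread aggressively in both directions, and one must check that no clever interleaving of left-moving and right-moving erasures can cancel everything. I expect the cleanest route is to show directly that from $0^{k_1}10^{k_2}$ (a single isolated $1$) one is already stuck: updating the left neighbor first makes $11$; then any attempt to erase via $f(111)=0$ or $f(110)=0$ leaves a $1$ somewhere that, by $f(001)$ or $f(010)$, seeds a fresh $1$ on the following sweep — exactly the ``you would need one cell updated twice per step'' phenomenon invoked for Rule $90$ in Theorem~\ref{thm:rule90}. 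So I would reduce the general case to this single-$1$ case and then argue the single-$1$ case cell-by-cell as in the proof of Theorem~\ref{thm:rule90}, concluding that $0^n$ is isolated under every sequential update mode, whence there is no covering.
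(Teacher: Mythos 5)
Your plan follows essentially the same route as the paper's proof: uniqueness of the parallel fixed point $0^n$, then the observation that $f(111)=0$ is the only way to destroy a $1$ while $f(010)=f(001)=f(101)=1$ keep the $1$s alive, so the count of $1$s can never reach zero. One slip to correct: you list $110\mapsto 0$ as a second deletion mechanism, but $f(110)=1$ for both rules (this is precisely why isles of $1$s cannot be eroded from their borders, as the paper notes); since $111$ is the \emph{only} active neighborhood with center $1$, the cell that would erase the last remaining $1$ must have two neighbors in state $1$, which is already a contradiction for an arbitrary nonzero configuration --- no reduction to the single-$1$ case or potential-function argument is needed.
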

\begin{proof}
	Note that $0^n$ is the only fixed point in parallel, and thus, it is the only possible fixed point under sequential update modes.
	
	Let $\mu$ be a sequential update mode.
    Unlike Theorem~\ref{thm:isolated0}, for these rules we have that $f(110)=f(011)=1$ (see Table~\ref{tab:isolated0_2}), which means that we cannot simply try to reduce the size of the isles of $1$s. Instead, the best that can happen is that we use $f(111)=0$ to break them up, leaving $1$s alone or in pairs.

    But once again, since $f(010)=1$, those $1$s cannot disappear and instead, since $f(001)=f(101)=1$, the number of $1$s will not decrease, therefore $0^n$ cannot be reached.
\end{proof}

\begin{table}[t]
\begin{center}
\begin{tabular}{|c|c|c|c|c|c|c|c|c|}
    \hline
         &111&110&101&100&011&010&001&000  \\\hline
         134&1&0&0&0&0&1&1&0\\\hline 
         142&1&0&0&0&1&1&1&0\\\hline 
         150&1&0&0&1&0&1&1&0\\\hline 
         156&1&0&0&1&1&1&0&0\\\hline 
    \end{tabular}    
    \caption{Definition of Rule $134$, $142$, $150$ and $156$.}
    \label{tab:isolated0_and_1}
\end{center}
\end{table}
\begin{thm}\label{thm:isolated0_and_1}
    Rules $134$, $142$, $150$ and $156$ only have isolated fixed points under sequential update modes:
    \begin{itemize}
        \item $0^n$ and $1^n$ if $n$ is odd.
        \item $0^n$, $1^n$, $(01)^{\frac{n}{2}}$ and $(10)^{\frac{n}{2}}$ if $n$ is even.
    \end{itemize}
\end{thm}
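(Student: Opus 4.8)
The plan is to combine two ingredients: a description of the fixed points, and a short preimage argument showing that each of them is isolated. By the theorem of~\cite{Robert1986} recalled in the introduction, the fixed points of an ECA under any sequential update mode are exactly its fixed points under the parallel mode, so it suffices to argue about the parallel mode for the classification part, and then to reason substep by substep for the isolation part.

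For the classification, recall that $x$ is a parallel fixed point if and only if every window $x_{i-1}x_ix_{i+1}$ is passive. From Table~\ref{tab:isolated0_and_1}, the four rules share the entries $f(000)=0$, $f(010)=1$, $f(101)=0$, $f(111)=1$ (equivalently $f(aba)=b$ for all $a,b\in\B$) and $f(110)=0$. The first four entries give that $0^n$, $1^n$ and, when $n$ is even, $(01)^{\frac n2}$ and $(10)^{\frac n2}$ are fixed points. For the converse, $f(110)=0$ means the window $110$ is active, hence forbidden in any fixed point, so two consecutive $1$s must always be followed by a further $1$ and a fixed point containing $11$ equals $1^n$. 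Now let $x$ be a fixed point with no $11$ and $x\neq 0^n$; then it has an isolated $1$, say at cell $i$, so $x_{i-1}=x_{i+1}=0$. For Rules $134$, $142$, $150$ the window $001$ is active, so $x_{i-2}x_{i-1}x_i=x_{i-2}01$ forces $x_{i-2}=1$, and then the forbidden window $110$ forces $x_{i-3}=0$; since cell $i-2$ is again an isolated $1$, this propagates leftward around the ring and $x$ must be alternating. For Rule $156$ the window $100$ is active instead, so $x_ix_{i+1}x_{i+2}=10x_{i+2}$ forces $x_{i+2}=1$, an isolated $1$ again, and propagating rightward gives an alternating $x$. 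In every case $n$ is even and $x\in\{(01)^{\frac n2},(10)^{\frac n2}\}$, so the list in the statement is complete.

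For the isolation, fix a fixed point $p$ from the list and any sequential update mode $\mu=(i_0,\dots,i_{n-1})$, and write $f_\mu=g_{i_{n-1}}\circ\cdots\circ g_{i_0}$, where $g_j$ updates only cell $j$. Suppose $f_\mu(y)=p$ and let $z^{(k)}$ be the configuration after the first $k$ substeps, so $z^{(0)}=y$ and $z^{(n)}=p$. I would show $z^{(k)}=p$ for all $k$ by downward induction: if $z^{(k)}=p$, then since $g_{i_{k-1}}$ only changes cell $i:=i_{k-1}$ we have $z^{(k-1)}_j=p_j$ for $j\neq i$, while $p_i=f(p_{i-1},z^{(k-1)}_i,p_{i+1})$; because $p$ is $0^n$, $1^n$ or an alternating configuration it has period dividing $2$, so $p_{i-1}=p_{i+1}=:a$, and then $f(aba)=b$ gives $p_i=f(a,z^{(k-1)}_i,a)=z^{(k-1)}_i$, whence $z^{(k-1)}=p$. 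Taking $k=0$ yields $y=p$: the only $f_\mu$-preimage of $p$ is $p$ itself, hence no orbit starting outside $\{p\}$ ever reaches $p$, so $p$ is isolated under $\mu$. As $\mu$ was arbitrary, we are done.

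The isolation step is uniform and uses only the four common table entries, so it is actually cleaner than the arguments of Theorems~\ref{thm:isolated0} and~\ref{thm:isolated0_2}; the genuine casework lies in the fixed-point classification, because the rules disagree on $f(001)$, $f(011)$ and $f(100)$ and thus on which windows are forbidden, so the propagation of an isolated $1$ into a global alternating pattern must be handled separately for $\{134,142,150\}$ (leftward, via $001$) and for $156$ (rightward, via $100$) — it is cleanest for the XOR-type Rule $150$, where every window with distinct endpoints is active.
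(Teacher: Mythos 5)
Your proposal is correct, but it proves the theorem by a genuinely different route than the paper. For the isolation part, the paper gives two separate ad hoc arguments: $0^n$ and $1^n$ cannot be reached because $f(010)=1$ and $f(101)=0$ prevent the last isolated cell from flipping, and the alternating configurations cannot be reached because $f(000)=0$ and $f(111)=1$ forbid the creation of new isles, so the isle count can never rise to the $n$ isles an alternating configuration requires. You instead give a single uniform backward induction over substeps: since every fixed point $p$ in the list has spatial period dividing $2$, one always has $p_{i-1}=p_{i+1}$, and the shared entries $f(aba)=b$ force the updated cell to have already carried its final value, so $p$ is its own unique $f_\mu$-preimage. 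This is cleaner, covers all four fixed points and all four rules at once, and is fully rigorous at the substep level. You also supply something the paper only asserts, namely the completeness of the fixed-point list, by invoking Robert's theorem to reduce to parallel fixed points and then propagating the forbidden windows ($110$ always, plus $001$ for Rules $134$, $142$, $150$ and $100$ for Rule $156$) around the ring to force the alternating pattern and the parity condition on $n$. The paper's argument is shorter and stylistically consistent with Theorems~\ref{thm:isolated0} and~\ref{thm:isolated0_2}; yours is more self-contained and yields the slightly stronger statement that each fixed point has a trivial basin because it has no preimage other than itself.
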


\begin{proof}
    We know that $0^n$ and $1^n$ are fixed points, and since $f(101)=0$ and $f(010)=1$ (see Table~\ref{tab:isolated0_and_1}) we know that the isolated $0$s and $1$s do not disappear for any of these rules, meaning that said fixed points cannot be reached.

    Now, let $n$ be even and $x\not\in\{0^n,1^n,(01)^{\frac{n}{2}},(10)^{\frac{n}{2}}\}$. 
    
    Given that $f(000)=0$ and $f(111)=1$, we know that new isles cannot be created. This means that the number of isles cannot increase which implies that the only way to have $\frac{n}{2}$ isolated $1$s (or $0$s) is for them to have existed in the initial configuration.    

    Therefore, all four possible fixed points are isolated ones.
\end{proof}

\begin{table}[t]
\begin{center}
\begin{tabular}{|c|c|c|c|c|c|c|c|c|}
    \hline
         &111&110&101&100&011&010&001&000  \\\hline
         105&0&1&1&0&1&0&0&1\\\hline 
    \end{tabular}    
    \caption{Definition of Rules $105$.}
    \label{tab:105}
\end{center}
\end{table}
\begin{thm}\label{thm:105seq}
    Rule 105 under sequential update modes only has fixed points if the length of the ring is a multiple of 4, and said fixed points are isolated: $(0011)^{\frac{n}{4}}$, $(0110)^{\frac{n}{4}}$, $(1001)^{\frac{n}{4}}$ and $(1100)^{\frac{n}{4}}$.
\end{thm}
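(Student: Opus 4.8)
The plan is to split the statement into its two halves: first determine which ring sizes admit fixed points and what those fixed points are, and then show that on those rings every such fixed point attracts nothing but itself.

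For the first half, I would read off from Table~\ref{tab:105} that Rule $105$ is the complemented parity rule, $f_{105}(a,b,c) = 1 \oplus a \oplus b \oplus c$, where $\oplus$ is addition modulo $2$. A configuration $x$ is then a fixed point of the parallel dynamics exactly when $x_i = 1 \oplus x_{i-1} \oplus x_i \oplus x_{i+1}$ for all $i$, i.e. when $x_{i+1} = 1 \oplus x_{i-1}$ for every $i \in \integers{n}$. This says the subsequence of even-indexed cells and the subsequence of odd-indexed cells each alternate cyclically. A short parity argument then shows this is impossible unless $4 \mid n$: if $n$ is odd, stepping by two visits every cell, forcing a single alternating cycle of odd length; if $n \equiv 2 \pmod{4}$, each of the two interleaved subsequences is a cycle of odd length $n/2$. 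When $4 \mid n$, both subsequences are even cycles and the fixed point is determined by the pair $(x_0,x_1) \in \B^2$; writing out the four cases yields precisely $(0011)^{\frac n4}$, $(0110)^{\frac n4}$, $(1001)^{\frac n4}$ and $(1100)^{\frac n4}$. By the result of~\cite{Robert1986} recalled in the introduction (block-sequential, hence sequential, update modes share the parallel fixed points), these four configurations are also the only fixed points under any sequential update mode, and only when $4 \mid n$.

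For the second half I would exploit that Rule $105$ is reversible cell by cell. Let $T_i \colon \Bn \to \Bn$ be the substep updating only cell $i$, namely $T_i(x)_j = x_j$ for $j \neq i$ and $T_i(x)_i = 1 \oplus x_{i-1} \oplus x_i \oplus x_{i+1}$. Because cell $i$ enters its own local rule linearly, $T_i \circ T_i = \mathrm{id}$ (applying $T_i$ twice returns $1 \oplus x_{i-1} \oplus (1 \oplus x_{i-1} \oplus x_i \oplus x_{i+1}) \oplus x_{i+1} = x_i$, and all other coordinates are untouched), so each $T_i$ is an involution and in particular a bijection of $\Bn$. Hence for every sequential update mode $\mu = (i_0,\dots,i_{n-1})$ the map $f_{\mu,105} = T_{i_{n-1}} \circ \cdots \circ T_{i_0}$ is a composition of bijections, so it is a bijection of the finite set $\Bn$ and therefore has empty transient phase: every orbit is a cycle. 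Consequently, if $p$ is one of the four fixed points and $f_{\mu,105}^{t}(x) = p$ for some $t \ge 1$, then $f_{\mu,105}^{t}(x) = f_{\mu,105}^{t}(p)$ and injectivity forces $x = p$; the basin of each fixed point is $\{p\}$, which is exactly what ``isolated'' means here.

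The only step demanding genuine care is the bookkeeping in the first half: getting the divisibility condition right and checking that the four reconstructed configurations are literally the four words in the statement. Everything after the remark that $f_{\mu,105}$ is a bijection is immediate, and this argument handles the isolation uniformly, bypassing the isle-by-isle analysis used for the neighbouring rules of this section. It is worth noting in passing that reversibility here is special to rules in which the centre cell occurs linearly in the local function (Rule $105$, and likewise Rule $150$), so this shortcut does not replace the combinatorial arguments needed for the other rules treated in Theorems~\ref{thm:isolated0}--\ref{thm:isolated0_and_1}.
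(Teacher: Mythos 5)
Your proof is correct, but it takes a genuinely different route from the paper's. The paper establishes isolation by a combinatorial analysis of isles: it observes that isolated $1$s cannot grow into pairs (since $f_{105}(001)=f_{105}(100)=0$), that isolated $0$s cannot grow into pairs of $0$s, and that splitting longer isles via $f_{105}(111)=0$ and $f_{105}(000)=1$ creates borders that can never change, concluding that the alternating-pairs pattern must already be present in the initial configuration. Your argument instead exploits the affine form $f_{105}(a,b,c)=1\oplus a\oplus b\oplus c$: each single-cell substep $T_i$ is an involution because the centre cell occurs linearly, so $f_{\mu,105}$ is a bijection of $\Bn$ for every sequential $\mu$, every orbit is purely periodic, and injectivity of $f_{\mu,105}^t$ immediately forces the basin of each fixed point to be the point itself. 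This buys you two things the paper leaves implicit: a clean proof of the divisibility condition (the recurrence $x_{i+1}=1\oplus x_{i-1}$ together with the parity of the two step-by-two cycles), where the paper only checks the forward direction that the four words are fixed, and a uniform, transient-free description of the whole dynamics rather than an argument tailored to the four target words. The trade-off, as you correctly note, is that the method is confined to rules in which the centre cell occurs linearly (Rule $150$, but not, say, Rule $90$), so it cannot substitute for the isle-based arguments used for the neighbouring rules. Your appeal to \cite{Robert1986} to transfer the parallel fixed points to sequential update modes is legitimate and matches how the paper itself uses that reference.
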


\begin{proof}
    It is easy to see that $x=(0011)^{\frac{n}{4}}$, $x=(0110)^{\frac{n}{4}}$, $x=(1001)^{\frac{n}{4}}$ and $x=(1100)^{\frac{n}{4}}$ with $n\mod 4\equiv 0$ are fixed points, since $f_{105}(011)=f_{105}(110)=1$ and $f_{105}(001)=f_{105}(100)=0$ (see Table~\ref{tab:105}). 
    
    Let $\mu$ be a sequential update mode.
    Note that if a configuration has isles of one $1$, said isles cannot contain two $1$s in the next substep, since $f_{105}(100)=f_{105}(001)=0$. Similarly, isolated $0$s cannot become pairs of $0$s, since $f_{105}(011)=f_{105}(110)=1$.

    Contrarily, if we have an initial configuration with isles of $1$s whose length is three or more, by definition these isles cannot decrease in size from the borders, instead, in order to obtain smaller isles we need to divide them ($f_{105}(000)=1$ and $f_{105}(111)=0$), which will create new borders that cannot be changed. However, $f_{105}(010)=0$ and $f_{105}(101)=1$, meaning that any isolated $1$s and $0$s that appear after one step can disappear after the next one.

    It follows that the only way of having pairs of $1$s followed by pairs of $0$s is for the initial configuration to contain them. Therefore, all four fixed points are isolated.
\end{proof}
\begin{corollary}
There does not exist a covering of sequential update modes for Rules $38$, $46$, $54$, $60$, $62$, $110$, $126$, $134$, $142$, $150$, $156$ and $105$.
\end{corollary}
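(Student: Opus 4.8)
The plan is to read this corollary off directly from Theorems~\ref{thm:isolated0} (Rules $38,46,54,60,62$), \ref{thm:isolated0_2} (Rules $110,126$), \ref{thm:isolated0_and_1} (Rules $134,142,150,156$) and~\ref{thm:105seq} (Rule $105$): between them these four statements exhaust the twelve rules listed, and each asserts that, under \emph{every} sequential update mode, all of the rule's fixed points are isolated. The only conceptual step is to translate ``isolated'' into the language of coverings. For a rule $f$, a ring size $n$ and a sequential update mode $\mu$, the assertion that every fixed point $y$ of $f_{\mu}$ is isolated means precisely that the basin of attraction of $y$ is $\{y\}$; equivalently, the only configurations of $\Bn$ from which $f_{\mu}$ eventually reaches a fixed point are the fixed points themselves. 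Since the fixed points of $f_{\mu}$ coincide with the parallel fixed points (as the cited theorems recall, following~\cite{Robert1986}) and hence do not depend on $\mu$, this property holds simultaneously for all sequential $\mu$.

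Granting this, I would argue by contradiction. Suppose $S=\{\mu_0,\dots,\mu_k\}$ were a covering of sequential update modes for one of these rules on a ring of size $n$. By the definition of covering, every $x\in\Bn$ reaches a fixed point under some $\mu\in S$; but by the previous paragraph that fixed point must equal $x$, so every $x\in\Bn$ would be a fixed point of $f$. This is impossible as soon as $f$ admits at least one configuration that is not a fixed point, which is the case for every ring size relevant to the paper: each of the four theorems lists at most four fixed points whereas $\Bn$ has $2^n$ configurations, and for the parities and divisibility conditions \emph{excluded} by those theorems the rule has no fixed point at all, in which case the non-existence of a covering is immediate. Hence no such $S$ exists.

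I do not foresee a genuine difficulty here; all the mathematics lives in the four theorems already proved, and the rest is bookkeeping. The points that deserve a careful word are (i) that in each cited theorem ``isolated'' is indeed used in the basin-of-attraction sense above---so that a configuration which is not a fixed point reaches \emph{no} fixed point, not merely that a handful of transitions are blocked---and (ii) the degenerate exception for the very smallest rings (\eg~$n=2$ for Rules $134,142,150,156$, where every configuration is a fixed point and a trivial covering exists); restricting to rings large enough to admit a configuration that is not a fixed point, as the rest of the paper implicitly does, removes this caveat. I would therefore end the proof with a one-line correspondence matching each of the twelve rules to the theorem that handles it.
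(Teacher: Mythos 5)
Your proposal is correct and follows exactly the paper's route: the paper's own proof is the one-liner ``It is direct from Theorems~\ref{thm:isolated0}, \ref{thm:isolated0_2}, \ref{thm:isolated0_and_1} and~\ref{thm:105seq},'' and what you have done is simply spell out the implication (isolated fixed points $\Rightarrow$ no non-fixed configuration ever reaches a fixed point $\Rightarrow$ no covering) that the authors leave implicit. Your extra remarks on degenerate ring sizes and on sizes where no fixed point exists are sensible bookkeeping but do not change the argument.
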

\begin{proof}
It is direct from Theorems~\ref{thm:isolated0},~\ref{thm:isolated0_2},~\ref{thm:isolated0_and_1} and~\ref{thm:105seq}.
\end{proof}

So far, we have proven that there does not exist a covering by proving that the fixed points (when they exist) are isolated. In what follows we will instead prove that there are certain configurations that are unable to reach fixed points.

\begin{table}[t]
\begin{center}
\begin{tabular}{|c|c|c|c|c|c|c|c|c|}
    \hline
         &111&110&101&100&011&010&001&000  \\\hline
         73&0&1&0&0&1&0&0&1\\\hline 
    \end{tabular}    
    \caption{Definition of Rule $73$.}
    \label{tab:73}
\end{center}
\end{table}
\begin{thm}\label{thm:73}
    For Rule 73 under sequential update modes, a configuration with isolated $1$s can never reach a fixed point.
\end{thm}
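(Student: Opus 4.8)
The plan is to exploit a very restrictive feature of Rule $73$ visible in Table~\ref{tab:73}: a cell in state $0$ is turned to $1$ by $f_{73}$ only in the neighbourhood $000$, since $f_{73}(001)=f_{73}(100)=f_{73}(101)=0$. Hence every $1$ that appears during an evolution is, at the instant it is created, an \emph{isolated} $1$; dually, the two $0$s flanking an isolated $1$ cannot turn into $1$s while that $1$ persists. I would first make these two facts precise as lemmas about a single substep (the update of one cell): (a) if cell $i$ goes from $0$ to $1$ in a substep, then $x_{i-1}=x_{i+1}=0$ at that substep; and (b) if $x_i=1$ with $x_{i-1}=x_{i+1}=0$, then while cell $i$ is not updated it stays $1$ and its two neighbours stay $0$ (a neighbour could become $1$ only through neighbourhood $000$ by (a), which is impossible while $x_i=1$), and when cell $i$ is updated it becomes $0$ because $f_{73}(010)=0$. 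In short, a freshly created $1$ is isolated, and an isolated $1$ stays isolated until it is itself updated and thereby deleted.

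Next I would record the fixed points. A configuration fixed by $f_{73}$ cannot contain any of the neighbourhoods $000$, $010$, $101$, $111$, so every maximal run of $0$s and of $1$s has length exactly $2$; thus the fixed points of Rule $73$ are precisely the cyclic shifts of $(0011)^{n/4}$ when $4\mid n$, and there are none otherwise. Two remarks are used below: in a fixed point every $1$ sits next to exactly one other $1$ (the $1$s come in adjacent pairs), and the positions of these $1$s are spaced so that every cell of the ring is either one of them or adjacent to one of them.

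The core of the proof is a monotone set. For a configuration $x$, let $S(x)$ be the set of indices $i$ such that $x_i=1$ and at least one neighbour of $i$ is also in state $1$. Examining the four possible effects of one substep — flip $0\to1$, flip $1\to0$, a $0$ left unchanged, a $1$ left unchanged — and invoking (a) and (b), one checks that a substep never inserts a new index into $S$, and that an index that has left $S$ can never re-enter it; hence the index set $S(x^{t})$ is non-increasing along every asynchronous orbit, in particular along every sequential one. Now assume, for contradiction, that some configuration $x$ containing an isolated $1$ reaches a fixed point $y$ under a sequential update mode. Since in $y$ all $1$s are paired, $S(y)$ is the full set of $1$-cells of $y$; by monotonicity $S(y)\subseteq S(x)$, so every $1$-cell of $y$ is a $1$-cell of $x$. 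But then, using that every cell of the ring is a $1$-cell of $y$ or adjacent to one and that the $1$-cells of $y$ come in adjacent pairs, one deduces that every $1$-cell of $x$ has a $1$-neighbour in $x$, contradicting the presence of an isolated $1$ in $x$. Therefore no fixed point is reachable from $x$.

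The step I expect to require the most care is the monotonicity of $S$, and in particular the claim that an index expelled from $S$ — say when a run $111$ is broken into $1\,0\,1$, or a longer run is fragmented — can never be recreated; this is exactly where lemma (a) is reused, since a $1$ reappearing in such a gap would again be born isolated. The rest is routine: the fixed-point description is immediate from the rule table, and the fact that the paired $1$s of $(0011)^{n/4}$ ``cover'' the ring (each cell being one of them or a neighbour of one) is a one-line spacing count.
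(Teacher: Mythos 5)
Your argument takes a genuinely different and more rigorous route than the paper's. The paper argues informally that the borders of isles are frozen (since $f_{73}(011)=f_{73}(110)=1$ and $f_{73}(001)=f_{73}(100)=0$) so that pairs of $1$s can never be created from an isolated $1$, whereas you package the same mechanism into a monotone invariant: the set $S(x)$ of non-isolated $1$-cells can only shrink along any asynchronous orbit, because a $0$ can only flip in the neighbourhood $000$ and is therefore born isolated. That invariant is correct (your substep case analysis is exactly right), and it makes precise a step that the paper's proof leaves somewhat loose.

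There is, however, one concrete error in your write-up: your characterization of the fixed points is wrong. You claim that every maximal run of $0$s in a fixed point has length exactly $2$, so that the fixed points are precisely the cyclic shifts of $(0011)^{n/4}$. But $f_{73}(101)=0$ is \emph{passive}, so isolated $0$s are perfectly allowed in fixed points; for instance $(110)^{n/3}$ is a fixed point whenever $3\mid n$, and more generally any cyclic concatenation of the blocks $110$ and $1100$ is fixed. The correct constraints are only: no $000$ (runs of $0$s have length at most $2$), and no $010$ or $111$ (runs of $1$s have length exactly $2$). Fortunately this error does not sink your proof, because the only two facts you use downstream are (i) every $1$-cell of a fixed point has a $1$-neighbour, which follows from the absence of $010$ and $111$, and (ii) every $0$-cell of a fixed point is adjacent to a $1$-cell, which follows directly from the absence of $000$ --- you should derive (ii) this way rather than from the (false) $(0011)^{n/4}$ form and the ``spacing count.'' With that repair the final contradiction goes through cleanly: the isolated $1$ at position $i$ satisfies $i\notin S(x)$, hence $y_i=0$ in any reached fixed point $y$; then some neighbour $j$ of $i$ has $y_j=1$, hence $j\in S(y)\subseteq S(x)$, so $x_j=1$, contradicting the isolation of $i$ in $x$.
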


\begin{proof}
Let $\mu$ be a sequential update mode.
    Note that configurations with three or more consecutive $1$s or isolated $1$s cannot be fixed points because $f_{73}(010)=f_{73}(111)=0$, meaning that a fixed point can only have $1$s in pairs. Note that this does not mean that any configuration with two consecutive $1$s surrounded by zeros is a fixed point.
    
    Similarly to Theorem~\ref{thm:105seq}, we have that $f_{73}(011)=f_{73}(110)=1$ and $f_{73}(001)=f_{73}(100)=0$, which we know results in that isles of $0$s and of $1$s cannot increase (nor decrease) in size, which includes isolated $1$s and $0$s.

    Therefore, configurations with isolated $1$s cannot contain in their orbit configurations with isles of $1$s of length two, which are the only ones that could be fixed points.
\end{proof}

\begin{table}[t]
\begin{center}
\begin{tabular}{|c|c|c|c|c|c|c|c|c|}
    \hline
         &111&110&101&100&011&010&001&000  \\\hline
         6 &0&0&0&0&0&1&1&0\\\hline 
         14&0&0&0&0&1&1&1&0\\\hline 
         22&0&0&0&1&0&1&1&0\\\hline 
    \end{tabular}    
    \caption{Definition of Rule $6$, $14$ and $22$.}
    \label{tab:6_14_22}
\end{center}
\end{table}

{As mentioned in Subsection~\ref{sssec:even}, in~\cite{Sethi2018size} it was shown that Rules $6$, $14$ and $22$ can always reach fixed points under asynchronous updating when the size of the ring is even. However, the following result does not contradict the one stated in said paper, since ours is specifically about sequential update modes, whose restrictions with respect to more general asynchronous update modes prevent certain configurations from reaching fixed points.}

\begin{thm}\label{thm:6_14_22}
    For Rules $6$, $14$ and $22$ under sequential update modes, a configuration with isolated $1$s can never reach a fixed point.
\end{thm}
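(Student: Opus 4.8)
The plan is to follow the template of the proofs of Theorems~\ref{thm:isolated0} and~\ref{thm:rule90}: first pin down the fixed points, then show that a configuration carrying an isolated $1$ reaches none of them under any sequential update mode.

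First I would read off the fixed points from Table~\ref{tab:6_14_22}. For each of Rules $6$, $14$, $22$ one has $f(010)=1$, while among the three remaining neighbourhoods with centre $1$ every one that maps to $0$ contains at least two $1$s (for Rule $6$: $011,110,111$; for Rules $14$ and $22$: those whose left neighbour is $1$); hence no $1$-block of length $\ge 2$ survives intact in a fixed point. Similarly, since $f(001)=1$ while $f(101)=0$ for all three rules, no $0$-block of length $\ge 2$ survives either, except in $0^n$ itself. A routine check as in Lemma~\ref{lem:7} then gives that the only fixed points are $0^n$, together with the checkerboards $(01)^{n/2}$ and $(10)^{n/2}$ when $n$ is even.

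Next, fix a configuration $x$ containing an isolated $1$ and an arbitrary sequential mode $\mu$. The case of $0^n$ is easy and does not use sequentiality: by the observation above a $1$-cell can be turned off only when it currently has a $1$-neighbour, and a lone surviving $1$ has neighbourhood $010\mapsto 1$; hence the number of $1$s never drops to $0$, so $0^n$ is never reached. This already disposes of every odd $n$, for which $0^n$ is the only fixed point.

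The remaining, and genuinely harder, case is $n$ even, where the two checkerboards must also be excluded, and here the one-update-per-cell-per-step constraint is essential. I would argue in the spirit of Theorem~\ref{thm:isolated0}: an isolated $1$ at cell $\ell$ can survive a step only if $\ell$ is updated before a suitable neighbour (otherwise $f(011)=f(110)=0$, resp. $f(100)=1$ for Rule $22$, either destroys it or spawns a $1$ beside it), and because $f(001)=1$ (and $f(100)=1$ for Rule $22$) the cell just before, resp. beside, it is forced to turn $1$ during the step — so across a whole step the $1$-population can only grow or merely shift by one cell, and can never settle into the rigid period-$2$ phase of a checkerboard. Turning this into a monovariant is where I expect the bulk of the work to lie: natural candidates are the number of $0$-gaps of length $\ge 2$, or the discrepancy $|E(x)-O(x)|$ between $1$s on even and on odd cells (which equals $n/2$ exactly on the checkerboards), and one must verify that a single sequential sweep starting from a configuration that still has an isolated $1$ cannot drive this quantity to its checkerboard value. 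The main obstacle is precisely that — unlike Rule $73$ in Theorem~\ref{thm:73} — the relation $f(001)=1$ lets isles of $1$s grow, so there is no forward-invariant family of ``sparse'' configurations and the estimate has to come from careful bookkeeping of the update order rather than a one-step structural remark; a case split on the parity of $\ell$, and a slightly different accounting for Rule $22$ (which spreads $1$s in both directions), will be needed.
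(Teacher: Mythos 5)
Your overall route (first enumerate the fixed points, then exclude each one) is different from the paper's, which never lists the fixed points at all: the paper argues directly that an isolated $1$ cannot vanish (because $f(010)=1$) and, since $f(001)=1$, grows into an isle that spreads leftward until it is stopped at a cell $\ell$ with $\mu_{\ell-1}<\mu_{\ell}$; over successive steps the isle keeps travelling from one such ``descent'' of $\mu$ to the next, and since $f(101)=0$ prevents colliding isles from merging or dying, the orbit falls into a cycle whose length is governed by $|L|$ with $L=\{\ell\mid \mu_{\ell-1}<\mu_{\ell}\}$. That excludes convergence to \emph{any} fixed point in one stroke. Your determination of the fixed points ($0^n$ plus the two checkerboards when $n$ is even) is correct, and your argument that $0^n$ is unreachable --- the last surviving $1$ would have neighbourhood $010$, which maps to $1$ --- is sound and settles every odd $n$.

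The gap is the even case. You never actually exclude $(01)^{n/2}$ and $(10)^{n/2}$: you propose two candidate monovariants (the number of $0$-gaps of length at least $2$, and the even/odd occupation discrepancy) but verify neither, and you state yourself that ``the bulk of the work'' remains there. This is not a routine verification: precisely because $f(001)=1$ (and $f(100)=1$ for Rule $22$) lets $1$s spread during a sweep, neither quantity is obviously monotone over a sequential step, and a configuration can acquire long alternating stretches locally, so a conserved-quantity argument needs real care. What is missing is an argument of the kind the paper gives: the surviving $1$s organise into isles that are \emph{forced to keep moving} (each step an isle halts at the next cell $\ell$ with $\mu_{\ell-1}<\mu_{\ell}$, and it can neither die nor stabilise there), hence the configuration changes at every step and the orbit is eventually periodic with period greater than $1$; this disposes of the checkerboards and of $0^n$ simultaneously. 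As written, your proposal proves the theorem only for odd $n$.
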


\begin{proof}
    Let $x=0^{n-1}1$ be a configuration that only contains one cell whose state is $1$.
    
    Let $\mu$ be a sequential update mode.
    Since $f(010)=1$, we know that the $1$s cannot disappear. And given that $f(001)=1$, we can see that the number of $1$s will increase to the left until we reach a cell $\ell_1$ such that the cell to its left updates before cell $\ell_1$, which we will notate by $\mu_{\ell_1-1}<\mu_{\ell_1}$. 
    
    We know that each $\ell_1$ will be the first cell that updates before the one to its left, otherwise the isle would have stopped increasing on the previous cell with that characteristic.

    On the next step, the isles move to the next cell such that $\mu_{\ell_2-1}<\mu_{\ell_2}$. This means that we have an isle of $1$s moving from right to left, going from one $\ell_i$ to the next, resulting in cycles of length $|L|$, with $L=\{\ell \in \{0,\dots,n-1\}\mid \mu_{\ell-1}<\mu_{\ell}\}$.

    Note that since $f(101)=0$, if we have more than {a single} isolated $1$ at the start, the resulting isles cannot merge if they collide. This results in that the isle of $1$s to the right (the one that would advance into the position currently occupied) stops increasing before reaching an $\ell$ cell. But since the isles cannot disappear, this results in cycles of length $\mathcal{O}(|L|)$.

    Thus, configurations that contain isolated $1$s cannot reach a fixed point.
\end{proof}

\begin{table}[t]
\begin{center}
\begin{tabular}{|c|c|c|c|c|c|c|c|c|}
    \hline
         &111&110&101&100&011&010&001&000  \\\hline
         28&0&0&0&1&1&1&0&0\\\hline 
         29&0&0&0&1&1&1&0&1\\\hline 
    \end{tabular}    
    \caption{Definition of rules $28$ and $29$.}
    \label{tab:rules28_29}
\end{center}
\end{table}

\begin{thm}\label{thm:28_29}
    For Rules $28$ and $29$ under sequential update modes, a configuration that contains the word $01001$ can never reach a fixed point.
\end{thm}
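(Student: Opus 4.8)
The plan is to isolate a small window of cells whose values are frozen by the dynamics no matter which sequential update mode is used, and then to observe that this frozen window already violates the defining equation of a fixed point. Write the occurrence of $01001$ at consecutive positions $i-2,i-1,i,i+1,i+2$, so that initially $x_{i-2}=0$, $x_{i-1}=1$, $x_i=0$, $x_{i+1}=0$, $x_{i+2}=1$. The cell $i$ is allowed to wobble; the four cells $i-2,i-1,i+1,i+2$ form the ``frame'' I intend to freeze.

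The core step is the following invariance claim, proved by induction on the number of substeps already performed: at every substep one has $x_{i-2}=0$, $x_{i-1}=1$, $x_{i+1}=0$ and $x_{i+2}=1$. The base case is the hypothesis. For the inductive step, a substep updates a single cell $c$; if $c\notin\{i-2,i-1,i+1,i+2\}$ the frame is untouched, and otherwise I check the four cases directly. If $c=i-1$, its neighbourhood is $(x_{i-2},x_{i-1},x_i)=(0,1,\ast)$ and $f(010)=f(011)=1$ for both rules, so $x_{i-1}$ stays $1$; if $c=i-2$, its neighbourhood is $(\ast,0,1)$ and $f(001)=f(101)=0$, so $x_{i-2}$ stays $0$; symmetrically, $c=i+2$ has neighbourhood $(0,1,\ast)$ and stays $1$, while $c=i+1$ has neighbourhood $(\ast,0,1)$ and stays $0$. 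The two halves of the frame are self-sustaining: keeping $x_{i-1}=1$ only needs $x_{i-2}=0$, which only needs $x_{i-1}=1$, and likewise for the pair $(x_{i+1},x_{i+2})$. Crucially, none of these verifications uses $f(000)$, the only neighbourhood on which Rules $28$ and $29$ differ, so one argument handles both.

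To conclude, suppose for contradiction that the orbit of $x$ under $f_\mu$ reaches a fixed point $y=f_\mu^t(x)$. By the invariance claim $y$ still carries the frame, so $y_{i-1}=1$ and $y_{i+1}=0$. Since sequential and parallel update modes share the same fixed points, $y$ must satisfy $f(y_{i-1},y_i,y_{i+1})=y_i$, i.e. $f(1,y_i,0)=y_i$. But $f(1,0,0)=f(100)=1\neq 0$ and $f(1,1,0)=f(110)=0\neq 1$ for both rules, so no value of $y_i$ is consistent. This contradiction shows that no fixed point is reachable from $x$.

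I expect the only delicate point to be recognizing that the obstruction here is genuinely local, rather than a conserved ``isle-size''-type quantity as in the earlier theorems: one must make sure the frame survives an arbitrary substep, including substeps that modify the cells just outside the window (positions $i-3$ and $i+3$). This is precisely why the frame cells are chosen to be $0$s guarded on their inner side by a $1$ — their updates then ignore the outer neighbour entirely. Once the frame is pinned down, the rest is a one-line check against the fixed-point equation, valid uniformly for Rules $28$ and $29$.
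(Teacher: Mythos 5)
Your proof is correct and follows essentially the same route as the paper: your ``frame'' invariance is exactly the paper's observation that $01$ is a wall for Rules $28$ and $29$, so the occurrence of $01001$ traps the middle $0$ between two walls, and your check that $f(100)=1$ and $f(110)=0$ is the paper's period-$2$ oscillation of that cell, restated through the fixed-point equation. The only cosmetic difference is that you unpack the wall property into an explicit induction over substeps rather than invoking the paper's wall definition directly.
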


\begin{proof}
    Note that, by definition (see Table~\ref{tab:rules28_29}), the word $w = 01$ is a wall for both rules, meaning that their value does not change regardless of the update mode and whatever the state of the configuration is around it.
	
	Let $\mu$ be a sequential update mode.
    Now, if we have a word $01001$, what we have is a single $0$ surrounded by walls. Since $f_{28}(100)=f_{29}(100)=1$ and $f_{28}(110)=f_{29}(110)=0$ this creates a cycle within the walls of length 2. Thus, any configuration that contains $01001$ is unable to reach a fixed point.
\end{proof}

\begin{table}[t]
\begin{center}
\begin{tabular}{|c|c|c|c|c|c|c|c|c|}
    \hline
         &111&110&101&100&011&010&001&000  \\\hline
         108&0&1&1&0&1&1&0&0\\\hline 
    \end{tabular}    
    \caption{Definition of Rule $108$.}
    \label{tab:108}
\end{center}
\end{table}
\begin{thm}\label{thm:108}
    For Rule $108$ under sequential update modes, a configuration can never reach a fixed point if it contains the word $0011100$.
\end{thm}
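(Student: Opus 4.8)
The plan is to exploit the wall structure of the word $0011100$ for Rule $108$. First I would note, reading off Table~\ref{tab:108}, that $00$ is a wall for Rule $108$: since $f_{108}(000)=f_{108}(100)=0$ and $f_{108}(001)=0$, we have $f_{108}(a00)=f_{108}(00b)=0$ for all $a,b\in\B$. Hence, if a configuration contains $0011100$ at cells $i,i+1,\dots,i+6$, the two blocks $x_{[i,i+1]}=00$ and $x_{[i+5,i+6]}=00$ keep the value $00$ along the whole orbit, whatever the sequential update mode: each time one of these four cells is updated its neighborhood has the form $a00$ or $00b$, so it stays at $0$ (this is the same wall-persistence argument already used for Rules $28$ and $29$ in Theorem~\ref{thm:28_29}). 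Consequently the three cells $x_{i+2},x_{i+3},x_{i+4}$, initially equal to $111$, form a subsystem whose left and right boundary values are permanently $0$.

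Next, write $L=x_{i+2}$, $M=x_{i+3}$, $R=x_{i+4}$. I would check directly on the rule table that the two outer cells of this subsystem are frozen. The neighborhood of $L$ is always $0LM$, and since $f_{108}(000)=f_{108}(001)=0$ and $f_{108}(010)=f_{108}(011)=1$, updating $L$ always returns $L$; symmetrically the neighborhood of $R$ is always $MR0$, and since $f_{108}(000)=f_{108}(100)=0$ and $f_{108}(010)=f_{108}(110)=1$, updating $R$ always returns $R$. As $L=R=1$ at the start, they remain equal to $1$ forever. Therefore the neighborhood of $M$ is always $1M1$, and because $f_{108}(101)=1$ while $f_{108}(111)=0$, every update of $M$ flips its value.

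Finally, under any sequential update mode $\mu$ the cell $M=x_{i+3}$ is updated exactly once per step, so its value is negated at each step: $x^{t+1}_{i+3}=1-x^{t}_{i+3}$ for all $t$. In particular $x^{t}\neq f_{\mu,108}(x^{t})=x^{t+1}$, so no configuration of the orbit is a fixed point, which gives the result. I do not expect any real difficulty here; the only point needing a little care is to justify cleanly that the two walls decouple the inner three cells from the rest of the ring, \ie~that throughout the orbit the neighborhoods of $L$, $M$, $R$ are genuinely $0LM$, $LMR$, $MR0$, and this is immediate from wall persistence exactly as in Theorem~\ref{thm:28_29}.
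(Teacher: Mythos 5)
Your proof is correct and follows essentially the same approach as the paper: freeze the boundary of the word $0011100$ via wall persistence and observe that the central cell, always seeing $1M1$, flips at every update and hence oscillates with period two under any sequential update mode. The only cosmetic difference is that the paper packages the frozen cells as the size-three walls $001$ and $100$, whereas you use the size-two wall $00$ and then separately verify that the two cells adjacent to the walls are passive; the content is identical.
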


\begin{proof}
    Note that $100$ and $001$ are walls. Indeed, $f_{108}(000)=f_{108}(001)=f_{108}(100)=0$ and $f_{108}(011)=f_{108}(110)=f_{108}(010)=1$. Then, if we consider a configuration that contains the word $(001)1(100)$, what we have is a cell whose state is $1$ surrounded by two walls.

	Let $\mu$ be a sequential update mode.
    By definition, the cells where the walls are located cannot change their state and so, after one step that sequence will change to $(001)0(100)$ and after two it will return to its initial state. Therefore, any configuration containing $(001)1(100)$ and/or $(001)0(100)$ can never reach a fixed point.
\end{proof}

\begin{corollary}
There does not exist a covering of sequential update modes for Rules 73, 6, 14, 22, 28, 29 and 108.
\end{corollary}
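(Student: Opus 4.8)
The plan is to obtain this corollary as an immediate logical consequence of Theorems~\ref{thm:73},~\ref{thm:6_14_22},~\ref{thm:28_29} and~\ref{thm:108}, using only the definition of a covering. Recall that a covering for an ECA $f$ on a ring of size $n$ is a \emph{set} $S=\{\mu_0,\dots,\mu_k\}$ of sequential update modes such that \emph{every} configuration $x\in\Bn$ reaches a fixed point under at least one $\mu\in S$. The contrapositive is what I would use: if there exists a single configuration $x^\star$ that fails to reach a fixed point under \emph{every} sequential update mode, then no set of sequential update modes can handle $x^\star$, and hence no covering for $f$ exists.

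The key steps are then, for each of the seven rules, to produce such a trapped configuration $x^\star$ from the appropriate theorem and then quote the contrapositive. For Rules $6$, $14$, $22$ and $73$ I would take any configuration containing an isolated $1$, for instance $x^\star=0^{n-1}1$, which by Theorems~\ref{thm:6_14_22} and~\ref{thm:73} never reaches a fixed point under any sequential update mode. For Rules $28$ and $29$ I would take any configuration containing the factor $01001$, which by Theorem~\ref{thm:28_29} never reaches a fixed point. For Rule $108$ I would take any configuration containing the factor $0011100$, which by Theorem~\ref{thm:108} never reaches a fixed point. In each case the existence of a configuration that is not covered by any set of sequential update modes rules out the existence of a covering.

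Since each individual step is merely an instantiation of a theorem already proved above, there is no serious obstacle. The only point requiring a little care is the range of admissible ring sizes: the factor $01001$ occurs inside a configuration only when $n\ge 5$, the factor $0011100$ only when $n\ge 7$, and the isolated-$1$ arguments should be read in the regime where the wall and cycle mechanisms used in the proofs of Theorems~\ref{thm:6_14_22} and~\ref{thm:73} genuinely apply, so that the wrap-around of the torus does not collapse the witnessing pattern. For all such $n$ these configurations exist, and the corollary follows.
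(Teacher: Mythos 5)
Your proposal is correct and takes essentially the same route as the paper, which simply derives the corollary directly from Theorems~\ref{thm:73},~\ref{thm:6_14_22},~\ref{thm:28_29} and~\ref{thm:108}; your version only spells out the witness configurations and the contrapositive of the covering definition more explicitly.
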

\begin{proof}
It is a direct result from Theorems~\ref{thm:73},~\ref{thm:6_14_22},~\ref{thm:28_29} and~\ref{thm:108}.
\end{proof}
\subsubsection{Rule $37$}

\begin{table}[t]
\begin{center}
\begin{tabular}{|c|c|c|c|c|c|c|c|c|}
    \hline
         &111&110&101&100&011&010&001&000  \\\hline
         37&0&0&1&0&0&1&0&1\\\hline 
    \end{tabular}    
    \caption{Definition of Rule $37$.}
    \label{tab:37}
\end{center}
\end{table}
Rule $37$ is a special case. Like Rule $45$, it only has fixed points if the size of the ring is a multiple of $3$. Furthermore it has the same fixed point as Rule $45$: $(001)^{\frac{n}{3}}$, $(010)^{\frac{n}{3}}$ and $(100)^{\frac{n}{3}}$, which can easily be checked in Table~\ref{tab:37}.

However, unlike Rule $45$, the computer simulations we ran for Rule $37$ say that there is no $T$-universal sequential update mode. Moreover, simulations performed over rings of size $6$ and $9$, suggest that there does not even exist a $T$-covering sequential update mode. Indeed, for $n=6$ the configuration $x=001000$ was unable to reach a fixed point regardless of the sequential update mode, and for $n=9$ the same could be said for, among others, the configuration $x=000010001$. Thus we have the following conjecture:
\begin{conjecture}\label{cnjtr:rule37}
There does not exist a covering of sequential update modes for Rule $37$.
\end{conjecture}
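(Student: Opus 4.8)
The plan is to split on $n$ modulo $3$. If $n\not\equiv0\pmod3$ then, by the argument behind Lemma~\ref{lem:45}, Rule $37$ has no fixed point at all, so for every $x\in\Bn$ and every sequential update mode $\mu$ the orbit $f_{\mu}^t(x)$ never reaches a fixed point and, a fortiori, no family of sequential update modes is a covering; this case is immediate. Assume now $n\equiv0\pmod3$, so by Lemma~\ref{lem:45} the only fixed points are $(001)^{n/3}$, $(010)^{n/3}$ and $(100)^{n/3}$, i.e.\ exactly the configurations whose $1$s are isolated and in which every block of $0$s has length $2$. I would then isolate an explicit ``badly balanced'' configuration -- the kind the exhaustive search flags, such as $0^{4}10^{3}1$ for $n=9$ -- and prove that no sequential update mode brings it to a fixed point.

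\textbf{Key steps.} The engine would be the following local facts about single-cell updates under Rule $37$, all read off Table~\ref{tab:37}: an isolated $1$ is passive ($010\to1$); a $1$ that has a neighbouring $1$ dies ($011,110,111\to0$), so the only way to remove an isolated $1$ at cell $i$ is to first install a $1$ at cell $i\pm2$ (via $101\to1$ or $000\to1$) and then let the resulting block erode cell $i$; a $0$ becomes a $1$ only in a neighbourhood $000$ (deep inside a block of $0$s) or $101$ (an isolated $0$ flanked by two $1$s); and a block of exactly two $0$s between isolated $1$s is inert ($100\to0$ and $001\to0$ on its cells), so it can be changed only after one of its bounding $1$s has been destroyed, which again costs a distance-$2$ helper. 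Hence, turning the badly balanced configuration into the uniform gap pattern of a fixed point forces one to create auxiliary $1$s that must later be erased; but inside a single step each cell is updated exactly once, so such auxiliaries cannot be cleaned up within the step, and across steps the $000\to1$ creations that are unavoidable when one updates the interior of a long $0$-block keep re-perturbing the configuration. I would make this rigorous by tracking the cyclic tuple of gap lengths together with the number and location of ``excess'' $1$s, arguing that no sequential step can drive this data to that of a fixed point without having first increased it -- equivalently, that the orbit of the chosen configuration under any permutation is eventually periodic of period $\ge2$ -- with a fallback, if no clean invariant emerges, on an order-of-update obstruction in the spirit of Theorem~\ref{thm:rule90} (the updates needed to shorten a gap and those needed to create the extra isolated $1$ should impose incompatible constraints on the step permutation).

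\textbf{Main obstacle.} The difficulty is precisely that Rule $37$ is far more flexible than the rules treated in Theorems~\ref{thm:isolated0}--\ref{thm:105seq}: it has no wall of size $2$, and it both creates $1$s ($000\to1$, $101\to1$) and destroys them ($011,110,111\to0$), so no count of cells or of isles is conserved and the gap tuple can be genuinely rearranged -- in particular the natural split and merge moves shift the obvious candidate $\sum_j(g_j-2)=(\text{number of }0\text{s})-2\,(\text{number of }1\text{s})$ only by multiples of $3$, which is vacuous once $n\equiv0\pmod3$. A working argument must therefore use the \emph{ordered} gap data together with the auxiliary-$1$ bookkeeping and control it uniformly over all $n!$ permutations; this is exactly the delicacy already visible for Rule $45$ (Conjecture~\ref{cnjtr:rule45} and the discussion around it), where the spectator-cell updates one is forced to perform interfere with the intended local moves and only a periodic, non-sequential mode could be produced -- which is why the present statement is left as a conjecture, verified so far only by exhaustive search on small rings.
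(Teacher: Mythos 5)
The statement you are addressing is a \emph{conjecture} in the paper: the authors offer no proof, only exhaustive simulation on rings of size $6$ and $9$, together with explicit witness configurations ($001000$ for $n=6$ and $000010001$ for $n=9$, the latter being exactly your $0^4 1 0^3 1$) that fail to converge under every sequential update mode. Your proposal does not close this gap either, and you say so yourself. The parts you do carry out are correct: the reduction to $n\equiv 0\pmod 3$ (for other $n$ Rule $37$ has no fixed point, so no covering exists vacuously, by the same argument as Lemma~\ref{lem:45}); the characterization of the fixed points as the rotations of $(100)^{n/3}$; and the local activity/passivity facts (an isolated $1$ is passive, a $1$ dies only next to another $1$, a $0$ flips only in context $000$ or $101$, and a block $00$ between isolated $1$s is locally inert until a bounding $1$ is first destroyed via a distance-$2$ helper). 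These observations correctly explain \emph{why} one expects an obstruction.

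The genuine gap is the step you flag as the ``main obstacle'': you have no invariant, no monotone quantity, and no order-of-update argument that holds uniformly over all $n!$ permutations. Your candidate invariant $n-3\cdot(\#1\text{s})$ is, as you note, identically $0\bmod 3$ in the relevant case and hence useless, and the proposed ``ordered gap data plus auxiliary-$1$ bookkeeping'' is a research programme, not an argument --- unlike Theorem~\ref{thm:rule90}, where the incompatibility of the required update orders is actually derived, here you only assert that such an incompatibility \emph{should} exist. Since Rule $37$ has no size-$2$ wall and both creates and destroys $1$s, none of the paper's finished techniques (isolated fixed points, walls trapping a flipping cell, conserved isle counts) applies, and your sketch does not supply a replacement. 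As written, your text is an accurate account of why the statement is plausible and hard, i.e.\ a justification for leaving it as a conjecture, but it is not a proof and should not be presented as one.
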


\FloatBarrier
\section{Discussion}
\label{sec:discussion}

We have studied the convergence of all 88 {non-equivalent} elementary cellular automata, and have classified them into two separate groups: those that fit Theorem~\ref{thm:DAS} and the ones that fall outside of it.

Out of the 50 ECAs that fulfill one of the criteria given by  Theorem~\ref{thm:DAS} (presented in Table~\ref{tab:inside-theorem}), we have proven that there exists a universal sequential update mode for 13 rules, all of them belonging to class II under Wolfram's classification. 

\begin{table}[t]
\resizebox{\textwidth}{!}{
\begin{tabular}{|c|c|c|c|c|c|}
\hline
\diagbox[width=\dimexpr 24\tabcolsep\relax, height=1cm]{Classification}{Wolfram's class}&I&II&III&IV&Total\\\hline
{\makecell{All update modes\\are universal}}&\makecell{0,8,\\128,136}&\makecell{4,12,36,44,72,\\76,78,132,140,\\164,200,204}&$\emptyset$&$\emptyset$&16\\\hline
{\makecell{All sequential update\\modes are universal}}&\makecell{32,40,\\160,168}&\makecell{5,13,56,77,94,\\152,172,184,232}&$\emptyset$&$\emptyset$&13\\\hline
{\makecell{There is a sequential\\universal update mode}}&$\emptyset$&\makecell{2,10,24,26,34,\\42,58,130,138,\\154,162,170}&$\emptyset$&$\emptyset$&12\\\hline
{\makecell{There is a covering of\\sequential update modes}}&$\emptyset$&\makecell{50,74,104,178}&\makecell{18,122\\146}&$\emptyset$&7\\\hline
{\makecell{There is no covering of\\sequential update modes}}&$\emptyset$&$\emptyset$&90&106&2\\\hline
\end{tabular}}
\caption{Summary of the classification of the rules that fulfill at least one of the conditions stated by Theorem~\ref{thm:DAS}.}
\label{tab:inside-theorem}
\end{table}

Furthermore, we found that 7 of the 50 ECAs have a covering of different sequential update mode in order to guarantee that all configurations reach a fixed point. Contrarily, for Rule 90 we found a sufficient condition under which a configuration cannot reach a fixed point under any sequential update mode. Additionally, while we did prove that there exists an $E$-universal sequential update mode for Rule $106$, we found there does not exist a covering of sequential update modes for it. Note that this does not contradict Theorem~\ref{thm:DAS}, since said theorem refers to asynchronous update modes, which is a more general notion than sequential update modes.

In the case of the 38 ECAs that do not fall under the purview of Theorem~\ref{thm:DAS}, we can easily distinguish between the 25 ECAs that have fixed points under parallel update mode, and 13 ECAs that do not have any. Within the first group we have 2 ECAs for which we were able to prove, as well as Rule $45$, for which we have a conjecture, that there are sequential update modes that lead all configurations to a fixed point by demanding extra conditions over the size of the ring.
Additionally, we found 2 ECAs for which we need more than one sequential update mode to allow all configurations to arrive at a fixed point. 

Moreover, we found 20 ECAs for which there are configurations that can never reach a fixed point under sequential update modes, divided in three categories: 12 ECAs whose fixed points are isolated, 7 ECAs where we characterized the configurations that are unable to reach fixed points, and finally Rule $37$ for which the computer simulations we ran strongly suggest that there are configurations that can never reach a fixed point under sequential update modes.

\subsection*{Future Work}
There are other families of periodic update modes, for example: local clocks~\cite{Rios2021-phd,rios2024c} and block-parallel~\cite{Demongeot2020,Perrot2024a,Perrot2024b}, that could create ``new'' fixed points for rules that do not have them under the traditional update modes, as well as for rules whose only fixed points are isolated. Indeed, more complex update modes could help us control said rules, allowing convergence when there was none.

When we proved that there are configurations that cannot reach a fixed point under sequential update modes for Rule 90, we showed that it is necessary to update some cells two times before their neighbors are updated once. This means we would have to search for an update mode such as the ones we previously mentioned; or we would need to define a periodic update mode such as the one described for Rule 45.

Naturally, it is of interest to prove Conjecture~\ref{cnjtr:rule45}, which would improve our current results by proving that Rule 45 can lead all configurations to a fixed point using only one sequential update mode, given a ring whose size is a multiple of $3$. {Analogous situation refers to} Conjecture~\ref{cnjtr:rule37}, which would settle the question of convergence under sequential update modes for Rule $37$.

Additionally, for rules with more than one universal sequential update mode, it is of interest to find out which of them takes the least amount of time to converge, as well as how does the size of the ring affects the convergence time, {so that we can} find the sequential update mode with the most efficient convergence time. 

Along similar lines, for rules for which we did not find a universal sequential update mode, it begs the question of can we find a minimal covering,~\ie, the smallest set of sequential update modes that is a covering and whether the size of such a covering be dependent on the size of the ring. At the same time, can we develop an algorithm that gives us a minimal covering for any given rule and size of ring?

\paragraph{Acknowledgements}
This work has been partially funded by the HORIZON-MSCA-2022-SE-01 pro\-ject 101131549 ``ACANCOS'' project, the ANR-24-CE48-7504 ``ALARICE'' project, the  STIC AmSud 22-STIC-02 ``CAMA'' project, ANID FONDECYT 1250984 regular and ANID-MILENIO-NCN 2024\_103.

\bibliographystyle{plain}


\end{document}